\pgfplotsset{compat = newest} 
\pgfplotsset{compat = newest}
\newenvironment{proof}[1][Proof]{\textbf{#1.} }{\ \rule{0.5em}{0.5em}}
\pgfplotsset{compat = newest} 
\newcounter{lemmacounter}
\newcounter{assumptioncounter}
\newtheorem{theorem}{Theorem}
\newtheorem{assumption}[assumptioncounter]{Assumption}
\newtheorem{lemma}[lemmacounter]{Lemma}
\newcommand*{\QED}
{\hfill\ensuremath{\blacksquare}}
\begin{document}

\title{\textbf{Marital Sorting, Household Inequality and Selection}\thanks{%
We are grateful to David Card and Stella Hong for useful comments and to
Richard Blundell, Bram De Rock and Costas Meghir for several important
references.}}
\author{Ivan Fern\'andez-Val\thanks{
\ Boston University.} \and Aico van Vuuren\thanks{
\ University of Groningen and Gothenburg University} \and Francis Vella 
\thanks{
\ Georgetown University}}
\maketitle

\begin{abstract}
Using CPS data for 1976 to 2022 we explore how wage inequality has evolved
for married couples with both spouses working full time full year, and its
impact on household income inequality. We also investigate how marriage
sorting patterns have changed over this period. To determine the factors
driving income inequality we estimate a model explaining the joint
distribution of wages which accounts for the spouses' employment decisions.
We find that income inequality has increased for these households and
increased assortative matching of wages has exacerbated the inequality
resulting from individual wage growth. We find that positive sorting
partially reflects the correlation across unobservables influencing both
members' of the marriage wages. We decompose the changes in sorting patterns
over the 47 years comprising our sample into structural, composition and
selection effects and find that the increase in positive sorting primarily
reflects the increased skill premia for both observed and unobserved
characteristics. \thispagestyle{empty}
\end{abstract}

\section{Introduction}

\addtocounter{page}{-1} While most empirical studies evaluate inequality via
an examination of individuals' wages or earnings (see, for example, Katz and
Murphy 1992, Murphy and Welch 1992, Juhn, Murphy, and Pierce 1993, Welch
2000, Autor, Katz, and Kearney 2008, Blau and Kahn 2009, Acemoglu and Autor
2011, Autor, Manning, and Smith 2016, Murphy and Topel 2016, and Fern\'{a}%
ndez-Val et al. 2023a, b), for many individuals their household's income is
more determinative of their economic welfare. As households are composed in
a variety of ways, it is useful for policy purposes to analyze income
inequality for different household types. One type of particular interest is
married couples comprising individuals both working full time full year
(FTFY). Although the proportion of husbands and wives in the total
population aged between 24 and 65 years decreased from 77.7\% to 58.2\% from
1976 to 2022, this group represents an increasingly larger share of married
couples. In 1976 26.2\% of married couples in which both the husband and the
wife were aged between 24 and 65 years had both members working FTFY and by
2022 this number increased to 51.3\%. As the percentage of married males in
FTFY employment increased from 82.8 to 84.1 while that for married females
went from 31.1 to 56.3, the growth of this group reflects the increasing
employment rates of married females.

Examining the patterns and determinants of income inequality across FTFY
married couples is interesting from a number of perspectives. First, studies
on wage and income inequality generally do not distinguish between married
and unmarried individuals. It is useful to examine if inequality among the
married shows similar patterns to those of their unmarried counterparts and
married individuals with non-working spouses. Second, as household income
combines the earnings of both spouses, positive sorting on wages will
exacerbate inequality while negative sorting will mitigate it. Finally, the
large increase in the participation rates of females, with its implications
for selection bias, has been shown to affect females' wages and income
inequality (see, for example, Mulligan and Rubinstein 2008, Blau, Kahn,
Boboshko, and Comey 2021 and Fern\'{a}ndez-Val et al. 2023b). While
selection bias has been largely ignored for males, the existing evidence
indicates it is not economically important. However, as increasing married
female participation rates increase the sample of males in FTFY married
couples, female selection into FTFY employment may have implications for the
wage and earnings distributions of married males with working spouses.

There is a vast literature, employing a variety of methodological
approaches, on the relationship between marriage, sorting and inequality and
it provides mixed evidence on sorting behavior. Kremer (1999) documents that
marital sorting on education in the United States, measured by the
correlation between spouse's education levels, had declined over the period
1940-1990 and via a calibration exercise finds that marital sorting had a
larger effect on intergenerational mobility than inequality. Fern\'{a}ndez
and Rogerson (2001) develop and calibrate a dynamic model of \
intergenerational education acquisition, fertility and marital sorting based
on the PSID and conclude that an increase in sorting is likely to increase
the degree of income inequality. Greenwood, Guner, Kocharkov, and Santos
(2015) examine US census data and document an increase in positive
assortative matching on education and via the comparison of counterfactual
income distributions conclude that the level of income inequality measured
by the Gini coefficient has increased. Chiappori, Salani\'{e} and Weiss
(2017) provide a model in which increased returns to education results in
higher paid couples spending more time with their children and increasing
assortative matching on education. Examining US marriages for individuals
who are married and born between 1943 and 1972, they find that this occurred
for white, but not black, individuals. Eika, Mogstad, and Zafar (2019)
examine data for Denmark, Germany, Norway, the United Kingdom, and the
United States and show that there is a considerable amount of educational
assortative matching although it has changed little since the 1980s.
Moreover, while there was an increase in sorting at the bottom of the
educational distribution, there is a remarkable decrease of assortative
matching at the top of the educational distribution. They conclude that the
increases in the Gini coefficient cannot be explained by changes in
assortative matching.\ However, they also note that assortative matching on
education contributes to the cross-sectional inequality in household income
in each country. This supports the earlier work of Breen and Salazar (2011)
for the United States that finds a very small impact of educational sorting
on changes in income inequality between the late 1970s and early 2000s.
Gihleb and Lang (2018) employ a range of statistical measures of
assortativeness to the CPS data for 1970 to 2010 and the 2010 American
Community Survey to test for increased educational homogany among married
couples in the US and matching conclude that there is no evidence of
increased assortative matching. However, they do not explore sorting on
wages due to the non trivial non participation of wives in market
employment. Chiappori, Costa-Dias and Meghir (2020,2023) provide alternative
measures of assortativeness to illustrate the difficulty in quantifying how
it varies across economies with different marginal distributions of the
variable on which it is being evaluated. They conclude that the degree to
which educational homogany has changed among married couples in the US
depends on the measure employed. The conclusion also varies on where in the
educational distribution it is measured. Chiappori, Costa-Dias, Crossman and
Meghir (2020) employed related measures in evaluating educational
assortativeness in the U.K and conclude that there are no clear patterns. \
However, they also conclude that the changes appear to have only slightly
increased income inequality. Our empirical work adds to this literature as
we provide a more rigorous and detailed analysis of sorting on wages while
accounting for selection although we do so by examining a more homogenous,
albeit large, group of workers. However, as our focus is primarily on the
role of sorting on inequality we focus on this issue rather than adopting
the measures proposed by Gihleb and Lang (2018) and Chiappori and co-authors.

While earnings appear a more appropriate measure than wages for evaluating
the welfare implications of inequality, examining different measures may
lead to substantially different conclusions. While substantial evidence
suggests female wage inequality has risen, Fern\'{a}ndez-Val, Van Vuuren,
Vella and Perrachi, hereafter FVVP (2023a), show that annual earnings
inequality has decreased for females in the United States due to the shifts
in their annual hours of work distribution. This finding is similar in
spirit to Cancian and Reed (1998a,1998b) who find household inequality
decreased due to the large shifts in the female income distribution
resulting from changes in their hours of work. Analyzing changes in annual
income is more challenging for married couples as it requires jointly
modeling annual hours and wages of both spouses. However, one could begin
by restricting attention to those with a relatively homogeneous level of
hours while accounting for the accompanying selection. As FTFY individuals
generally work a similar numbers of hours, we can then examine income
inequality via comparisons of the sum of couples' wages.

We document the changes in the earnings distributions of dual FTFY
households and examine if household inequality has been exacerbated by
sorting behavior. We do so via an examination of how the wage distributions
of husbands and wives and sorting patterns have changed over time. We report
how the probability of a male in the $j^{\mathrm{th}}$ decile of the married
male wage distribution being married to a female in the $k^{\mathrm{th}}$
decile of the married female wage distribution has evolved over our sample
period. We investigate the source of these changes by estimating the
conditional joint distribution of husbands' and wives' wages via the
bivariate distribution regression methodology of Fern\'{a}ndez-Val, Meier, Van Vuuren and Vella
(2023). While this provides insight into how the observed sorting patterns
can be explained by observable characteristics and unobservable factors, it
does not account for the selection arising from the couple's respective
employment decisions. We incorporate these employment decisions by extending
the selection model of Chernozhukov, Fern\'{a}ndez-Val, and Luo (2019)
(hereafter CFL) based on the Heckman (1974, 1979) selection model to
bivariate selection rules. We show point identification of this model under
the same exclusion restrictions as in CFL. This analysis relies on a useful
result for the multivariate standard normal distribution that might be of
independent interest. Lemma \ref{lemma1} in the appendix characterizes the
derivative of this function with respect to each element of the correlation
matrix and shows it is positive. While this result is well known in the
bivariate case (e.g., Sibuya, 1959) we have not found it for the general
multivariate case. Using the model's estimates, we evaluate the role of the
different forces generating the observed marital patterns and their
implications for the couples' earnings distribution.

Our empirical investigation uncovers a number of notable findings. First, we
confirm that wage inequality among couples for which both spouses are
working FTFY has increased for the period 1976-2022. Second, we find
increasing levels of assortative sorting on wages and that these have
contributed to increasing household inequality. Third, there is mixed
evidence regarding the role of selection from work decisions on observed
sorting patterns. Fourth, the primary factors behind the observed positive
sorting patterns are the observed characteristics of the individuals and the
correlation between the unobservables driving the wages of each spouse.
Finally, we find that positive sorting on wages has increased over the 47
years of our sample and this reflects the increasing market value of
observed and unobserved individual characteristics. Consistent with Gihleb
and Lang (2018), Eika, Mogstad, and Zafar (2019) and Chiappori, Costa-Dias
and Meghir (2020,2023) our results establish that the nature of sorting on
observed characteristics, such as education, has not substantially changed.
However, we find that the prices of these characteristics have changed to
push individuals further up, or down, their wage distributions.

We highlight three important features of our modeling approach treated as
exogenous. The first is the individual's decision to marry and the second is
their choice of spouse (see, for example, Chiappori, Costa-Dias and Meghir
2019). While each of these is interesting, it is beyond the scope of this
paper to model these decisions. We also do not address how household income
is allocated across the spouses nor the implications of this allocation for
the work or marriage decisions (see, for example, Lise and Seitz 2011, Lise
and Yamada 2018 and De Rock, Kovaleva and Potoms 2023). While the failure to
address each of these issues represents a shortcoming of our approach, the
focus on explaining the sorting pattern of spouses, and its implication for
inequality, within the context of endogenous employment decisions remains
important.

The following section briefly describes the Current Population Survey data
examined and how the sample is selected. It also presents the time series
trends in earnings and the wage distributions of FTFY households. Section %
\ref{sec:marital_sorting} describes the observed patterns of marital sorting
and Section \ref{sec:econometric_model} provides our econometric model. The
measures of sorting are discussed in Section \ref{sec:sorting}, and
estimation is discussed in Section \ref{sec:estimation}. Our empirical
results are presented in Section \ref{sec:empirical}. Section \ref%
{sec:conclusion} concludes.

\section{Data}

\subsection{Data}

\label{sec:data} 

We employ the Annual Social and Economic Supplement (ASEC) of the Current
Population Survey (CPS), or March CPS, for the 47~survey years from 1976 to
2022 which report annual earnings and hours worked for the previous calendar
year.\footnote{%
The data are taken from the IPUMS-CPS website maintained by the Minnesota
Population Center at the University of Minnesota (Flood, King, Ruggles, and
Warren 2015).} The 1976 survey is the first for which information on weeks
worked and usual hours of work per week last year are available.\footnote{%
We refer to the year of the survey and not the calendar year to which it
refers.} To avoid issues related to retirement and ongoing educational
investment we restrict attention to those aged 24--65 years in the survey
year. This produces an overall sample of 2,054,502 males and 2,228,726
females. The annual sample sizes range from a minimum of 30,767 males and
33,924 females in 1976 to a maximum of 55,039 males and 59,622 females in
2001.

Annual hours worked are defined as the product of weeks worked and usual
weekly hours of work last year. Those reporting zero hours generally respond
not being in the labor force (i.e., they report themselves as doing
housework, unable to work, at school, or retired) in the week of the March
survey. We define hourly wages as the ratio of reported annual labor
earnings in the year before the survey, converted to constant 2021 prices
using the consumer price index for all urban consumers, and annual hours
worked. Hourly wages are unavailable for those not in the labor force. As
annual earnings and hours tend to be poorly measured for the Armed Forces,
self-employed, and the unpaid family workers we exclude these groups and
focus on civilian dependent employees with positive hourly wages and people
out of the labor force last year. This restricted sample comprises 1,783,599
males and 2,097,035 females (respectively 86.8\% and 94.1\% of the original
sample of those aged 24--65). The subsample of civilian dependent employees
with positive hourly wages contains 1,540,948 males and 1,465,165 females.
Married individuals aged between 24 and 65 years make up 65\% of the total
sample. While this has decreased from 77\% in 1976 to 57\% in 2022, they
still represent a substantial fraction of the total sample. The percentage
of married couples with both spouses working full time increased drastically
from 26.2 in 1976 to 51.3 in 2022.

\subsection{Descriptive statistics}

Figure \ref{fig:income} presents the time series of various quantiles of
annual household labor earnings (in 2021 dollars) for dual FTFY households.
Median (Q2) household annual income increased by 29.8\% from 97.9 to 127
thousand dollars from 1976 to 2022. However, growth has been more modest at
lower quantiles. For example, at the first quartile (Q1), household income
increased by 19.0\% from 76.5 to 91.0 thousand. Moreover, household income
at this quantile was virtually constant from 1976 to 2000. There were
increases of 12.5\% and 8.9\% at the first decile (D1) and the 5th
percentile respectively. Increases at higher quantiles have been notably
larger. Income at Q3 grew by 49.1\% from 122.1 to 182 thousand. Increases at
D9 and the 95th percentile were 76.7\% and 97.8\% respectively. These
changes have drastically increased inequality. The Q3/Q1 and D9/D1 ratios
are shown in Figure \ref{fig:income1}. The former increased from 1.60 to
2.00 and the latter from 2.43 to 3.82.

\begin{figure}[!ht]
\caption{Household income of households working FTFY at various quantiles. 
\emph{Note:} Px represents a percentile, \emph{i.e.} P5 means the 5th
percentile.}
\label{fig:income}\centering
\begin{tikzpicture}

\begin{axis}[
legend cell align={left},
legend style={
  fill opacity=0.8,
  draw opacity=1,
  text opacity=1,
  at={(1.05,1)},
  anchor=north west,
  draw=white!80!black
},
tick align=outside,
tick pos=left,
x grid style={white!69.0196078431373!black},
xlabel={Year},
xmin=1976, xmax=2022,
xtick style={color=black},
y grid style={white!69.0196078431373!black},
ylabel={Real dollars},
ymin=30.6906205368042, ymax=344.428641395569,
ytick style={color=black}
]
\addplot [very thick, black]
table {%
1976 51.426060295105
1977 52.370735168457
1978 53.0132331848145
1979 54.0081253051758
1980 51.4872947692871
1981 49.5731927871704
1982 47.6878067016602
1983 47.127375793457
1984 47.9081928253174
1985 47.8488977432251
1986 47.824390411377
1987 49.4308929443359
1988 48.4229278564453
1989 49.4462104797363
1990 48.0780487060547
1991 47.6829261779785
1992 46.5585174560547
1993 48.2520294189453
1994 46.8699226379395
1995 47.1019968032837
1996 46.2081298828125
1997 44.9514396667481
1998 46.0580854415894
1999 46.5300827026367
2000 47.9674758911133
2001 50.3154449462891
2002 49.8287029266357
2003 50.8923606872559
2004 49.6009201049805
2005 50.1951179504395
2006 49.9317054748535
2007 48.8409111022949
2008 52.1619506835938
2009 50.3414611816406
2010 49.9048805236816
2011 49.6910591125488
2012 49.8842445373535
2013 48.4000015258789
2014 48.8292655944824
2015 49.1082893371582
2016 50.9417797088623
2017 50.7804870605469
2018 51.9674797058105
2019 51.7463455200195
2020 53.0081253051758
2021 56.5463371276855
2022 56
};
\addlegendentry{P5}
\addplot [very thick, black, dashed]
table {%
1976 60.4292678833008
1977 61.8926849365234
1978 61.4613838195801
1979 62.3170700073242
1980 60.4529090881348
1981 58.8517036437988
1982 56.6292724609375
1983 56.1300773620605
1984 57.1268272399902
1985 57.3430862426758
1986 57.8926811218262
1987 59.3170700073242
1988 58.3157550811768
1989 58.578296661377
1990 58.5678066253662
1991 57.219510269165
1992 56.2579975128174
1993 56.7443885803223
1994 56.2439079284668
1995 56.656909942627
1996 55.0943031311035
1997 55.2585334777832
1998 56.3034858703613
1999 56.5008125305176
2000 58.5365867614746
2001 60.0263328552246
2002 59.6731719970703
2003 60.2276420593262
2004 59.5975608825684
2005 60.2341499328613
2006 59.6406478881836
2007 59.0959358215332
2008 61.7053672790527
2009 60.4097595214844
2010 60.6449157714844
2011 59.9522609710693
2012 60.2439079284668
2013 59.0243911743164
2014 59.2926788330078
2015 59.525203704834
2016 61.5393627166748
2017 62.0650405883789
2018 62.1840705871582
2019 62.5527069091797
2020 65.7300796508789
2021 69.112190246582
2022 68
};
\addlegendentry{P10}
\addplot [very thick, black, dash pattern=on 1pt off 3pt on 3pt off 3pt]
table {%
1976 76.4568729400635
1977 77.8598079681396
1978 78.2414627075195
1979 78.9349517822266
1980 77.6195087432861
1981 75.6195068359375
1982 74.5121917724609
1983 72.6779308319092
1984 74.7817344665527
1985 75.5886077880859
1986 75.6242084503174
1987 78.1008148193359
1988 78.717077255249
1989 77.8406448364258
1990 76.519495010376
1991 75.6707305908203
1992 75.5674819946289
1993 76.2382049560547
1994 74.9918746948242
1995 74.9333343505859
1996 74.4661712646484
1997 74.2536544799805
1998 75.9512176513672
1999 77.4393539428711
2000 78.8617858886719
2001 80.1902465820312
2002 81.0943069458008
2003 81.3073120117188
2004 80.9349594116211
2005 80.5846748352051
2006 80.4455337524414
2007 80.5853652954102
2008 82.360969543457
2009 81.7696380615234
2010 82.121955871582
2011 81.6517200469971
2012 81.9317092895508
2013 81.4536590576172
2014 81.3821105957031
2015 81.2747955322266
2016 84.5886077880859
2017 84.634147644043
2018 85.6910629272461
2019 86.2439041137695
2020 90.1138153076172
2021 92.6731643676758
2022 91
};
\addlegendentry{P25}
\addplot [very thick, black, mark=*, mark size=1, mark options={solid}, only marks]
table {%
1976 97.8626899719238
1977 99.6115188598633
1978 100.296035766602
1979 100.953659057617
1980 99.2634124755859
1981 96.1321258544922
1982 95.3756103515625
1983 95.4211349487305
1984 97.9317092895508
1985 99.0471572875977
1986 101.157398223877
1987 103.804885864258
1988 104.956100463867
1989 103.024391174316
1990 102.867362976074
1991 101.585361480713
1992 101.41951751709
1993 102.294311523438
1994 101.936447143555
1995 102.347969055176
1996 99.8806457519531
1997 101.187019348145
1998 102.956092834473
1999 106.053684234619
2000 107.317077636719
2001 108.492691040039
2002 109.353382110596
2003 109.91544342041
2004 110.365859985352
2005 110.429267883301
2006 110.959342956543
2007 110.133346557617
2008 111.775611877441
2009 111.380485534668
2010 113.707313537598
2011 113.047157287598
2012 114.463417053223
2013 112.807403564453
2014 113.934959411621
2015 114.471549987793
2016 116.595123291016
2017 117.632431030273
2018 119.414642333984
2019 118.585372924805
2020 125.279945373535
2021 128.800003051758
2022 127
};
\addlegendentry{P50}
\addplot [very thick, black, mark=*, mark size=1, mark options={solid}]
table {%
1976 122.069637298584
1977 123.785377502441
1978 125.157730102539
1979 126.711372375488
1980 124.318099975586
1981 121.648773193359
1982 120.709762573242
1983 120.679679870605
1984 125.134971618652
1985 127.718696594238
1986 130.887817382812
1987 135.267639160156
1988 135.965866088867
1989 134.602508544922
1990 135.492691040039
1991 134.756088256836
1992 133.237411499023
1993 135.105682373047
1994 136.860168457031
1995 137.073165893555
1996 135.069900512695
1997 134.692687988281
1998 138.399993896484
1999 142.082916259766
2000 146.341461181641
2001 146.337768554688
2002 147.859401702881
2003 149.063415527344
2004 148.626037597656
2005 150.585357666016
2006 149.795120239258
2007 150.426010131836
2008 152.956100463867
2009 152.282928466797
2010 156.663421630859
2011 155.284561157227
2012 156.634155273438
2013 157.004867553711
2014 160.439010620117
2015 160.260162353516
2016 164.604873657227
2017 167.011367797852
2018 170.276428222656
2019 170.116111755371
2020 178.107315063477
2021 184.299194335938
2022 182
};
\addlegendentry{P75}
\addplot [very thick, black, mark=triangle*, mark size=1, mark options={solid,rotate=180}]
table {%
1976 147.119075012207
1977 152.351211547852
1978 156.000152587891
1979 156.623565673828
1980 153
1981 147.951217651367
1982 149.024383544922
1983 151.551223754883
1984 157.537295532227
1985 158.99674987793
1986 166.126831054688
1987 173.008117675781
1988 173.122465515137
1989 171.707305908203
1990 174.82926940918
1991 172.073165893555
1992 171.021133422852
1993 173.707305908203
1994 178.105682373047
1995 182.764221191406
1996 177.723556518555
1997 178.002944946289
1998 185.658538818359
1999 192.767471313477
2000 195.121948242188
2001 194.972381591797
2002 197.380477905273
2003 202.741310119629
2004 204.544723510742
2005 203.648788452148
2006 205.274795532227
2007 203.440447998047
2008 207.863403320312
2009 207.658538818359
2010 214.780487060547
2011 214.913833618164
2012 216.878051757812
2013 214.848785400391
2014 219.731704711914
2015 225.508941650391
2016 228.617874145508
2017 236.975601196289
2018 241.704055786133
2019 237.170745849609
2020 254.439010620117
2021 261.788604736328
2022 260
};
\addlegendentry{P90}
\addplot [very thick, black, mark=square*, mark size=1, mark options={solid}]
table {%
1976 166.918466949463
1977 173.299521636963
1978 178.385513305664
1979 178.642288208008
1980 175.390243530273
1981 170.246479797363
1982 169.887802124023
1983 176.750804901123
1984 183.382551574707
1985 187.668273925781
1986 191.297561645508
1987 200.195129394531
1988 202.756103515625
1989 201.469909667969
1990 203.239043426514
1991 203.170715332031
1992 199.98486328125
1993 202.658538818359
1994 211.31247177124
1995 217.489440917969
1996 213.268280029297
1997 214.126846313477
1998 222.790252685547
1999 230.988617706299
2000 235.832534790039
2001 240.570732116699
2002 238.692672729492
2003 248.439025878906
2004 250.162628173828
2005 250.975601196289
2006 256.593475341797
2007 252.50080871582
2008 254.92682800293
2009 257.999969482422
2010 262.790252685547
2011 263.362579345703
2012 271.097564697266
2013 270.331695556641
2014 275.536590576172
2015 283.431549072264
2016 288.058532714844
2017 299.040618896484
2018 309.593505859375
2019 309.076583862303
2020 328.650360107422
2021 330.167822265625
2022 330
};
\addlegendentry{P95}
\end{axis}

\end{tikzpicture}
\end{figure}
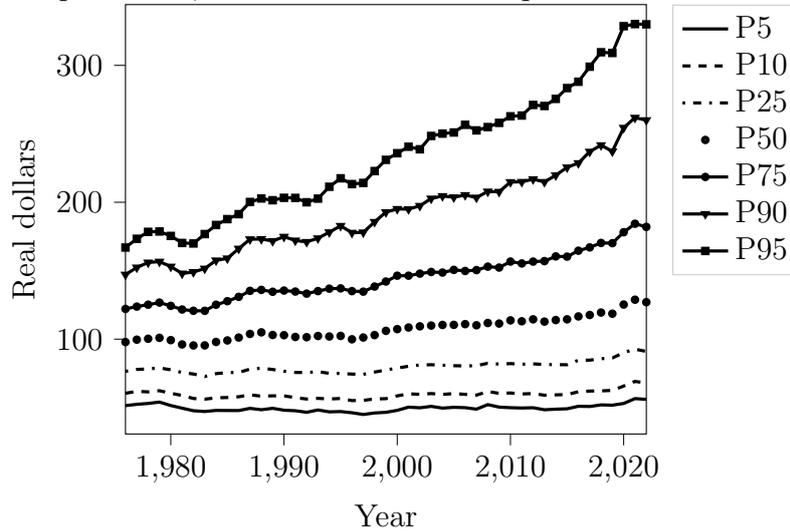

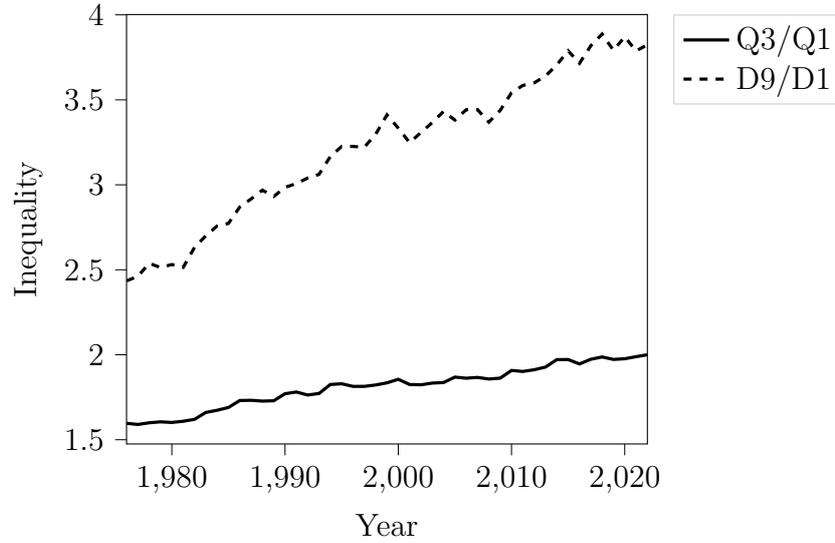
\begin{figure}[!ht]
\caption{Inequality measures of households working FTFY at various
quantiles. }
\label{fig:income1}\centering
\begin{tikzpicture}

\begin{axis}[
legend cell align={left},
legend style={
  fill opacity=0.8,
  draw opacity=1,
  text opacity=1,
  at={(1.05,1)},
  anchor=north west,
  draw=white!80!black
},
tick align=outside,
tick pos=left,
x grid style={white!69.0196078431373!black},
xlabel={Year},
xmin=1976, xmax=2022,
xtick style={color=black},
y grid style={white!69.0196078431373!black},
ylabel={Inequality},
ymin=1.4749963482262, ymax=4.00176589470427,
ytick style={color=black}
]
\addplot [very thick, black]
table {%
1976 1.59658161005718
1977 1.58984950942975
1978 1.59963433416884
1979 1.60526318841712
1980 1.60163471771958
1981 1.60869566972033
1982 1.62000015973031
1983 1.66047214731134
1984 1.67333604270199
1985 1.68965536383869
1986 1.73076611398586
1987 1.73196194525063
1988 1.72727279555848
1989 1.72920598008631
1990 1.77069504995644
1991 1.78082182112807
1992 1.76315801429633
1993 1.77215193420313
1994 1.82499996184889
1995 1.82926820328372
1996 1.81384242292604
1997 1.81395365563584
1998 1.82222218650622
1999 1.834763708961
2000 1.85567013899773
2001 1.82488238647564
2002 1.82330187249399
2003 1.8333334584454
2004 1.83636389859381
2005 1.86865998992937
2006 1.8620688216232
2007 1.86666660355021
2008 1.85714302917672
2009 1.86234074256534
2010 1.90769228482382
2011 1.9017916716004
2012 1.91176476887458
2013 1.92753609070713
2014 1.97142848035927
2015 1.9718310123582
2016 1.94594612633417
2017 1.97333313381111
2018 1.98709669837128
2019 1.97250012628093
2020 1.97647069381626
2021 1.98870078078634
2022 2
};
\addlegendentry{Q3/Q1}
\addplot [very thick, black, dashed]
table {%
1976 2.43456656294958
1977 2.46153825293088
1978 2.53818158481152
1979 2.51333327538377
1980 2.53089557322941
1981 2.51396660573915
1982 2.63157863537304
1983 2.70000026505075
1984 2.75767626426042
1985 2.77272746020429
1986 2.86956533771686
1987 2.91666661307477
1988 2.96870828945191
1989 2.93124443171829
1990 2.98507455687199
1991 3.00724639347854
1992 3.03994349219215
1993 3.06122438278313
1994 3.1666662032014
1995 3.22580637342348
1996 3.22580641587643
1997 3.22127522652868
1998 3.29746082233412
1999 3.41176458673683
2000 3.33333320299785
2001 3.24811415786541
2002 3.30769207165599
2003 3.36625016665806
2004 3.43209890609079
2005 3.38095231158971
2006 3.44186059006406
2007 3.44254550113948
2008 3.36864380662841
2009 3.43749984213241
2010 3.54160747571749
2011 3.58474943458552
2012 3.59999972138813
2013 3.64000002585167
2014 3.70588256487394
2015 3.78846148546783
2016 3.7149860520665
2017 3.81818168408095
2018 3.88691273350072
2019 3.79153449256733
2020 3.87096762960814
2021 3.78787886481828
2022 3.82352941176471
};
\addlegendentry{D9/D1}
\end{axis}

\end{tikzpicture}
\end{figure}

FVVP (2023a) show that there is little variation in hours for either males
or females working FTFY and the income variation reflects changes in wages.
Figure \ref{fig:quantiles} presents wage growth for married males and
females at D1, Q1, Q2, Q3 and D9. For 1976-1996 the real wage for married
males at D1 decreased by 17\%. It recovers somewhat but in 2022 it remains
11\% below its 1976 value. There are large decreases during the financial
crisis and a large decrease in the 2010's. The real wage at Q1 for married
males shows a similar pattern to that at D1. There is a large decrease in
the 1976-1996 period but a recovery during the second half of the sample
period despite the two large dips. The 2022 value is 9\% lower than the 1976
value. The married male median shows a large decrease at 1996 but virtually
no change over the sample period. The overall picture at Q3 is more positive
although there are several sharp decreases. However these are offset by
large gains during periods of increasing real wages. The 2022 value is 21\%
higher than the 1976 value. At D9 there is a similar pattern to that at Q3
but with smaller dips and faster increases. This results in a dramatic gain
of 42\% over the whole sample period.

The time series pattern of the married female wage distribution at the
lowest decile in the earlier part of the sample period is similar to that of
married males although the decreases are less dramatic. Over the period
1979-1996 it decreases by 8\%. For the remaining 26 years it increases by
27\% and for the whole period it increases by 19\%. The pattern of real
wages for married females at Q1 is dissimilar to that at D1. The trend of
wage growth appears to be affected by cyclical factors but generally is
steadily increasing for the time period examined. This results in an
increase of 29\%. Growth at Q2 is even more drastic with an increase of
41\%. Moreover, with the exception of a dip in the early 1980's the median
real wage for this group follows a strong upward trend. \ At Q3 there are
periods of substantial gains and these offset the small decreases which are
incurred. An overall gain of 60\% is achieved. A similar story is observed
at D9 but the increases amount to a very substantial 84\%.

Given the existence of a marriage premium in mean wages it is interesting to
contrast this evidence on married individuals' wage distributions to that of
all individuals. Fern\'{a}ndez-Val et al. (2018) provide the corresponding
rates of wage changes for all males and all females for the period
1976-2016. Median male wages decline by 13.6\% for all working males and the
decrease at Q1 is 18.2\%. At Q3 there is very modest growth. This clearly
suggests that there is a substantial difference between the experience of
married and unmarried males. For all females, wage growth at Q1 and Q2 are
17\% and 25\% respectively and there are strong increases at Q3. Married
females have experienced favorable wage growth compared to their unmarried
counterparts.

The primary focus of our empirical work below is to model the joint
distribution of married couple's real wage rates. To motivate what follows
we report how the distribution of the sum of the husband's and wife's wages
has evolved over our sample period. This is reported in Figure \ref%
{fig:wages_both}. We define this as the household hourly wage as it captures
the combined market value of an hour of work of each spouse. The household
hourly wage at D1 decreases by 12\% over the period 1976-1996 but increases
over the remainder of the sample period. By 2022 it is 9\% higher than in
1976. The household wage at Q1 shows some of the dramatic dips featured in
the married males profile but over the sample period there is an increase of
15\%. At Q2 it initially resembles the pattern of the male wage with
multiple large dips during our sample. However, towards the end of the
sample the large increases in wives' wages result in an increase of about
26\% percent over the sample period. At Q3 the large shifts in the wive's
wage distribution become more important. The household wage shows a dip in
the 1980's and a prolonged decrease in the 1990's but over the whole sample
period it increases by 43\%. The trend at D9 is consistent with the male
pattern at this decile combined with those of females at any of the higher
quantiles. This produces a steadily rising profile and an increase of almost
70\%.

\begin{figure}[tbp]
\input{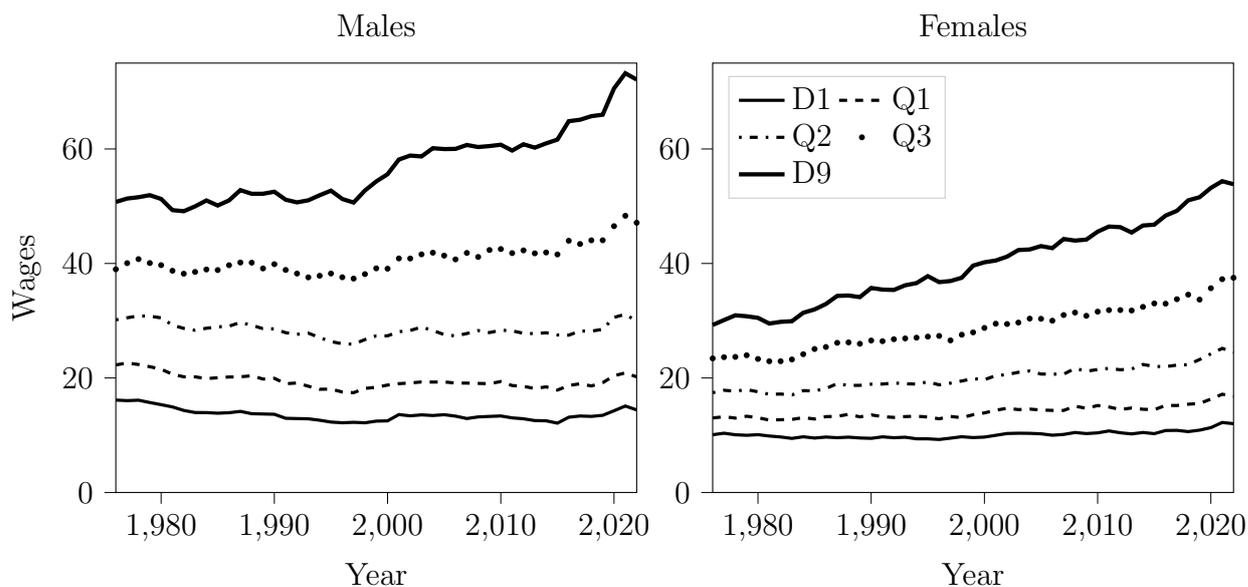}
\caption{Time series of wages of married males and females at different
quantiles.}
\label{fig:quantiles}
\end{figure}

\begin{figure}[tbp]
\centering
\begin{tikzpicture}

\definecolor{darkgray176}{RGB}{176,176,176}

\begin{axis}[
tick align=outside,
tick pos=left,
x grid style={darkgray176},
xlabel={Year},
xmin=1976, xmax=2022,
xtick style={color=black},
y grid style={darkgray176},
ylabel={Real dollars},
ymin=20.287365, ymax=118.625935,
ytick style={color=black}
]
\addplot [semithick, black]
table {%
1976 28.1161
1977 28.4162
1978 28.0581
1979 28.241
1980 27.7391
1981 27.4562
1982 26.4314
1983 25.788
1984 26.228
1985 26.3273
1986 26.5137
1987 27.1096
1988 26.7911
1989 26.8036
1990 26.5775
1991 26.186
1992 25.829
1993 25.9934
1994 25.7963
1995 25.4044
1996 24.7573
1997 24.8012
1998 25.5014
1999 25.5245
2000 26.2743
2001 27.2317
2002 27.3386
2003 27.6517
2004 27.1747
2005 27.5087
2006 27.3822
2007 26.5875
2008 28.0558
2009 27.5427
2010 27.9209
2011 27.6924
2012 27.6597
2013 27.0993
2014 27.2224
2015 27.1158
2016 27.8152
2017 28.1332
2018 28.4281
2019 28.3773
2020 29.4826
2021 31.3846
2022 30.608
};
\addplot [semithick, black, dotted]
table {%
1976 35.2058
1977 35.5248
1978 35.5457
1979 35.803
1980 35.1052
1981 34.4165
1982 33.8842
1983 33.1293
1984 33.8606
1985 34.0743
1986 34.6691
1987 35.0801
1988 35.2224
1989 34.7977
1990 34.5242
1991 34.1633
1992 33.9895
1993 34.1163
1994 33.7016
1995 33.4115
1996 32.7936
1997 32.9634
1998 33.6136
1999 34.3637
2000 34.9707
2001 35.8063
2002 36.0378
2003 36.362
2004 36.2902
2005 36.2145
2006 35.8728
2007 35.765
2008 36.4868
2009 36.4026
2010 36.737
2011 36.8285
2012 36.5101
2013 36.313
2014 36.5635
2015 36.3094
2016 37.2619
2017 37.8132
2018 37.7649
2019 38.0564
2020 39.8754
2021 41.5385
2022 40.5958
};
\addplot [semithick, black, dashed]
table {%
1976 44.4495
1977 44.9152
1978 45.162
1979 45.6714
1980 45.052
1981 43.7754
1982 43.4312
1983 43.4416
1984 44.4306
1985 44.5507
1986 45.8403
1987 46.5241
1988 46.7086
1989 46.2494
1990 46.0535
1991 45.3082
1992 45.1325
1993 45.4884
1994 44.9888
1995 44.8722
1996 44.2828
1997 44.398
1998 45.4611
1999 46.388
2000 47.0318
2001 47.6454
2002 48.3314
2003 48.6876
2004 48.6631
2005 48.5368
2006 48.3964
2007 48.7145
2008 49.2188
2009 49.1146
2010 50.0784
2011 50.1911
2012 49.7876
2013 49.8626
2014 50.234
2015 50.1946
2016 51.4136
2017 51.8096
2018 52.3551
2019 52.4201
2020 55.4885
2021 57.6505
2022 56.0126
};
\addplot [semithick, black, dash pattern=on 1pt off 3pt on 3pt off 3pt]
table {%
1976 55.3706
1977 56.1101
1978 56.6983
1979 57.222
1980 56.1405
1981 54.8594
1982 54.736
1983 54.9769
1984 56.4528
1985 57.1423
1986 58.7356
1987 59.9183
1988 60.2343
1989 58.8629
1990 60.2007
1991 59.7059
1992 58.9526
1993 59.3713
1994 60.2529
1995 59.7539
1996 59.2687
1997 58.6688
1998 60.2704
1999 61.715
2000 63.0076
2001 64.0841
2002 64.6291
2003 65.179
2004 65.5347
2005 65.8445
2006 65.2977
2007 65.7748
2008 66.4965
2009 67.027
2010 69.2054
2011 68.4424
2012 68.8406
2013 68.5224
2014 69.5421
2015 70.4252
2016 71.7955
2017 72.5334
2018 74.1161
2019 74.243
2020 77.8796
2021 81.25
2022 79.5807
};
\addplot [semithick, black, mark=*, mark size=1, mark options={solid}]
table {%
1976 67.0486
1977 68.1258
1978 68.9548
1979 70.4678
1980 68.6057
1981 66.5688
1982 66.5434
1983 68.2916
1984 69.9415
1985 71.2274
1986 73.1312
1987 75.7095
1988 75.6662
1989 74.8925
1990 76.2542
1991 76.2183
1992 75.1714
1993 75.3218
1994 77.4681
1995 77.2414
1996 76.4495
1997 76.8427
1998 79.0609
1999 81.0009
2000 83.6121
2001 83.9748
2002 85.7038
2003 87.0027
2004 87.453
2005 88.2316
2006 88.7268
2007 88.8782
2008 89.432
2009 90.1397
2010 92.8094
2011 93.031
2012 93.488
2013 93.2215
2014 95.0116
2015 96.1777
2016 99.3649
2017 101.455
2018 103.559
2019 104.201
2020 110.382
2021 114.156
2022 113.862
};
\end{axis}

\end{tikzpicture}
\caption{Time series of the households' hourly wage at different quantiles.}
\label{fig:wages_both}
\end{figure}
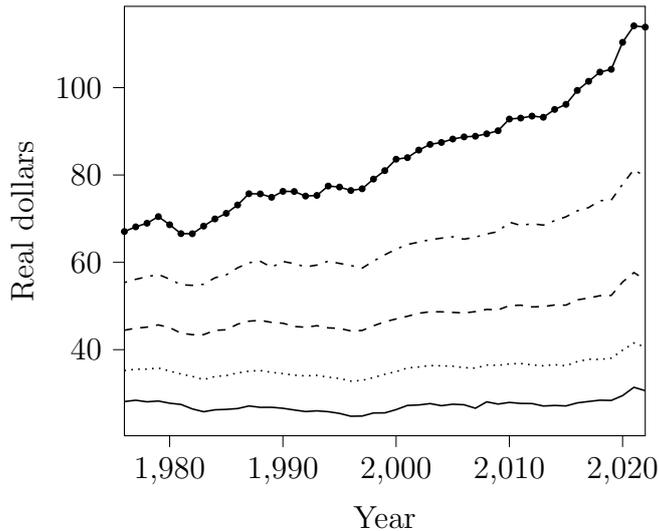
Consider the implications of these trends for inequality. For males, females
and households the Q3/Q1 ratios increased from 1.75, 1.80 and 1.58 in 1976
to 2.33, 2.23 and 1.96 in 2022. The D9/D1 ratios increase from 3.14. 2.90
and 2.39 in 1976 to 5.00, 4.48 and 3.72 in 2022. While we cannot directly
infer anything definitive about sorting via the relative magnitudes of these
measures we report them to reflect the extent of inequality. However, the
rate of growth over the time period is similar for each which appears to
suggest positive assortative matching. Fern\'{a}ndez-Val et al. (2018) find
that for the period 1976-2016 the D9/D1 ratio for all males increases from
3.6 to 5.4 and that of females increases from 3.7 to 5. This suggests that
there is less wage inequality for married individuals. This appears to
primarily reflect the relatively higher wages of married individuals at low
quantiles.

\section{Marital Sorting}

\label{sec:marital_sorting}

We capture marital sorting patterns by reporting the propensity of the male
in the $j^{th}$ decile of the male wage distribution to be married to a
female in the $k^{th}$ decile of the female wage distribution. We consider $%
j $,$k=1\ldots 10$ producing 100 cells. To overcome small sample sizes we
aggregate the data into five year intervals. To contrast the changes in cell
sizes over our entire sample period, we compare the beginning and end
periods corresponding to 1976-80 and 2018-22. Tables \ref{tab:frequencies_1}
and \ref{tab:frequencies_2} report the contingency tables for these two
periods. As random sorting is consistent with each cell containing 0.01 of
the data, we divide the sample frequencies by this number. Deviations from 1
are suggestive of sorting behavior. Perfect positive assortative matching
implies all couples should appear on the diagonal going from the top left to
the bottom right with each element having a value of 10. Perfect negative
assortative matching implies the elements on the diagonal going from the
bottom left to the top right have value 10. Note that dj denotes that an
observation is located between the D(j-1) to D(j) decile.

We acknowledge that a number of factors may generate departures from random
sorting. For example, individuals are likely to sort on observed factors
such as age, education, race and the individual's region of residence. As
these factors are determinants of wages this may spuriously support positive
sorting on wages. There may also be wage variations reflecting regional
specific cost of living influences. While cost of living components of wages
may also partially capture unobserved factors, we do not take a position on
whether this necessarily reflects sorting.

For the 1976-80 period there is evidence of non-random sorting. The most
striking features of Table \ref{tab:frequencies_1} are the cells associated
with the extreme values of the joint wage distribution. The two largest
observed frequencies correspond to both spouses in the bottom decile (2.19)
and both in the top decile (2.39). The next two highest values correspond to
the cells immediately adjacent these extremes (1.72 and 1.59). This is
consistent with positive sorting. If we define the bottom as (d1+d2) and the
top as (d9+d10) we obtain 6.56\% and 6.60\%. This contrasts with the 4\%
implied by random sorting. Another interesting feature of the table is
revealed by the off-diagonal values. There is an almost monotonically
decreasing relationship between distance from the diagonal and the
probability of marriage. Finally, the sum of the elements on the diagonal is
13.78, which appears to support positive assortative matching. However, as
this ignores behavior away from the diagonal and it is not invariant to the
definition of the diagonal we computed the Kendall rank correlation
coefficient. Its value of 0.18 supports positive assortative matching.

Table \ref{tab:frequencies_2} presents the corresponding values for the
2018-2022 period. As this period is associated with a substantial increase
in wage inequality, the level of household inequality is likely to have
increased even if the cell frequencies remained at the 1976-80 values.
However, it appears that increased marital sorting has exacerbated the
individual level inequality. For example, the two extreme cells have
increased to 2.94 and 3.25. This represents a particularly large increase in
the d1/d1 cell. The frequencies in the bottom 2 and top 2 have increased to
8.13 and 8.48. This reflects a large growth in the frequency in which lower
(higher) paid males are married to lower (higher) paid females. The growth
in the positive sorting at the bottom is concerning given the manner in
which wages, relative to 1976, have fallen for males and only slightly
increased for females at these locations of their distributions. The other
features of Table \ref{tab:frequencies_1} regarding sorting are also
generally supported by Table \ref{tab:frequencies_2}. There is greater
evidence that the lowly paid are married to the lowly paid while the highly
paid are married to the highly paid. The sum of the diagonal is now 17.24
and the Kendall rank correlation coefficient is 0.27. This suggests
increased positive assortative matching relative to the 1976-80 period.

\begin{table}[ht]
\resizebox{15cm}{!}{
	\begin{tabular}{l||l|rrrrrrrrrr}
		\hline\hline
		&  & \multicolumn{10}{c}{Husbands' quantiles}   \\ \hline\hline
		&  & d1 & d2 & d3 & d4 & d5 & d6 & d7 & d8 & d9 & d10   \\ \hline
		\parbox[t]{2mm}{\multirow{10}{*}{\rotatebox[origin=c]{90}{Wives' quantiles}}}
		& d1 & \textbf{2.191} & 1.287 & 1.158 & 0.937 & 0.833 & 0.899 & 0.77 & 0.684
		& 0.66 & 0.582   \\ 
		& d2 & 1.723 & \textbf{1.352} & 1.203 & 1.069 & 0.958 & 0.839 & 0.803 & 0.767
		& 0.755 & 0.531   \\ 
		& d3 & 1.272 & 1.394 & \textbf{1.099} & 1.108 & 1.12 & 0.872 & 0.866 & 0.86
		& 0.758 & 0.654   \\ 
		& d4 & 1.17 & 1.4 & 1.194 & \textbf{1.117} & 0.925 & 0.916 & 0.905 & 0.899 & 
		0.779 & 0.693   \\ 
		& d5 & 0.899 & 1.12 & 1.224 & 1.12 & \textbf{1.105} & 0.973 & 0.955 & 0.94 & 
		0.911 & 0.755   \\ 
		& d6 & 0.74 & 1.036 & 1.063 & 1.14 & 1.063 & \textbf{1.111} & 1.003 & 1.024
		& 0.955 & 0.866   \\ 
		& d7 & 0.612 & 0.782 & 1.102 & 1.036 & 1.087 & 1.099 & \textbf{1.042} & 1.122
		& 1.146 & 0.97   \\ 
		& d8 & 0.579 & 0.609 & 0.827 & 1.134 & 1.078 & 1.108 & 1.152 & \textbf{1.125}
		& 1.218 & 1.17   \\ 
		& d9 & 0.436 & 0.618 & 0.702 & 0.758 & 1.146 & 1.203 & 1.275 & 1.254 & 
		\textbf{1.224} & 1.385   \\ 
		& d10 & 0.379 & 0.403 & 0.43 & 0.579 & 0.687 & 0.982 & 1.227 & 1.325 & 1.594
		& \textbf{2.391}   \\ \hline\hline
	\end{tabular}}
\caption{Frequencies of the combination of deciles of married couples for
the period 1976-1980.}
\label{tab:frequencies_1}
\end{table}

\begin{table}[ht]
\resizebox{15cm}{!}{
	\begin{tabular}{l||l|rrrrrrrrrr}
		\hline\hline
		&  & \multicolumn{10}{c}{Husbands' quantiles}   \\ \hline\hline
		&  & d1 & d2 & d3 & d4 & d5 & d6 & d7 & d8 & d9 & d10   \\ \hline
		\parbox[t]{2mm}{\multirow{10}{*}{\rotatebox[origin=c]{90}{Wives' quantiles}}}
		& d1 & \textbf{2.946} & 1.56 & 1.06 & 0.881 & 0.874 & 0.774 & 0.527 & 0.579
		& 0.445 & 0.353   \\ 
		& d2 & 1.862 & \textbf{1.771} & 1.335 & 1.085 & 0.981 & 0.817 & 0.643 & 0.658
		& 0.5 & 0.347   \\ 
		& d3 & 1.268 & 1.71 & \textbf{1.408} & 1.21 & 0.987 & 0.881 & 0.85 & 0.64 & 
		0.57 & 0.476   \\ 
		& d4 & 0.887 & 1.21 & 1.603 & \textbf{1.234} & 1.161 & 0.987 & 0.957 & 0.759
		& 0.667 & 0.536   \\ 
		& d5 & 0.75 & 0.975 & 1.207 & 1.353 & \textbf{1.122} & 1.082 & 1.045 & 1.03
		& 0.808 & 0.628   \\ 
		& d6 & 0.64 & 0.683 & 0.89 & 1.295 & 1.445 & \textbf{1.225} & 1.143 & 1.042
		& 0.917 & 0.719   \\ 
		& d7 & 0.488 & 0.686 & 0.932 & 1.009 & 1.222 & 1.335 & \textbf{1.2} & 1.149
		& 1.109 & 0.871   \\ 
		& d8 & 0.46 & 0.527 & 0.661 & 0.741 & 0.942 & 1.253 & 1.536 & \textbf{1.311}
		& 1.411 & 1.158   \\ 
		& d9 & 0.375 & 0.503 & 0.549 & 0.677 & 0.75 & 0.969 & 1.234 & 1.509 & 
		\textbf{1.774} & 1.661   \\ 
		& d10 & 0.326 & 0.375 & 0.354 & 0.518 & 0.515 & 0.677 & 0.866 & 1.323 & 1.798
		& \textbf{3.249}   \\ \hline\hline
	\end{tabular}}
\caption{Frequencies of the combination of deciles of married couples for
the period 2018-2022.}
\label{tab:frequencies_2}
\end{table}

The differences across the two tables are important given their potential
implications for inequality. However, the differences might reflect a
variety of factors. As employment rates have changed substantially over the
sample period, the composition, in terms of observed and unobserved
characteristics, of both husbands and wives may have changed and this may
have affected the observed patterns of sorting. It is also possible that the
prices of these observed and unobserved characteristics have changed.
Finally, the nature of marital sorting, as measured by the joint
distribution of the couples observed and unobserved characteristics, may
have changed.

Many couples share characteristics which are determinants of wages, such as
race, geographical location and age, although this does not necessarily
reflect sorting on wages. Another important characteristic is education
although this might be more reasonably interpreted as indicative of
productivity. To examine how these characteristics are allocated across
cells, Tables \ref{tab:age_wives}-\ref{tab:non_white_husbands1} report the
average value of some measure of each of these characteristics for the same
time periods.

Tables \ref{tab:age_wives}-\ref{tab:age_husbands1} report the ages of wives
and husbands. As age is a determinant of wages and spouses generally are
close in age, it is possible that the patterns in Tables \ref%
{tab:frequencies_1} and \ref{tab:frequencies_2} simply reflect age
differences. For the first period the average age for both husbands and
wives increases by about 2 years as one goes from the d1/d1 to d10/d10. This
seems a remarkably small difference given the large wage discrepancies
across these cells. The highest husband age is associated with the highest
male decile and the highest ages of wives are for the lowest paid women
marrying these men. While this is an interesting result it is beyond the
scope of the paper to investigate it further. There are differences across
the cells, but it does not appear that age differences are the factors
driving the observed sorting. For the later period the average ages of the
spouses increase by approximately 2.5 years as one goes from d1/d1 to
d10/d10. In percentage terms this is a notable increase compared to the
earlier period although it does not appear to be the driving force of
increased inequality across the two periods. The oldest males continue to be
the highest paid married to the lowest paid women and the oldest wives are
those married to these men. The age differences across cells is larger than
in the earlier period but are unlikely to explain the observed wage
differences.

Tables \ref{tab:educ_wives} to \ref{tab:educ_husbands1} report the fraction
of wives and husbands with university education in the two time periods.
Unsurprisingly, educational levels in the higher wage cells are higher than
those in the lower. The large increase in individuals obtaining college
education is reflected in the substantially higher averages for the later
period. We also report the percentage of households in each cell for which
both spouses have university education. For the earlier period the product
of the two d10 cells is .353(.552*.64) although the corresponding cell in
Table \ref{tab:educ_both} is .467. This difference is supportive of couples
sorting on education at high wages. For the later period the corresponding
numbers are .868(.938*.926) and .897. Thus while there are more married
couples with both spouses university educated, there appears to be more
positive sorting on education at high wages in the earlier period. The
product of the d1 cells for the first period is .004(.064*.069) while for
Table \ref{tab:educ_both} it is .034. For the later period the corresponding
numbers are .012(.107*.115) and .059. This also suggests there is greater
positive sorting on education in the earlier period. Tables \ref%
{tab:non_white_wives} to \ref{tab:non_white_husbands1} unsurprisingly
indicate that race has some association with an individual's location in
each of the wage distribution and that the strength of this relationship has
changed drastically over time. Moreover, Tables \ref{tab:non_white_wives}-%
\ref{tab:non_white_husbands1} collectively confirm positive sorting on race.
This highlights the necessity to account for these factors in estimating the
sorting models below. The final characteristic we consider is the location
of residence. This is also likely to affect wage differences as it may
capture cost of living differences. Tables \ref{tab:metro_husbands} and \ref%
{tab:metro_husbands1} suggest that the higher paid are living in
metropolitan areas and that the fraction of individuals living in these
areas has increased substantially over the sample period. As married
individuals generally cohabitate this may also spuriously imply sorting on
wages.

\section{Econometric model}

\label{sec:econometric_model}

\subsection{Determinants of Marital Sorting}

We now focus on estimating the determinants of the marital sorting
frequencies in Tables \ref{tab:frequencies_1} and \ref{tab:frequencies_2}.
We do so via the bivariate distribution regression (BDR) approach of Fern%
\'{a}ndez-Val et al. (2023) which employs a Local Gaussian Representation
(LGR) of the joint distribution of the wives' and husbands' wages
(Chernozhukov, Fern\'{a}ndez-Val, and Luo 2019). We represent this joint
distribution as: 
\begin{equation}
F_{Y_{w},Y_{h}\mid X}(y_{w},y_{h}\mid x)=\Phi _{2}(\mu _{w}(y_{w},x),\mu
_{h}(y_{h},x);\rho (y_{w},y_{h},x)),\quad (y_{w},y_{h})\in \mathbb{R}^{2},
\label{eq:bivariate}
\end{equation}%
where $Y_{w}$ and $Y_{h}$ are the observed wages of wives and husbands, $X$
is a set of observed characteristics, $\Phi _{2}(\cdot ,\cdot ;\rho )$ is
the standard bivariate normal CDF with correlation $\rho ,$ and $\mu
_{j}(y,x)$ is formally defined below. In the LGR, the marginal conditional
CDFs of $Y_{w}$ and $Y_{h}$ are represented by: 
\begin{equation*}
F_{Y_{j}\mid X}(y\mid x)=\Phi (\mu _{j}(y,x)),\quad y\in \mathbb{R},\quad
j\in \{w,h\},
\end{equation*}%
where $\Phi $ is the standard univariate normal CDF. The parameter $\rho
(y_{w},y_{h},x)$ is the local correlation between the unobservables
influencing the spouses' wages at $(y_{w},y_{h})$ and captures sorting on
unobservables. The unconditional joint distribution of the wives' and
husbands' wages can be obtained from the LGR as: 
\begin{equation*}
F_{Y_{w},Y_{h}}(y_{w},y_{h})=\int \Phi _{2}(\mu _{w}(y_{w},x),\mu
_{h}(y_{h},x);\rho (y_{w},y_{h},x))dF_{X}(x),\quad (y_{w},y_{h})\in \mathbb{R%
}^{2},
\end{equation*}%
where $F_{X}$ is the CDF of $X$, and the corresponding marginals are: 
\begin{equation*}
F_{Y_{j}}(y)=\int \Phi (\mu _{j}(y,x))dF_{X}(x),\quad y\in \mathbb{R},\quad
j\in \{w,h\}.
\end{equation*}%
Chernozhukov, Fern\'{a}ndez-Val, and Luo (2019) show that the LGR is
non-parametric as it does not impose any restrictions on the conditional
joint distribution.

The BDR model augments the LGR with two assumptions:

\begin{assumption}[BDR]
\label{ass:bdr} (1) $\mu_j(y,x) = P_{j}(x)^{\prime }\beta _{j}(y),$ $j \in
\{w,h\},$ where $P_{w}$ and $P_{h}$ are transformations of $x$ and $\beta
_{w}(y)$ and $\beta _{h}(y)$ are vectors of coefficients; and (2) $\rho
(y_{w},y_{h},x) = \rho(y_{w},y_{h})$.
\end{assumption}

We allow different specifications for $P_{w}$ and $P_{h}.$ For example, $%
P_{w}$ includes the wife's education and age while $P_{h}$ does not. While
we assume that the marital sorting parameter does not depend on observed
characteristics, we allow it to vary by location in the joint distribution
of wages. This model is semiparametric as the parameters $y\mapsto \beta
_{j}(y),$ $j\in \{w,h\},$ and $(y_{w},y_{h})\mapsto \rho (y_{w},y_{h})$ are
function-valued.

The BDR model describes the joint distribution of wages conditional on the
two indices capturing the observed determinants of husbands and wives'
wages. Given a random sample of $(Y_{w},Y_{h},X)$, estimation is performed
via a series of bivariate probits of the indicators $\mathbf{1}(Y_{w}\leq
y_{w})$ and $\mathbf{1}(Y_{h}\leq y_{h})$ on $X$ for multiple values of $%
y_{h}$ and $y_{w}$. This is done in two steps. First, estimate $\beta
_{j}(y_{j})$ via univariate probit of $\mathbf{1}(Y_{j}\leq y_{j})$ on $X$,
for $j\in \{w,h\}$. Second, estimate $\rho (y_{w},y_{h})$ via bivariate
probit plugging-in the estimates of $\beta _{w}(y_{w})$ and $\beta
_{h}(y_{h})$.


We begin by examining our capacity to explain the variation in Tables \ref%
{tab:frequencies_1} and \ref{tab:frequencies_2} via 100 bivariate probits.
The entries correspond to estimates of: 
\begin{multline}
s(\underline{y}_{w},\overline{y}_{w},\underline{y}_{w},\overline{y}_{w}):=%
\frac{\Pr (\underline{y}_{w}<Y_{w}\leq \overline{y}_{w},\underline{y}%
_{h}<Y_{h}\leq \overline{y}_{h})}{\Pr (\underline{y}_{w}<Y_{w}\leq \overline{%
y}_{w})\Pr (\underline{y}_{h}<Y_{h}\leq \overline{y}_{h})}
\label{eq:transition} \\
=\frac{F_{Y_{w},Y_{h}}(\overline{y}_{w},\overline{y}_{h})-F_{Y_{w},Y_{h}}(%
\overline{y}_{w},\underline{y}_{h})-F_{Y_{w},Y_{h}}(\underline{y}_{w},%
\overline{y}_{h})+F_{Y_{w},Y_{h}}(\underline{y}_{w},\underline{y}_{h})}{%
[F_{Y_{w}}(\overline{y}_{w})-F_{Y_{w}}(\underline{y}_{w})][F_{Y_{h}}(%
\overline{y}_{h})-F_{Y_{h}}(\underline{y}_{h})]},
\end{multline}%
where $\underline{y}_{w}$ and $\overline{y}_{w}$ are evaluated at the sample
deciles of $Y_{w}$, and $\underline{y}_{h}$ and $\overline{y}_{h}$ at the
sample deciles of $Y_{h}$. The predicted probabilities are reported in
Tables \ref{tab:bivariate_1} for 1976-1980 and Table \ref{tab:bivariate_2}
for 2018-2022. A comparison of the predicted and empirical probabilities for
each of the sample periods indicates that the estimated model reproduces the
empirical probabilities across the 100 cells despite the restrictions
imposed by Assumption \ref{ass:bdr}.\footnote{%
The predicted probabilities are computed by first using univariate
distribution regression models to obtain the quantiles of the marginal
distribution of wives and husbands. That is, we estimate the quantiles for
wives, $Q_{\tau }^{w};\tau \in \lbrack 0,1]$ solving the empirical analog of 
$\int \Phi (x\beta (Q_{\tau }^{w}))dF_{X_{w}}(x)=\tau $, where $\beta (\cdot
)$ is estimated by univariate distribution regression.The quantiles for
husbands are estimated similarly. Based on these quantiles, we can estimate
the bivariate distribution regression model as in (\ref{eq:bivariate}) and
estimate the empirical analog of $\int \Phi _{2}(x_{w}\beta (Q_{\tau
}^{w}),x_{h}\beta (Q_{\tau }^{h}),\rho )dF_{X_{w},X_{h}}(x_{w},x_{h})$.} 

Figure \ref{fig:kernel} presents the estimated densities of the 100
estimates of $\rho (y_{w},y_{h})$ for each sample period. The average values
for the two periods are $0.18$ and $0.24$ and the figure reveals a clear
shift in the distribution to the right over time. Thus, despite the rich
nature of $X$, there remains a strong and positive correlation between the
unobservables driving wages for spouses and it has increased over time.%
\footnote{\label{footnote:x}$X$ includes 3 dummy variables for education
(high school, some college, college degree or higher), age, age$^{2}$, age
interacted with 3 education dummy variables, $age^{2}$ interacted with 3
education dummy variables, a dummy variable for non-white, a dummy variable
for Hispanic, 2 dummy variables for metropolitan area (central city, outside
central city), and 7 regional dummy variables (middle Atlantic, east north
central, west north central, south atlantic, east south central, west south
central, mountain, and pacific. base: New England).} This supports the
presence of positive and increasing assortative matching on wages. While
this may partially reflect common influences on wages, such as local
adjustments related to cost of living, it may also capture unobserved
ability or other unobserved determinants of productivity.

\begin{figure}[tbp]
\caption{Estimated densities of $\protect\rho (y_{w},y_{h})$}
\label{fig:kernel}\centering
\begin{tikzpicture}

\definecolor{darkgray176}{RGB}{176,176,176}
\definecolor{steelblue31119180}{RGB}{31,119,180}

\begin{groupplot}[group style={group size=2 by 1}]
\nextgroupplot[
tick align=outside,
tick pos=left,
title={1976-1980},
x grid style={darkgray176},
xlabel={\(\displaystyle \rho (y_{w},y_{h})\)},
xmin=0, xmax=0.5,
xtick style={color=black},
y grid style={darkgray176},
ylabel={density},
ymin=0, ymax=8,
ytick style={color=black}
]
\addplot [thick, steelblue31119180]
table {%
0.0200588 0.013329643832925
0.0300588 0.0442055423533545
0.0400588 0.118073769492605
0.0500588 0.259471947519481
0.0600588 0.485332750392802
0.0700588 0.80807818950166
0.0800588 1.24750277326404
0.0900588 1.81679894667106
0.1000588 2.48466901033453
0.1100588 3.18030325021522
0.1200588 3.85901875886555
0.1300588 4.53852923545341
0.1400588 5.24794549651488
0.1500588 5.95560303236278
0.1600588 6.5608258713713
0.1700588 6.96067377756324
0.1800588 7.14577543254212
0.1900588 7.2429298853014
0.2000588 7.41529052852382
0.2100588 7.64318708585818
0.2200588 7.60939188099878
0.2300588 6.91119137675457
0.2400588 5.46077937803065
0.2500588 3.63712547415058
0.2600588 2.00211012163508
0.2700588 0.899195163145697
0.2800588 0.326536888161733
0.2900588 0.0952337548354639
0.3000588 0.0221899726569547
0.3100588 0.00411371744460971
0.3200588 0.000604771489147844
0.3300588 7.03192163422005e-05
0.3400588 6.45274988578379e-06
};

\nextgroupplot[
tick align=outside,
tick pos=left,
title={2018-2022},
x grid style={darkgray176},
xlabel={\(\displaystyle \rho (y_{w},y_{h})\)},
xmin=0, xmax=0.5,
xtick style={color=black},
y grid style={darkgray176},
ymin=0, ymax=8,
ytick style={color=black}
]
\addplot [thick, steelblue31119180]
table {%
0.071201 0.0164881995479213
0.081201 0.0555945912949744
0.091201 0.150304709413774
0.101201 0.330669621129432
0.111201 0.6062883260344
0.121201 0.961099905348157
0.131201 1.38176647391184
0.141201 1.88214729381961
0.151201 2.47084199232373
0.161201 3.09097246627192
0.171201 3.63025398289376
0.181201 4.02214228658895
0.191201 4.31986041037763
0.201201 4.63584408680516
0.211201 5.01473382362662
0.221201 5.40363844167821
0.231201 5.74585778778284
0.241201 6.05019050689403
0.251201 6.33713807552092
0.261201 6.55518099490341
0.271201 6.59498344616832
0.281201 6.3684546179569
0.291201 5.84777189673211
0.301201 5.06476838528119
0.311201 4.11522775346944
0.321201 3.14153422500065
0.331201 2.27154726679771
0.341201 1.56825276391304
0.351201 1.03426043901221
0.361201 0.64440333264624
0.371201 0.370156933169089
0.381201 0.189130764245107
0.391201 0.0826685957801919
0.401201 0.0298981356784806
0.411201 0.0087359447832138
0.421201 0.00203102133454942
0.431201 0.000372298790214417
0.441201 5.3523529419514e-05
0.451201 6.01677481068804e-06
};
\end{groupplot}

\end{tikzpicture}
\end{figure}
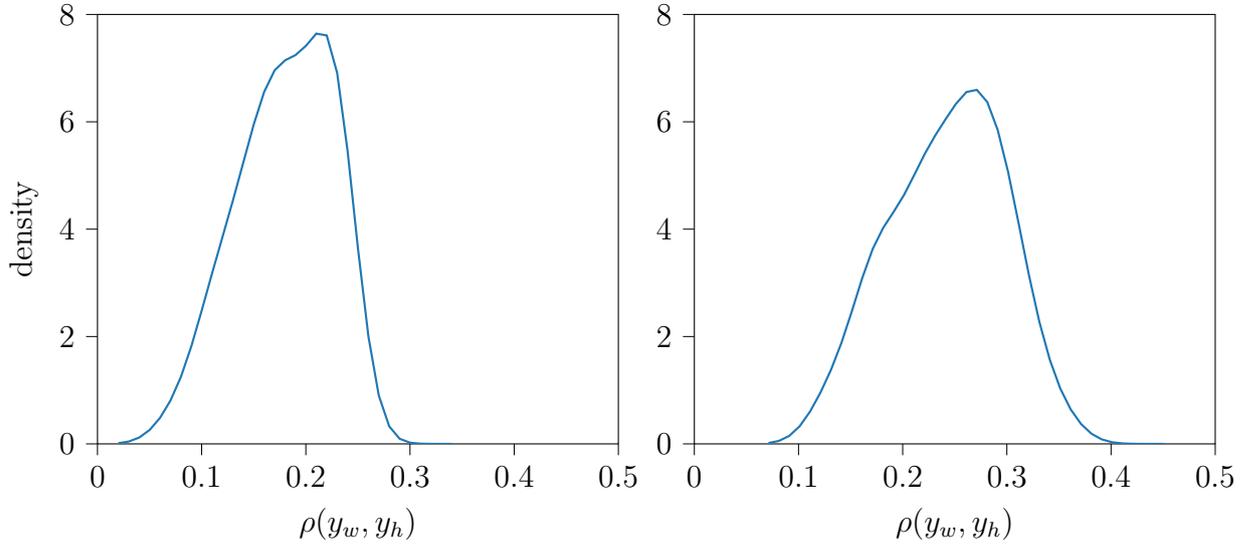

While the estimate of $\rho (y_{w},y_{h})$ is consistent with positive
sorting, it is not immediately clear from the coefficient how economically
important this parameter is in determining the observed patterns. We employ
the estimates from the BDR model to construct counterfactual contingency
tables from setting $\rho (y_{w},y_{h})=0$ for all $y_{w},y_{h}$.\footnote{%
As the quantiles of the marginal distributions are not affected by setting $%
\rho (y_{w},y_{h})$ to zero, we can simply estimate the corresponding
distributions in \eqref{eq:bivariate} by setting $\rho (y_{w},y_{h})$ equal
to zero.} This corresponds to no correlation or sorting on unobservables
while retaining the sorting on the observed characteristics. We implement
this counterfactual in the two periods and report the estimates in Tables %
\ref{tab:biv2_1976} and \ref{tab:biv2_2018}. The predicted probabilities
indicate that sorting on unobservables accounts for a large fraction of the
observed positive marital sorting. This is remarkable given the positive
sorting that occurs on race, age, educational attainments and location of
residence highlighted above. Even though it is hard to make any conjectures
about the source of the unobserved heterogeneity, Eika, Mogstad, and Zafar
(2019) show for Norwegian data that the choice of college major is an
important source of educational sorting. This would be captured in $\rho
(y_{w},y_{h}).$

\begin{table}[ht]
\resizebox{15cm}{!}{
	\begin{tabular}{l||l|rrrrrrrrrr}
		\hline\hline
		&  & \multicolumn{10}{c}{Husbands' quantiles}   \\ \hline\hline
		&  & d1 & d2 & d3 & d4 & d5 & d6 & d7 & d8 & d9 & d10   \\ \hline
		\parbox[t]{2mm}{\multirow{10}{*}{\rotatebox[origin=c]{90}{Wives' quantiles}}}
		& d1 & \textbf{1.326} & 1.233 & 1.08 & 1.065 & 1.008 & 0.944 & 0.914 & 0.871
		& 0.828 & 0.697   \\ 
		& d2 & 1.271 & \textbf{1.223} & 1.088 & 1.093 & 1.04 & 0.98 & 0.956 & 0.912
		& 0.87 & 0.734   \\ 
		& d3 & 1.132 & 1.124 & \textbf{1.018} & 1.039 & 1.005 & 0.949 & 0.936 & 0.906
		& 0.871 & 0.746   \\ 
		& d4 & 1.097 & 1.135 & 1.046 & \textbf{1.083} & 1.058 & 1.001 & 0.992 & 0.972
		& 0.94 & 0.813   \\ 
		& d5 & 1.036 & 1.091 & 1.027 & 1.086 & \textbf{1.077} & 1.032 & 1.023 & 1.024
		& 0.993 & 0.9   \\ 
		& d6 & 0.914 & 0.977 & 0.925 & 0.994 & 0.995 & \textbf{0.96} & 0.966 & 0.978
		& 0.957 & 0.916   \\ 
		& d7 & 0.914 & 0.98 & 0.949 & 1.027 & 1.046 & 1.026 & \textbf{1.042} & 1.058
		& 1.058 & 1.059   \\ 
		& d8 & 0.814 & 0.892 & 0.877 & 0.977 & 1.005 & 1.01 & 1.026 & \textbf{1.065}
		& 1.073 & 1.122   \\ 
		& d9 & 0.784 & 0.861 & 0.854 & 0.965 & 1.015 & 1.047 & 1.089 & 1.14 & 
		\textbf{1.175} & 1.28   \\ 
		& d10 & 0.656 & 0.708 & 0.723 & 0.844 & 0.905 & 0.977 & 1.069 & 1.156 & 1.266
		& \textbf{1.601}   \\ \hline\hline
	\end{tabular}}
\caption{Results of estimated sorting measures in 1976-1980 at different
quantiles when $\protect\rho_{Y^*_w, Y^*_h} = 0$.}
\label{tab:biv2_1976}
\end{table}

\begin{table}[ht]
\resizebox{15cm}{!}{
	\begin{tabular}{l||l|rrrrrrrrrrr}
		\hline\hline
		&  & \multicolumn{10}{c}{Husbands' quantiles}   \\ \hline\hline
		&  & d1 & d2 & d3 & d4 & d5 & d6 & d7 & d8 & d9 & d10   \\ \hline
		\parbox[t]{2mm}{\multirow{10}{*}{\rotatebox[origin=c]{90}{Wives' quantiles}}}
		& d1 & \textbf{1.525} & 1.375 & 1.236 & 1.131 & 0.991 & 0.933 & 0.834 & 0.739
		& 0.683 & 0.553   \\ 
		& d2 & 1.351 & \textbf{1.301} & 1.229 & 1.15 & 1.021 & 0.978 & 0.884 & 0.781
		& 0.724 & 0.581   \\ 
		& d3 & 1.212 & 1.218 & \textbf{1.189} & 1.134 & 1.028 & 1.01 & 0.926 & 0.833
		& 0.792 & 0.659   \\ 
		& d4 & 1.097 & 1.132 & 1.141 & \textbf{1.114} & 1.025 & 1.031 & 0.967 & 0.897
		& 0.873 & 0.724   \\ 
		& d5 & 0.965 & 1.024 & 1.059 & 1.069 & \textbf{1.02} & 1.028 & 1.009 & 0.962
		& 0.989 & 0.875   \\ 
		& d6 & 0.889 & 0.966 & 1.019 & 1.039 & 1.01 & \textbf{1.047} & 1.041 & 1.005
		& 1.046 & 0.938   \\ 
		& d7 & 0.809 & 0.885 & 0.958 & 0.991 & 0.986 & 1.048 & \textbf{1.061} & 1.069
		& 1.154 & 1.039   \\ 
		& d8 & 0.73 & 0.803 & 0.879 & 0.927 & 0.962 & 1.046 & 1.081 & \textbf{1.115}
		& 1.253 & 1.205   \\ 
		& d9 & 0.66 & 0.717 & 0.786 & 0.845 & 0.922 & 1.009 & 1.092 & 1.174 & 
		\textbf{1.386} & 1.408   \\ 
		& d10 & 0.569 & 0.627 & 0.691 & 0.77 & 0.876 & 0.962 & 1.093 & 1.218 & 1.488
		& \textbf{1.706}   \\ \hline\hline
	\end{tabular}}
\caption{Results of estimated sorting measures in 2018-2022 at different
quantiles when $\protect\rho_{Y^*_w, Y^*_h} = 0$.}
\label{tab:biv2_2018}
\end{table}



\subsection{Marital Sorting with Employment Selection}

\label{sec:model}

The evidence above is based on the subpopulation of working married couples
as the BDR model above does not account for selection into employment. 
As our sample period witnessed drastic increases in the market participation
of married females it is possible that both the compositions of the
subpopulations of working married females and the working married males with
working spouses have changed. We now incorporate the participation decision
of each spouse, while allowing for a relationship between these decisions,
by extending the BDR model to incorporate endogenous employment decisions.
This represents an application of the CFL approach with some provisos we
outline below. We begin with some necessary preliminaries.

\subsection{The sample selection model}

Consider the vector of random variables $(D_{w}^{\ast },D_{h}^{\ast
},Y_{w}^{\ast },Y_{h}^{\ast })$, where $D_{w}^{\ast }$ and $D_{h}^{\ast }$
are latent variables that determine the employment decision of the wife and
husband, and $Y_{w}^{\ast }$ and $Y_{h}^{\ast }$ are the offered wages to
the wife and husband. $Z$ is a vector of observed characteristics of which $%
X $ is a subset . We make the following assumptions regarding the joint CDF
of $(D_{w}^{\ast },D_{h}^{\ast },Y_{w}^{\ast },Y_{h}^{\ast })$ conditional
on $Z $:

\begin{assumption}[LGR, Relevance and Exclusion]
\label{ass:lgr} (1) For $(d_{w},d_{h},y_{w},y_{h}) \in \mathbb{R}^4,$ 
\begin{equation*}
F_{D_{w}^{\ast },D_{h}^{\ast },Y_{w}^{\ast },Y_{h}^{\ast } \mid
Z}(d_{w},d_{h},y_{w},y_{h} \mid z) = \Phi _{4}(\boldsymbol{\mu}
(d_{w},d_{h},y_{w},y_{h},z); \boldsymbol{\Sigma}
(d_{w},d_{h},y_{w},y_{h},z)),
\end{equation*}
where $\Phi _{4}(\cdot; \boldsymbol{\Sigma})$ is the standard tetravariate
normal CDF with correlation matrix $\boldsymbol{\Sigma}$, 
\begin{equation*}
\boldsymbol{\mu} (d_{w},d_{h},y_{w},y_{h},z)=\left( 
\begin{array}{c}
\mu _{D_{w}^{\ast}}(d_{w},z) \\ 
\mu _{D_{h}^{\ast}}(d_{h},z) \\ 
\mu _{Y_{w}^{\ast}}(y_{w},z) \\ 
\mu _{Y_{h}^{\ast}}(y_{h},z)%
\end{array}%
\right)
\end{equation*}%
and 
\begin{multline*}
\boldsymbol{\Sigma} (d_{w},d_{h},y_{w},y_{h},z)= \\
\left( 
\begin{array}{cccc}
1 & \rho _{D_{w}^{\ast },D_{h}^{\ast }}(d_{w},d_{h},z) & - \rho
_{D_{w}^{\ast },Y_{w}^{\ast }}(d_{w},y_{w},z) & - \rho _{D_{w}^{\ast
},Y_{h}^{\ast }}(d_{w},y_{h},z) \\ 
\rho _{D_{w}^{\ast },D_{h}^{\ast }}(d_{w},d_{h},z) & 1 & - \rho
_{D_{h}^{\ast },Y_{w}^{\ast }}(d_{h},y_{w},z) & - \rho _{D_{h}^{\ast
},Y_{h}^{\ast }}(d_{h},y_{h},z) \\ 
- \rho _{D_{w}^{\ast },Y_{w}^{\ast }}(d_{w},y_{w},z) & - \rho _{D_{h}^{\ast
},Y_{w}^{\ast }}(d_{h},y_{w},z) & 1 & \rho _{Y_{w}^{\ast },Y_{h}^{\ast
}}(y_{w},y_{h},z) \\ 
- \rho _{D_{w}^{\ast },Y_{h}^{\ast }}(d_{w},y_{h},z) & - \rho _{D_{h}^{\ast
},Y_{h}^{\ast }}(d_{h},y_{h},z) & \rho _{Y_{w}^{\ast },Y_{h}^{\ast
}}(y_{w},y_{h},z) & 1%
\end{array}%
\right)
\end{multline*}%
is non-singular almost everywhere in $z$; (2) $\mu _{D_{w}^{\ast}}(d_{w},z)
\neq \mu _{D_{w}^{\ast}}(d_{w},z^{\prime }) $ and $\mu
_{D_{h}^{\ast}}(d_{h},z^{\prime \prime }) \neq \mu
_{D_{h}^{\ast}}(d_{h},z^{\prime \prime \prime })$ for some $z = (z_1,x)$ , $%
z^{\prime } = (z_1^{\prime },x)$, $z^{\prime \prime }= (z_1^{\prime \prime
},x)$, and $z^{\prime \prime \prime} = (z_1^{\prime \prime \prime},x)$; and
(3) $\mu _{Y_{w}^{\ast }}(y_{w},z) = \mu _{Y_{w}^{\ast }}(y_{w},x)$, $\mu
_{Y_{h}^{\ast }}(y_{h},z) = \mu _{Y_{h}^{\ast }}(y_{h},x)$, $%
\rho_{D_{w}^{\ast },Y_{w}^{\ast }}(d_{w},y_{w},z) = \rho _{D_{w}^{\ast
},Y_{w}^{\ast }}(d_{w},y_{w},x)$, 
and $\rho _{D_{h}^{\ast},Y_{h}^{\ast }}(d_{h},y_{h},z) = \rho _{D_{h}^{\ast
},Y_{h}^{\ast }}(d_{h},y_{h},x)$, 
for $z = (z_1,x)$. 
\end{assumption}

Assumption \ref{ass:lgr}(1) is similar to the LGR of a joint CDF. In
contrast to the bivariate case, this representation restricts some features
of the joint tetravariate distribution. While the univariate and bivariate
marginals remain unrestricted, some restrictions are imposed on the
trivariate marginals and joint tetravariate distributions. To highlight
this, note that the local dependence between any pair of random variables,
as measured by the corresponding component of the matrix $\boldsymbol{\Sigma 
}(d_{w},d_{h},y_{w},y_{h},z)$, does not depend on the value of the other
components. For example, local pairwise independence of all the components,
that is $\boldsymbol{\Sigma }(d_{w},d_{h},y_{w},y_{h},z)$ equal to the
identity matrix, implies joint local independence of all the components.

Assumption \ref{ass:lgr}(3) embodies exclusion restrictions on the marginal
distributions of $Y_w^*$ and $Y_h^*$, and the local dependence matrix $%
\boldsymbol{\Sigma} (d_{w},d_{h},y_{w},y_{h},z)$. Thus, $Y_w^*$ and $Y_h^*$
are independent of the components of $Z$ not included in $X$. Moreover,
these components do not affect the local dependence between all the
components in $(D_{w}^{\ast },D_{h}^{\ast },Y_{w}^{\ast },Y_{h}^{\ast })$.
Assumption \ref{ass:lgr}(2) is a relevance condition of these excluded
components of $Z$ on the expectations of $D_w^*$ and $D_h^*$.

The observed variables $(D_{w},D_{h},Y_{w},Y_{h})$ are related to the latent
variables as: 
\begin{equation*}
\begin{split}
D_{w}& =\mathbf{1}(D_{w}^{\ast }\leq 0), \\
D_{h}& =\mathbf{1}(D_{h}^{\ast }\leq 0),
\end{split}%
\end{equation*}%
where $D_{w}$ ($D_{h})$ equals 1 when the wife (husband) is working FTFY and
0 otherwise. Moreover, 
\begin{equation*}
\begin{split}
Y_{w}& =Y_{w}^{\ast }~\mbox{ if}~D_{w}=1~\mbox{and}~D_{h}=1 \\
Y_{h}& =Y_{h}^{\ast }~\mbox{ if}~D_{w}=1~\mbox{and}~D_{h}=1
\end{split}%
\end{equation*}%
where $Y_{w}$ and $Y_{h}$ are the FTFY hourly wages of the wife and husband.
These are only observed for FTFY working couples.

We show in Appendix \ref{app:id} that all the parameters of the LGR of the
latent variables are identified from the distribution of the observed
variables.

\begin{theorem}[Identification Under Employment Selection]
\label{thm:id} Under Assumption \ref{ass:lgr}, $\boldsymbol{\mu}
(0,0,y_{w},y_{h},z)$ and $\boldsymbol{\Sigma} (0,0,y_{w},y_{h},z)$ are
identified from the joint distribution of $(D_{w},D_{h},Y_{w},Y_{h},Z)$.
\end{theorem}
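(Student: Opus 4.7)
The argument proceeds by sequentially recovering the components of $\boldsymbol{\mu}$ and $\boldsymbol{\Sigma}$, exploiting the Gaussian LGR structure together with the strict monotonicity of multivariate normal CDFs in their correlation entries (Lemma \ref{lemma1}). The first step handles the selection system alone: the marginals $P(D_j=1 \mid Z=z) = \Phi(\mu_{D_j^*}(0,z))$ pin down $\mu_{D_w^*}(0,z)$ and $\mu_{D_h^*}(0,z)$ by inverting $\Phi$, and the bivariate probit identity $P(D_w=1, D_h=1 \mid Z=z) = \Phi_2(\mu_{D_w^*}(0,z), \mu_{D_h^*}(0,z); \rho_{D_w^*, D_h^*}(0,0,z))$ then delivers $\rho_{D_w^*, D_h^*}(0,0,z)$ by one-dimensional inversion in the correlation argument.

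The plan is next to recover the own-wage parameters using the trivariate observables $P(Y_j \leq y_j, D_w=1, D_h=1 \mid Z=z)$. For $j=w$ this equals $\Phi_3(\mu_{Y_w^*}(y_w,x), \mu_{D_w^*}(0,z), \mu_{D_h^*}(0,z); \Sigma_3)$, where $\Sigma_3$ contains the already-known $\rho_{D_w^*, D_h^*}(0,0,z)$, the $x$-constant $\rho_{D_w^*, Y_w^*}(0, y_w, x)$ (Assumption \ref{ass:lgr}(3)), and the $z$-dependent cross term $\rho_{D_h^*, Y_w^*}(0, y_w, z)$. Fix $x$ and vary the excluded coordinate $z_1$: Assumption \ref{ass:lgr}(2) forces nontrivial variation in the propensity-score indices $(\mu_{D_w^*}(0,z), \mu_{D_h^*}(0,z))$ while $\mu_{Y_w^*}(y_w,x)$ and $\rho_{D_w^*, Y_w^*}(0, y_w, x)$ stay fixed. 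Lemma \ref{lemma1} gives strict monotonicity of $\Phi_3$ in every mean and correlation entry, so this family of equations as $z_1$ moves injectively pins down the two $x$-constants; inverting the same identity at each $z$ afterwards recovers $\rho_{D_h^*, Y_w^*}(0, y_w, z)$. A symmetric argument on the husband's trivariate moment identifies $\mu_{Y_h^*}(y_h,x)$, $\rho_{D_h^*, Y_h^*}(0, y_h, x)$, and $\rho_{D_w^*, Y_h^*}(0, y_h, z)$.

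At this point every component of $\boldsymbol{\mu}(0,0,y_w,y_h,z)$ and every off-diagonal of $\boldsymbol{\Sigma}(0,0,y_w,y_h,z)$ is known except $\rho_{Y_w^*, Y_h^*}(y_w, y_h, z)$. The fully joint observable $P(Y_w \leq y_w, Y_h \leq y_h, D_w=1, D_h=1 \mid Z=z) = \Phi_4(\boldsymbol{\mu}; \boldsymbol{\Sigma})$ is therefore a one-dimensional equation in this last parameter, and Lemma \ref{lemma1} shows $\Phi_4$ is strictly monotone in $\rho_{Y_w^*, Y_h^*}$, so a final inversion closes the argument.

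The hard part is the middle step: at any single $z$ the scalar trivariate-probit equation contains three unknowns, two of which depend only on $x$ and one of which depends on the full $z$. The core of the identification is that holding $x$ fixed forces the two $x$-constants to be common along the $z_1$-slice while $\rho_{D_h^*, Y_w^*}$ is free to adjust with $z$; the strict positivity of the partial derivatives of $\Phi_3$ supplied by Lemma \ref{lemma1} then rules out any alternative pair of $x$-constants compatible with the entire slice of observed probabilities. If the support of $z_1$ proves too thin for this slice-based argument on its own, the tetravariate moment used in the last step can be invoked earlier as an additional equation to supply the missing identifying variation.
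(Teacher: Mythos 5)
Your steps (1), the pointwise recovery of the cross-correlations, and the final tetravariate inversion for $\rho_{Y_w^*,Y_h^*}$ coincide with the paper's proof: after everything else is known, each of those is a one-equation/one-unknown inversion whose uniqueness follows from the strict monotonicity of $\Phi_N$ in a single correlation entry (Lemma \ref{lemma1}). The genuine gap is exactly where you locate the ``hard part'': your middle step does not establish what it claims. The paper does not prove this step at all --- it identifies $\mu_{Y_j^*}(y_j,x)$ and $\rho_{D_j^*,Y_j^*}(y_j,x)$ by invoking Theorem 1 of CFL, which is where the nontrivial exclusion-based argument lives. Your attempt to reprove it from Lemma \ref{lemma1} fails for two reasons. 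First, a counting problem: the paper's proof explicitly treats the worst case in which the excluded covariate $Z_1$ is binary, so your $z_1$-slice yields only two equations; the unknowns are $\mu_{Y_w^*}(y_w,x)$, $\rho_{D_w^*,Y_w^*}(y_w,x)$, and the cross term $\rho_{D_h^*,Y_w^*}$, which Assumption \ref{ass:lgr}(3) does \emph{not} restrict to be free of $z_1$ (the formal exclusion covers only the own correlations and the wage locations), so the cross term contributes one unknown per value of $z_1$ and the system is underdetermined. Second, even granting rich support for $z_1$ and a $z_1$-invariant cross term, coordinate-wise strict monotonicity of $\Phi_3$ in each mean and correlation entry does not imply injectivity of the map from the pair $(\mu_{Y_w^*}(y_w,x),\rho_{D_w^*,Y_w^*}(y_w,x))$ to the observed probabilities: the two unknowns can co-move along a level curve of $\Phi_3$ at each propensity-score value, and ruling out such compensating adjustments across the slice requires the finer structure of the Gaussian selection model exploited in CFL's proof, not just positivity of partial derivatives. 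This trade-off between the location and the selection correlation is the classical difficulty in Heckman-type identification, so it cannot be waved away by monotonicity.

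Your proposed fallback --- invoking the tetravariate moment earlier to supply ``missing identifying variation'' --- makes the problem worse rather than better, since that equation introduces the additional unknowns $\rho_{Y_w^*,Y_h^*}(y_w,y_h,\cdot)$, $\mu_{Y_h^*}(y_h,x)$, and the husband-side correlations before they have been identified. The fix is simply to do what the paper does: establish (or cite) the univariate-outcome selection identification result of CFL for each spouse's own wage parameters under the relevance and exclusion conditions of Assumption \ref{ass:lgr}(2)--(3), and only then run your pointwise trivariate and tetravariate inversions, for which your use of Lemma \ref{lemma1} is correct and matches the paper's steps (3) and (4).
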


\section{Measures of sorting in the presence of selection}

\label{sec:sorting}

We consider how the expressions for the contingency tables entries change
when we account for endogenous sample selection. They are now written: 
\begin{equation}
\begin{split}
& s(\underline{y}_{w},\overline{y}_{w},\underline{y}_{h},\overline{y}%
_{h}\mid D_{w}=1,D_{h}=1)= \\
& \frac{\Pr (\underline{y}_{w}<Y_{w}\leq \overline{y}_{w},\underline{y}%
_{h}<Y_{h}\leq \overline{y}_{h}\mid D_{w}=1,D_{h}=1)}{\Pr (\underline{y}%
_{w}<Y_{w}\leq \overline{y}_{w}\mid D_{w}=1,D_{h}=1)\Pr (\underline{y}%
_{h}<Y_{h}\leq \overline{y}_{h}\mid D_{w}=1,D_{h}=1)},
\end{split}
\label{eq:sorting}
\end{equation}%
which corresponds to the ratio of the joint distribution of $(Y_{w},Y_{h})$
to the product of the marginals in the selected population. It is equal to 1
when the wages of the wives and husbands are independent from each other in
the selected population.

The values of $s$ are identified from the wage data among couples in which
both partners work and does not depend on the identification of the sample
selection model. Nevertheless, it is interesting to understand how changes
in the sorting measure can be attributed to changes in the model's
parameters. This can be conducted via counterfactuals. Note that the
numerator of this sorting measure can be written as: 
\begin{equation}
\int_{\mathcal{Z}}\Pr (\underline{y}_{w}<Y_{w}\leq \overline{y}_{w},%
\underline{y}_{h}<Y_{h}\leq \overline{y}_{h}\mid
D_{w}=1,D_{h}=1,Z=z)dF_{Z\mid D_{w},D_{h}}(z\mid 1,1)  \label{eq:integral}
\end{equation}%
where the integrand equals: 
\begin{equation*}
\begin{split}
\Pr (\underline{y}_{w}& <Y_{w}\leq \overline{y}_{w},\underline{y}%
_{h}<Y_{h}\leq \overline{y}_{h}\mid D_{w}=1,D_{h}=1,Z=z)= \\
& \frac{\Pr (\underline{y}_{w}<Y_{w}\leq \overline{y}_{w},\underline{y}%
_{h}<Y_{h}\leq \overline{y}_{h},D_{w}=1,D_{h}=1\mid Z=z)}{\Pr
(D_{w}=1,D_{h}=1\mid Z=z)}= \\
& \frac{\Pr (\underline{y}_{w}<Y_{w}^{\ast }\leq \overline{y}_{w},\underline{%
y}_{h}<Y_{h}^{\ast }\leq \overline{y}_{h},D_{w}^{\ast }\leq 0,D_{h}^{\ast
}\leq 0\mid Z=z)}{\Phi _{2}(\mu _{D_{w}^{\ast }}(0,z),\mu _{D_{h}^{\ast
}}(0,z),\rho _{D_{w}^{\ast },D_{h}^{\ast }}(0,0,x))},
\end{split}%
\end{equation*}%
with 
\begin{multline*}
\Pr (\underline{y}_{w}<Y_{w}^{\ast }\leq \overline{y}_{w},\underline{y}%
_{h}<Y_{h}^{\ast }\leq \overline{y}_{h},D_{w}^{\ast }\leq 0,D_{h}^{\ast
}\leq 0\mid Z=z)= \\
\Phi _{4}(\boldsymbol{\mu }(0,0,\overline{y}_{w},\overline{y}_{h},z);%
\boldsymbol{\Sigma }(0,0,\overline{y}_{w},\overline{y}_{h},x))-\Phi _{4}(%
\boldsymbol{\mu }(0,0,\underline{y}_{w},\overline{y}_{h},z);\boldsymbol{%
\Sigma }(0,0,\underline{y}_{w},\overline{y}_{h},x)) \\
-\Phi _{4}(\boldsymbol{\mu }(0,0,\overline{y}_{w},\underline{y}_{h},z);%
\boldsymbol{\Sigma }(0,0,\overline{y}_{w},\underline{y}_{h},x))+\Phi _{4}(%
\boldsymbol{\mu }(0,0,\underline{y}_{w},\underline{y}_{h},z);\boldsymbol{%
\Sigma }(0,0,\underline{y}_{w},\underline{y}_{h},x)).
\end{multline*}%
%
%
%
%
%
%
%
%
%
%
%
%
%
%
%
%
%
%
%
%
%
%
%
%
%
%
%
%
%
%
%
%
%
%
%
%
%
%
%
%
%
%
%
%
%

\subsection{Counterfactuals based on a specific time period}

For each time period counterfactuals can be obtained by replacing $%
\boldsymbol{\Sigma }(d_{w},d_{h},y_{w},y_{h},x)$ with an alternative
positive definite $\widetilde{\boldsymbol{\Sigma }}%
(d_{w},d_{h},y_{w},y_{h},x)$ reflecting different model parameters. For
example, setting $\rho _{D_{w}^{\ast },D_{h}^{\ast }}(0,0,x)$ to zero
examines the impact of making the employment decisions for husbands and
wives conditionally independent. Setting $\rho _{D_{w}^{\ast },Y_{w}^{\ast
}}(0,y_{w},x)$, $\rho _{D_{h}^{\ast },Y_{h}^{\ast }}(0,y_{h},x)$, $\rho
_{D_{w}^{\ast },Y_{h}^{\ast }}(0,y_{h},x)$, and $\rho _{D_{h}^{\ast
},Y_{w}^{\ast }}(0,y_{w},x)$ to zero eliminates the role of selection while
setting $\rho _{Y_{w}^{\ast },Y_{h}^{\ast }}(y_{w},y_{h},x)$ to zero
eliminates sorting on the unobservables correlated with wages. Sorting may
still arise here on the basis of observed characteristics. Another potential
counterfactual distribution integrates (\ref{eq:integral}) over the
distribution of $Z$ for the whole population of married couples in our
sample rather than the population of FTFY working couples.

\subsection{Counterfactuals based on different periods of time}

To investigate intertemporal changes in sorting we examine counterfactuals
over time. We start by rewriting the elements of the contingency matrix in %
\eqref{eq:sorting} as: 
\begin{multline*}
s(\underline{y}_{w},\overline{y}_{w},\underline{y}_{h},\overline{y}_{h}\mid
D_{w}=1,D_{h}=1) \\
=\scriptstyle{\ \frac{F_{Y_{w},Y_{h}\mid D_{w},D_{h}}(\overline{y}_{w},%
\overline{y}_{h}\mid 1,1)-F_{Y_{w},Y_{h}\mid D_{w},D_{h}}(\overline{y}_{w},%
\underline{y}_{h}\mid 1,1)-F_{Y_{w},Y_{h}\mid D_{w},D_{h}}(\underline{y}_{w},%
\overline{y}_{h}\mid 1,1)+F_{Y_{w},Y_{h}\mid D_{w},D_{h}}(\underline{y}_{w},%
\underline{y}_{h}\mid 1,1)}{[F_{Y_{w}\mid D_{w},D_{h}}(\overline{y}_{w}\mid
1,1)-F_{Y_{w}\mid D_{w},D_{h}}(\underline{y}_{w}\mid 1,1)][F_{Y_{h}\mid
D_{w},D_{h}}(\overline{y}_{h}\mid 1,1)-F_{Y_{h}\mid D_{w},D_{h}}(\underline{y%
}_{h}\mid 1,1)]}},
\end{multline*}%
where 
\begin{equation*}
F_{Y_{w},Y_{h}\mid D_{w},D_{h}}(y_{w},y_{h}\mid 1,1)=\Pr (Y_{w}\leq
y_{w},Y_{h}\leq y_{h}\mid D_{w}=1,D_{h}=1),
\end{equation*}%
and 
\begin{equation*}
F_{Y_{j}\mid D_{w},D_{h}}(y_{j}\mid 1,1)=\Pr (Y_{j}\leq y_{j}\mid
D_{w}=1,D_{h}=1),\quad j=w,h.
\end{equation*}%
%
%
%
%
%
%
%
%
%
%
%
%
%
%
%
%
%
%
%
%
%
%

We calculate counterfactual joint distributions assuming employment
selection is as in year $q$, the wage structure is as in year $r,$ and the
composition of the work force is as in year $s$ by: 
\begin{multline}
F_{Y_{w},Y_{h}\mid D_{w},D_{h}}^{q,r,s}(y_{w},y_{h}\mid 1,1)
\label{eq:cdist} \\
:=\int {\Pr }^{q,r}(Y_{w}\leq y_{w},Y_{h}\leq y_{h}\mid
D_{w}=1,D_{h}=1,Z=z)dF_{Z}^{s}(z)
\end{multline}%
where $F_{Z}^{s}$ is the distribution of $Z$ in year $s$, and 
\begin{multline*}
{\Pr }^{qr}(Y_{w}\leq y_{w},Y_{h}\leq y_{h}\mid D_{w}=1,D_{h}=1,Z=z) \\
=\frac{\Phi _{4}(\boldsymbol{\mu }^{q,r}(0,0,y_{w},y_{h},z);\boldsymbol{%
\Sigma }^{q,r}(0,0,y_{w},y_{h},x))}{\Phi _{2}(\mu _{D_{w}^{\ast
}}^{q}(0,z),\mu _{D_{h}^{\ast }}^{q}(0,z),\rho _{D_{w}^{\ast },D_{h}^{\ast
}}^{q}(0,0,x))},
\end{multline*}%
with 
\begin{equation*}
\boldsymbol{\mu }^{q,r}(0,0,y_{w},y_{h},z)=\left( 
\begin{array}{c}
\mu _{D_{w}^{\ast }}^{q}(0,z) \\ 
\mu _{D_{h}^{\ast }}^{q}(0,z) \\ 
\mu _{Y_{w}^{\ast }}^{r}(y_{w},x) \\ 
\mu _{Y_{h}^{\ast }}^{r}(y_{h},x)%
\end{array}%
\right)
\end{equation*}%
and\footnote{%
A practical problem is that there is no guarantee that $\boldsymbol{\Sigma }%
^{q,r}(0,0,y_{w},y_{h},x)$ is positive definite. Nevertheless, we did not
encounter this problem in our empirical analysis.} 
\begin{multline*}
\boldsymbol{\Sigma }^{q,r}(0,0,y_{w},y_{h},x)= \\
\left( 
\begin{array}{cccc}
1 & \rho _{D_{w}^{\ast },D_{h}^{\ast }}^{q}(0,0,x) & -\rho _{D_{w}^{\ast
},Y_{h}^{\ast }}^{q}(0,y_{h},x) & -\rho _{D_{w}^{\ast },Y_{h}^{\ast
}}^{q}(0,y_{h},x) \\ 
\rho _{D_{w}^{\ast },D_{h}^{\ast }}^{q}(0,0,x) & 1 & -\rho _{D_{h}^{\ast
},Y_{w}^{\ast }}^{q}(0,y_{w},x) & -\rho _{D_{h}^{\ast },Y_{h}^{\ast
}}^{q}(0,y_{h},x) \\ 
-\rho _{D_{w}^{\ast },Y_{w}^{\ast }}^{q}(0,y_{1},x) & -\rho _{D_{h}^{\ast
},Y_{w}^{\ast }}^{q}(0,y_{w},x) & 1 & \rho _{Y_{w}^{\ast },Y_{h}^{\ast
}}^{r}(y_{w},y_{h},x) \\ 
-\rho _{D_{w}^{\ast },Y_{h}^{\ast }}^{q}(0,y_{h},x) & -\rho _{D_{h}^{\ast
},Y_{h}^{\ast }}^{q}(0,y_{h},x) & \rho _{Y_{w}^{\ast },Y_{h}^{\ast
}}^{r}(y_{w},y_{h},x) & 1%
\end{array}%
\right) .
\end{multline*}%
%
%
%
%
%
%
%
%
%
%
%
%
%
%
%
%
%
%
%
%
%
%
%
%
%
%
%
%
%
%
%
%
%
%
%
%
%
%
%
%
%
%
%
%
%
%
%
Comparable counterfactual marginal distributions can be computed for the
denominator.

Comparisons should be based on wage levels of husbands and wives which
incorporate the changes in the wage distributions. Since the real wages of
males decreased at the bottom of the distribution while the real wages
increased over the whole distribution for females, using fixed wage levels
is not appropriate. For example, we are more likely to find a husband
earning more than 20 dollars an hour with a wife earning more than 15
dollars an hour in 2020 than in 1976. This, however, does not imply a change
in sorting behavior. Rather, it reflects the changes in the likelihoods from
the respective marginal distributions. As we use quantiles rather than fixed
wages to make these comparisons, we need the counterfactual quantiles of the
corresponding marginal wage distributions. For example, the $\tau $-th
quantile of the marginal wage distribution of the wives is: 
\begin{multline}
Q_{Y_{w}}^{q,r,s}(\tau ) \\
=\inf \left\{ y\in \mathbb{R}:\int {\Pr }^{q,r}(Y_{w}\leq y,Y_{h}\leq \infty
\mid D_{w}=1,D_{h}=1,Z=z)dF_{Z}^{s}(z)\geq \tau \right\} .
\label{eq:quantile_w}
\end{multline}%
%
%
%
%
%
%
%
%
%
%
%
%
%
%
%
%
%
%
%
%
%
%
%
%
%
%
%
%
%
%
%
%
%
%
%
%
%
%
%
%
%
%
%
%
%

\section{Estimation}

\label{sec:estimation}

We consider a semiparametric BDR model with selection that imposes:

\begin{assumption}[BDR with Selection]
\label{ass:bdr-selection} (1) $\mu_{Y^*_j}(y,x) = P_{j}(x)^{\prime }\beta
_{j}(y),$ $\mu_{D^*_j}(0,z) = Q_{j}(z)^{\prime }\gamma_j,$ where $P_{j}$ and 
$Q_{j}$ are transformations of $x$ and $z$, and $\beta _{j}(y)$ and $\gamma
_{j}$ are vectors of coefficients, $j \in \{w,h\}$; and (2) $\boldsymbol{%
\Sigma} (d_{w},d_{h},y_{w},y_{h},x) = \boldsymbol{\Sigma}
(d_{w},d_{h},y_{w},y_{h})$.
\end{assumption}

\subsection{Estimation of the local model parameters}

Using Assumption \ref{ass:bdr-selection} with $P_{j}(x)=x$ and $Q_{j}(z)=z$
for $j\in \{w,h\}$ we estimate the parameters in 2 steps:


\begin{enumerate}
\item Bivariate probit to obtain $\gamma _{w}$, $\gamma _{h}$ and $\rho
_{D_{w}^{\ast },D_{h}^{\ast }}:=\rho _{D_{w}^{\ast },D_{h}^{\ast }}(0,0)$
using: %
%
\begin{equation*}
\mathbb{P}(D_{w}=1,D_{h}=1\mid Z=z)=\Phi _{2}(z^{\prime }\gamma
_{w},z^{\prime }\gamma _{h};\rho _{D_{w}^{\ast },D_{h}^{\ast }}).
\end{equation*}%
%
%
%
%
%
%
%
%
%
%
%
%
%
%
%
%
%
%
%
%
%
%
%
%
%
%
%
%
%
%
%
%
%
%
%
%
%
%
%
%
%
%
%
%
%
We denote the estimators as $\widehat{\gamma }_{w},$ $\widehat{\gamma }_{h}$
and $\widehat{\rho }_{D_{w}^{\ast },D_{h}^{\ast }}$.

\item Multivariate probit with sample selection correction to estimate the
remaining parameters using: 
\begin{multline*}
\mathbb{P}(Y_{w}\leq y_{w};Y_{h}\leq y_{w}|D_{w}=1;D_{h}=1,X=x,Z=z)\propto \\
\Phi _{4}(z^{\prime }\gamma _{w},z^{\prime }\gamma _{h},x^{\prime }\beta
_{w}(y_{w}),x^{\prime }\beta _{h}(y_{h});\boldsymbol{\Sigma }%
(0,0,y_{w},y_{h})),
\end{multline*}%
which can be estimated by a small adaption of the multivariate probit model
after plugging in the first-stage estimators $\widehat{\gamma }_{w},$ $%
\widehat{\gamma }_{h}$ and $\widehat{\rho }_{D_{w}^{\ast },D_{h}^{\ast }}$.
\end{enumerate}

The first step is standard. The second step is straightforward, but the
calculation of higher-order integrals in the multivariate probit is both
computationally intensive and imprecise. The imprecision is especially
unfortunate when combined with a numerical optimization method that assumes
smoothness of the first and second order derivative of the criterion
function. Therefore, we employ the GHK importance sampling simulator of
Geweke (1991), Hajivassiliou and McFadden (1990) and Keane (1990) to
simulate these probabilities. This importance sampling simulator uses the
result that multivariate normal distributions, conditional on realizations
of one or more elements of the outcome vector, are also normally distributed
but with a lower dimension.

\subsection{Estimation of the measures of sorting}

Estimation of the sorting measures requires estimates of the quantiles of
the marginal distributions of wives' and husbands' wage distributions. These
are obtained via application of the generalized inverse or rearrangement
operator to the plug-in estimator of the distribution. For example: 
\begin{equation}
\widehat{Q}_{Y_{w}}^{q,r,s}(\tau )=\int_{0}^{\infty }\boldsymbol{1}\left\{ 
\frac{1}{n_{s}}\sum_{i=1}^{n_{s}}\widehat{\Pr }^{q,r}(Y_{w}\leq y\mid
D_{w}=1,D_{h}=1,Z^{s}=Z_{i})\leq \tau \right\} dy,  \label{eq:quant1}
\end{equation}%
with: 
\begin{multline*}
\widehat{\Pr }^{q,r}(Y_{w}\leq y_{w}\mid D_{w}=1,D_{h}=1,Z^{s}=z)= \\
\frac{\Phi _{4}(z^{\prime }\widehat{\gamma }_{w}^{q},z^{\prime }\widehat{%
\gamma }_{h}^{q},x^{\prime }\widehat{\beta }_{w}^{r}(y_{w}),\infty ;\widehat{%
\boldsymbol{\Sigma }}^{q,r}(0,0,y_{w},y_{h}))}{\Phi _{2}(z^{\prime }\widehat{%
\gamma }_{w}^{q},z^{\prime }\widehat{\gamma }_{h}^{q};\widehat{\rho }%
_{D_{w}^{\ast },D_{h}^{\ast }}^{q})},
\end{multline*}%
for any $y_{h}$, where $\widehat{\boldsymbol{\Sigma }}%
^{q,r}(0,0,y_{w},y_{h}) $ is the plug-in estimator of $\boldsymbol{\Sigma }%
^{q,r}(0,0,y_{w},y_{h})$, and the integrals are calculated using the GHK
importance sampling method.

Equation \eqref{eq:quant1} is solved using bisection implying that we need
to calculate the term on the right-hand side of that equation for different
trial values of the quantile. The counterfactual quantiles for husbands can
be estimated in a similar way.

The estimator of the measures of sorting is: \vspace{-0.5cm} 
\begin{multline*}
\widehat{s}^{q,r,s}(\underline{y}_{w},\overline{y}_{w},\underline{y}_{h},%
\overline{y}_{h}\mid D_{w}=1,D_{h}=1) \\
=\scriptstyle{\ \frac{\widehat{F}_{Y_{w},Y_{h}\mid D_{w},D_{h}}^{q,r,s}(%
\overline{y}_{w},\overline{y}_{h}\mid 1,1)-\widehat{F}_{Y_{w},Y_{h}\mid
D_{w},D_{h}}^{q,r,s}(\overline{y}_{w},\underline{y}_{h}\mid 1,1)-\widehat{F}%
_{Y_{w},Y_{h}\mid D_{w},D_{h}}^{q,r,s}(\underline{y}_{w},\overline{y}%
_{h}\mid 1,1)+\widehat{F}_{Y_{w},Y_{h}\mid D_{w},D_{h}}^{q,r,s}(\underline{y}%
_{w},\underline{y}_{h}\mid 1,1)}{[\widehat{F}_{Y_{w}\mid
D_{w},D_{h}}^{q,r,s}(\overline{y}_{w}\mid 1,1)-\widehat{F}_{Y_{w}\mid
D_{w},D_{h}}^{q,r,s}(\underline{y}_{w}\mid 1,1)][\widehat{F}_{Y_{h}\mid
D_{w},D_{h}}^{q,r,s}(\overline{y}_{h}\mid 1,1)-\widehat{F}_{Y_{h}\mid
D_{w},D_{h}}^{q,r,s}(\underline{y}_{h}\mid 1,1)]}},
\end{multline*}%
where $\widehat{F}_{Y_{w},Y_{h}\mid D_{w},D_{h}}^{q,r,s}$, $\widehat{F}%
_{Y_{w}\mid D_{w},D_{h}}^{q,r,s}$ and $\widehat{F}_{Y_{h}\mid
D_{w},D_{h}}^{q,r,s}$ are plug-in estimators of $F_{Y_{w},Y_{h}\mid
D_{w},D_{h}}^{q,r,s}$, $F_{Y_{w}\mid D_{w},D_{h}}^{q,r,s}$ and $F_{Y_{h}\mid
D_{w},D_{h}}^{q,r,s}$ in \eqref{eq:cdist}, respectively; and $\underline{y}%
_{j}$ and $\overline{y}_{j}$ evaluated at $\widehat{Q}%
_{Y_{j}}^{q,r,s}((i-1)/10)$ and $\widehat{Q}_{Y_{k}}^{q,r,s}(i/10)$, $%
i=1,\ldots ,10$ and $j\in \{w,h\}$.


\section{Empirical Results}

\label{sec:empirical}

\subsection{Parameter Estimates}

We estimate the model at each decile for both of the males and females wage
distribution for 5 year periods starting from 1976 and ending at 2022. As
the total number of years is not divisible by 5 we use two periods of 6
years. These are 1991-1996 and 1997-2002. We estimate one hundred different
sets of coefficients for each time period corresponding to the different
combinations of deciles for each spouse. As our primary focus is on the role
of selection and sorting, we do not report the 100 sets of parameter
estimates but Figures \ref{fig:covariance_q1}-\ref{fig:covariance_q3}
present the estimates of the different $\rho ^{\prime }$s at the same
quartile of both distributions along with their bootstrapped standard
errors. Results for the other quantiles are available from the authors. Note
that we employ the 100 sets of parameter estimates in conducting the
counterfactuals reported below and the discussion of the $\rho ^{\prime }$s
which follows is only to provide some insight into the behavior of these
selection and sorting parameters.

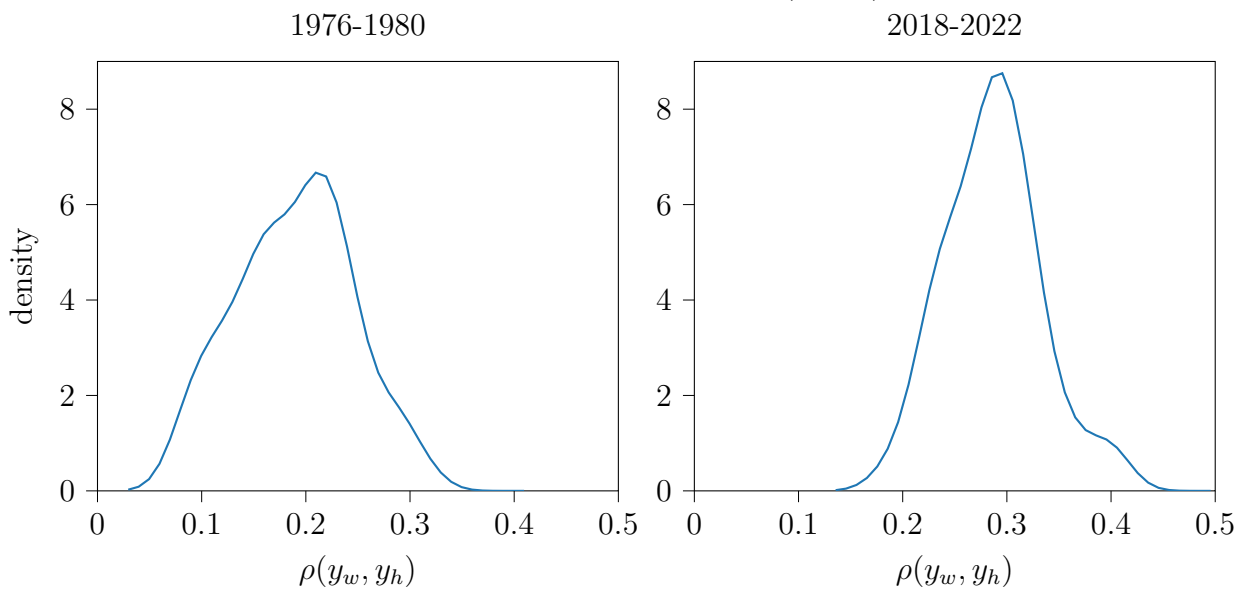
\begin{figure}[tbp]
\caption{Kernel density estimates of $\protect\rho (y_{w},y_{h})$.}
\label{fig:kernelModelComplete}\centering
\begin{tikzpicture}

\definecolor{darkgray176}{RGB}{176,176,176}
\definecolor{steelblue31119180}{RGB}{31,119,180}

\begin{groupplot}[group style={group size=2 by 1}]
\nextgroupplot[
tick align=outside,
tick pos=left,
title={1976-1980},
x grid style={darkgray176},
xlabel={\(\displaystyle \rho (y_{w},y_{h})\)},
xmin=0, xmax=0.5,
xtick style={color=black},
y grid style={darkgray176},
ylabel={density},
ymin=0, ymax=9,
ytick style={color=black}
]
\addplot [thick, steelblue31119180]
table {%
0.0295096931384328 0.0245320271287022
0.0395096931384328 0.0866418705369432
0.0495096931384328 0.24617412667695
0.0595096931384328 0.567460928834484
0.0695096931384328 1.07430458191764
0.0795096931384328 1.70189182923709
0.0895096931384328 2.32021811818542
0.0995096931384328 2.82809901766932
0.109509693138433 3.22001425629461
0.119509693138433 3.56740815576991
0.129509693138433 3.96073182522171
0.139509693138433 4.44596375087894
0.149509693138433 4.96427301294013
0.159509693138433 5.37860766592894
0.169509693138433 5.62207177177008
0.179509693138433 5.79549446085388
0.189509693138433 6.05438769595535
0.199509693138433 6.40842957104477
0.209509693138433 6.66931166304799
0.219509693138433 6.58869127697568
0.229509693138433 6.04359746554019
0.239509693138433 5.1223524255872
0.249509693138433 4.0673097308106
0.259509693138433 3.137470866124
0.269509693138433 2.4783184517364
0.279509693138433 2.06268621167905
0.289509693138433 1.74761437815636
0.299509693138433 1.41049821823194
0.309509693138433 1.0341013687928
0.319509693138433 0.674176321073502
0.329509693138433 0.385774263416689
0.339509693138433 0.190125968677695
0.349509693138433 0.0786992821889005
0.359509693138433 0.0266934637177538
0.369509693138433 0.00727779895035892
0.379509693138433 0.0015745330088707
0.389509693138433 0.000268170303620294
0.399509693138433 3.57893597659996e-05
0.409509693138433 3.73271107165951e-06
};

\nextgroupplot[
tick align=outside,
tick pos=left,
title={2018-2022},
x grid style={darkgray176},
xlabel={\(\displaystyle \rho (y_{w},y_{h})\)},
xmin=0, xmax=0.5,
xtick style={color=black},
y grid style={darkgray176},
ymin=0, ymax=9,
ytick style={color=black}
]
\addplot [thick, steelblue31119180]
table {%
0.135618389685771 0.0145904439336934
0.145618389685771 0.0472131634247217
0.155618389685771 0.123673228732607
0.165618389685771 0.269796602271134
0.175618389685771 0.511216612475476
0.185618389685771 0.883312561727391
0.195618389685771 1.44101712723641
0.205618389685771 2.22784115737025
0.215618389685771 3.20346996709073
0.225618389685771 4.21128881416055
0.235618389685771 5.06990552577378
0.245618389685771 5.74087516613504
0.255618389685771 6.38285616929132
0.265618389685771 7.16934621114094
0.275618389685771 8.03642283793679
0.285618389685771 8.66936208171945
0.295618389685771 8.75278625344969
0.305618389685771 8.18174292006056
0.315618389685771 7.05982978194832
0.325618389685771 5.6175457590857
0.335618389685771 4.1522056897348
0.345618389685771 2.9279501041387
0.355618389685771 2.06476507427926
0.365618389685771 1.54019584735993
0.375618389685771 1.27384056807845
0.385618389685771 1.16286899839018
0.395618389685771 1.07645552024366
0.405618389685771 0.906485229378567
0.415618389685771 0.644701190153559
0.425618389685771 0.372933199422612
0.435618389685771 0.172468974323452
0.445618389685771 0.063258310097776
0.455618389685771 0.0183245830555221
0.465618389685771 0.0041815883651838
0.475618389685771 0.000750300567512341
0.485618389685771 0.000105701049868476
0.495618389685771 1.16773645176147e-05
};
\end{groupplot}

\end{tikzpicture}
\end{figure}

We discuss our model specification before proceeding to the results.
Footnote \ref{footnote:x} listed the variables in $X$ employed in the BDR
specification. We continue to use these variable but identification now
requires some variables in $Z$ which are excluded from $X.$ We follow
Mulligan and Rubinstein (2008) and employ a dummy variable for children at
home, family size, and a dummy variable for children at home under 5 years
of age in this role. These variables are seen as contentious and we discuss
the implications of these choices below.

Consider the correlation between the unobservables in the spouses'
employment equations, $\rho _{D_{w}^{\ast },D_{h}^{\ast }},$ recalling it is
invariant to the location in the wage distributions. The positive value
indicates that the unobservables affecting the spouses' work decisions are
positively correlated. The coefficient is small in magnitude but is
statistically significantly different from zero. It generally increases over
time, except for a dip in the middle of the sample period, and reveals that
the unobserved factors which make spouses both work FTFY are increasingly
correlated over time. We leave a discussion of the marginal impact of
changing this, and other parameters, to the following section.

The estimated selection coefficients for wives ($\rho _{D_{w}^{\ast
},Y_{w}^{\ast }}$) and husbands ($\rho _{D_{h}^{\ast },Y_{h}^{\ast }}$)
capture the correlation between the unobservables affecting the individual's
respective work decision and their own wages. While in conventional
selection models they capture the relationship between the selection
decision and the mean wage, we evaluate this correlation at different
quantiles of the wage distribution. The estimate for wives is particularly
interesting and consistent with earlier evidence using similar
identification approaches (see, for example, Mulligan and Rubinstein, 2008,
and FVVP, 2023b) for FTFY workers. Similar to these earlier papers we find
evidence of negative selection changing to positive selection. At each
quantile there is negative selection in the earlier period with an estimate
around -0.3 which is statistically different from zero. For the latter
sample period the estimate has increased. At Q1/Q1 the estimate has
increased to zero, while at Q2/Q2 and Q3/Q3 the estimates are approximately
0.2 and 0.3. As discussed in FVVP (2023b), the sign of $\rho _{D_{w}^{\ast
},Y_{w}^{\ast }}$ is contentious given its interpretation and its
implication for selection. They note that the sign of the selection terms
appears to reflect the impact of the variables used to explain participation
which are excluded from the wage equation. The change of sign appears to
capture the changing impact of these variables on the participation decision.%
\footnote{%
FVVP (2023b) show that this negative coefficient for the earlier period is
likely to be due to the exclusion restrictions. The reader is referred to
that paper for a detailed discussion of how the exclusion restrictions may
be generating the selection related results. One could reproduce the
counterfactual that follows using the FVVP (2023b) approach which employs an
alternative identification strategy but that is not feasible without making
a number of additional assumptions.} While the sign of $\rho _{D_{w}^{\ast
},Y_{w}^{\ast }}$ is controversial we explore the impact of changing its
value in counterfactual exercises below.

It is not typical to account for selection when estimating male wage
equations. However, it is important to do so here as we do not know apriori
the impact of selection in this model. Moreover, as we account for the role
of male selection on the female wage it is necessary to model the male work
decision. The parameter $\rho _{D_{h}^{\ast },Y_{h}^{\ast }}$ is very poorly
estimated at each of the quantiles reported. At Q1/Q1 and Q2/Q2 it is
negative and large but very imprecisely estimated. At Q3/Q3 it is both
negative and positive but generally imprecisely estimated. We attribute this
result to our inability to identify this parameter from these data.

The parameters $\rho _{D_{w}^{\ast },Y_{h}^{\ast }}$ and $\rho _{D_{h}^{\ast
},Y_{w}^{\ast }}$ also vary by location in the respective wage distributions
and capture the correlation between the unobservables affecting the work
decision of the individual with the unobservables affecting the wage of
their spouse at a certain quantile in the spouses wage distribution. First
consider $\rho _{D_{w}^{\ast },Y_{h}^{\ast }}$. Figures \ref%
{fig:covariance_q1}-\ref{fig:covariance_q3} suggest that at each quartile
the estimate is both negative and positive at each quantile depending on the
time period. However, the confidence bands suggest that we cannot reject
that the estimate is zero for many of the periods with the exception of some
middle years at the first quartile. $\rho _{D_{h}^{\ast },Y_{w}^{\ast }}$
captures how the unobservables driving the husband's participation decision
affects the wife's wage. Given the evidence above on male selection it is
unsurprising that this parameter is imprecisely estimated although it
appears to be generally negative and statistically different from zero.

\begin{figure}[tbp]
\caption{Results for covariance matrix at Q1.}
\label{fig:covariance_q1}%
\resizebox{14cm}{!}{
		\input{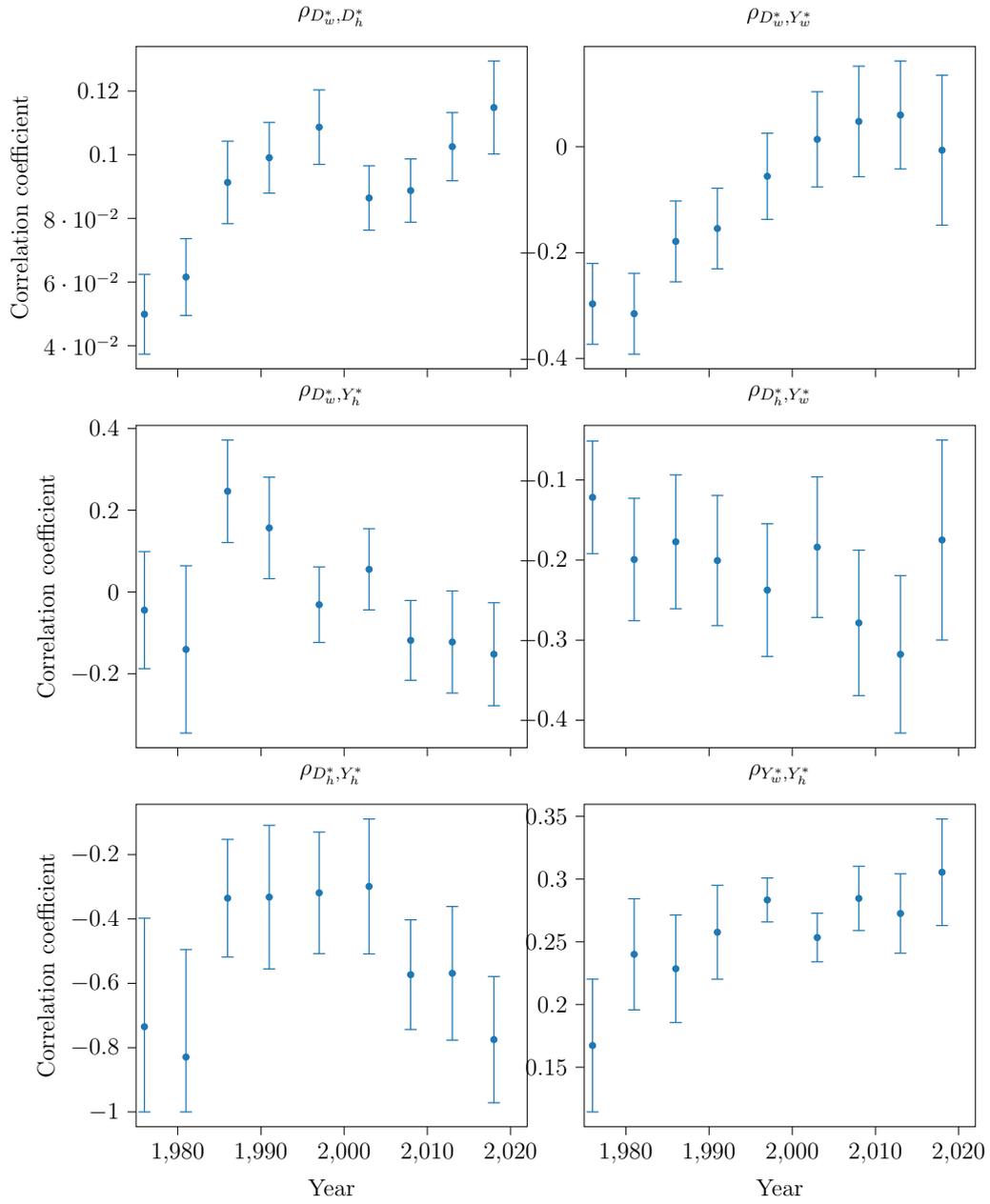}
	}
\end{figure}

\begin{figure}[tbp]
\caption{Results for covariance matrix at Q2.}
\label{fig:covariance_q2}%
\resizebox{14cm}{!}{
		\input{results_bootstraps_50.tex}
	}
\end{figure}

\begin{figure}[tbp]
\caption{Results for covariance matrix at Q3.}
\label{fig:covariance_q3}%
\resizebox{14cm}{!}{
		\input{results_bootstraps_75.tex}
	}
\end{figure}

$\rho _{Y_{w}^{\ast },Y_{h}^{\ast }}$ is an economically important parameter
as it captures sorting via the correlation between the unobservables
generating the husband's and wife's wages. If individuals are sorting with
respect to these unobservables we expect this parameter to be positive.
Moreover, as this parameter also captures the prices of these unobservables
it will capture similar structural effects associated with the implicit
prices of observed characteristics. As noted above, this parameter will
capture factors related to influences, such as cost of living or wage
premia, which are not captured by the conditioning variables but which are
shared by spouses. For example, while the data measure various aspects of
the location in which the spouses live we are unable to distinguish between
those living in costly urban areas. It will also capture other factors
specific to unobserved issues shared by the spouses. For example, it
captures that both may work for the same firm and or went to the same
college and share the costs and/or benefits associated with their wages. The
estimate of this parameter is positive and generally ranges between 0.2 and
0.3. Figure \ref{fig:kernelModelComplete} plots the estimated densities for
this parameter for the starting and ending time periods for all 100 models
while Figures \ref{fig:covariance_q1}-\ref{fig:covariance_q3} present the
estimates at the various quantiles we consider. There is some evidence that
it is increasing at these different quantiles over time but the confidence
intervals do not appear to reject that the impact is constant.

With respect to sorting the only clear and consistent evidence is associated
with $\rho _{Y_{w}^{\ast },Y_{h}^{\ast }}.$ The evidence from the
counterfactuals based on the BDR estimates revealed that setting $\rho
_{Y_{w}^{\ast },Y_{h}^{\ast }}$ to zero drastically reduced the presence of
positive sorting. However, this may now be offset by allowing for selection
even though the estimates of $\rho _{D_{w}^{\ast },D_{h}^{\ast }},$ $\rho
_{D_{w}^{\ast },Y_{w}^{\ast }}$, $\rho _{D_{h}^{\ast },Y_{h}^{\ast }}$, $%
\rho _{D_{w}^{\ast },Y_{h}^{\ast }}$ and $\rho _{D_{h}^{\ast },Y_{w}^{\ast
}} $ do not individually or collectively present a clear story regarding
sorting.

The estimates of the parameters related to the husband's selection decision
warrant further discussion. As male participation is typically treated as
exogenous there is little empirical work on the impact of male selection on
wages. Moreover, as methods to analyze the role of selection at different
quantiles have only recently been developed there is little existing
evidence on how it varies across the wage distribution. However, the
parameters are not precisely estimated in this setting and in some instances
may be unreasonable in magnitude. This is likely to reflect the form of
identification employed. Alternatively, husbands' work decisions may be
exogenous. Given the restricted version of the model is a special case of
the richer model we decided to proceed with the results with this full model
but we also estimated the model treating the husbands' work decisions as
exogenous and conducted the corresponding counterfactuals. We comment on
those results below although to anticipate the main finding, the treatment
of husbands' work decisions as exogenous did not alter the substantive
results. We report the estimates of $\rho _{D_{w}^{\ast },Y_{w}^{\ast }}$, $%
\rho _{D_{w}^{\ast },Y_{h}^{\ast }}$ and $\rho _{D_{h}^{\ast },Y_{w}^{\ast
}} $ from this restricted model in Figures \ref{fig:results_bootstraps_25_1}-%
\ref{fig:results_bootstraps_75_1}.

\pgfplotsset{/pgfplots/group/.cd,
	horizontal sep=2cm,
	vertical sep=1.5cm
} 
\begin{figure}[tbp]
\caption{Results of covariance matrix at Q1 (exogenous selection of
husbands).}
\label{fig:results_bootstraps_25_1}%
\resizebox{14cm}{!}{
\begin{tikzpicture}

\definecolor{color0}{rgb}{0.12156862745098,0.466666666666667,0.705882352941177}

\begin{groupplot}[group style={group size=2 by 2, vertical sep=2cm}]
\nextgroupplot[
scaled x ticks=manual:{}{\pgfmathparse{#1}},
tick align=outside,
tick pos=left,
xlabel={Year}, 
title={\(\displaystyle \rho_{D^*_w, Y^*_w}\)},
x grid style={white!69.0196078431373!black},
xmin=1975, xmax=2022,
xtick style={color=black},
xticklabels={},
y grid style={white!69.0196078431373!black},
ylabel={Correlation coefficient},
ymin=-0.380231798717832, ymax=0.201907427674673,
ytick style={color=black}
]
\path [draw=color0, semithick]
(axis cs:1976,-0.34117979380443)
--(axis cs:1976,-0.208037004731918);

\path [draw=color0, semithick]
(axis cs:1981,-0.3537709247909)
--(axis cs:1981,-0.207212435970557);

\path [draw=color0, semithick]
(axis cs:1986,-0.245118628183437)
--(axis cs:1986,-0.086747673124193);

\path [draw=color0, semithick]
(axis cs:1991,-0.21230969249905)
--(axis cs:1991,-0.0695993704380995);

\path [draw=color0, semithick]
(axis cs:1997,-0.106026033694987)
--(axis cs:1997,0.0297859346671861);

\path [draw=color0, semithick]
(axis cs:2003,-0.065913551233297)
--(axis cs:2003,0.0994885570472036);

\path [draw=color0, semithick]
(axis cs:2008,-0.0151310047700628)
--(axis cs:2008,0.163204037337017);

\path [draw=color0, semithick]
(axis cs:2013,-0.00505456741413636)
--(axis cs:2013,0.175446553747741);

\path [draw=color0, semithick]
(axis cs:2018,-0.13814407943537)
--(axis cs:2018,0.109238432810065);

\addplot [semithick, color0, mark=-, mark size=3, mark options={solid}, only marks]
table {%
1976 -0.34117979380443
1981 -0.3537709247909
1986 -0.245118628183437
1991 -0.21230969249905
1997 -0.106026033694987
2003 -0.065913551233297
2008 -0.0151310047700628
2013 -0.00505456741413636
2018 -0.13814407943537
};
\addplot [semithick, color0, mark=-, mark size=3, mark options={solid}, only marks]
table {%
1976 -0.208037004731918
1981 -0.207212435970557
1986 -0.086747673124193
1991 -0.0695993704380995
1997 0.0297859346671861
2003 0.0994885570472036
2008 0.163204037337017
2013 0.175446553747741
2018 0.109238432810065
};
\addplot [semithick, color0, mark=*, mark size=1.5, mark options={solid}, only marks]
table {%
1976 -0.274608399268174
1981 -0.280491680380728
1986 -0.165933150653815
1991 -0.140954531468575
1997 -0.0381200495139005
2003 0.0167875029069533
2008 0.0740365162834771
2013 0.0851959931668023
2018 -0.0144528233126522
};

\nextgroupplot[
scaled x ticks=manual:{}{\pgfmathparse{#1}},
tick align=outside,
tick pos=left,
xlabel={Year}, 
title={\(\displaystyle \rho_{D^*_w, Y^*_h}\)},
x grid style={white!69.0196078431373!black},
xmin=1975, xmax=2022,
xtick style={color=black},
xticklabels={},
y grid style={white!69.0196078431373!black},
ymin=-0.346423033177417, ymax=0.0167298452250554,
ytick style={color=black}
]
\path [draw=color0, semithick]
(axis cs:1976,-0.165164229458393)
--(axis cs:1976,-0.0253348668625013);

\path [draw=color0, semithick]
(axis cs:1981,-0.245233333189104)
--(axis cs:1981,-0.102275080377767);

\path [draw=color0, semithick]
(axis cs:1986,-0.21414420036154)
--(axis cs:1986,-0.0755654715171382);

\path [draw=color0, semithick]
(axis cs:1991,-0.227171463051967)
--(axis cs:1991,-0.0855488152369635);

\path [draw=color0, semithick]
(axis cs:1997,-0.249554474029588)
--(axis cs:1997,-0.116634273443046);

\path [draw=color0, semithick]
(axis cs:2003,-0.211423091822278)
--(axis cs:2003,-0.0714697127348917);

\path [draw=color0, semithick]
(axis cs:2008,-0.290326922890153)
--(axis cs:2008,-0.137527463904259);

\path [draw=color0, semithick]
(axis cs:2013,-0.329916084159122)
--(axis cs:2013,-0.17420286170748);

\path [draw=color0, semithick]
(axis cs:2018,-0.212516284712181)
--(axis cs:2018,0.000222896206761219);

\addplot [semithick, color0, mark=-, mark size=3, mark options={solid}, only marks]
table {%
1976 -0.165164229458393
1981 -0.245233333189104
1986 -0.21414420036154
1991 -0.227171463051967
1997 -0.249554474029588
2003 -0.211423091822278
2008 -0.290326922890153
2013 -0.329916084159122
2018 -0.212516284712181
};
\addplot [semithick, color0, mark=-, mark size=3, mark options={solid}, only marks]
table {%
1976 -0.0253348668625013
1981 -0.102275080377767
1986 -0.0755654715171382
1991 -0.0855488152369635
1997 -0.116634273443046
2003 -0.0714697127348917
2008 -0.137527463904259
2013 -0.17420286170748
2018 0.000222896206761219
};
\addplot [semithick, color0, mark=*, mark size=1.5, mark options={solid}, only marks]
table {%
1976 -0.0952495481604471
1981 -0.173754206783436
1986 -0.144854835939339
1991 -0.156360139144465
1997 -0.183094373736317
2003 -0.141446402278585
2008 -0.213927193397206
2013 -0.252059472933301
2018 -0.10614669425271
};

\nextgroupplot[
scaled x ticks=manual:{}{\pgfmathparse{#1}},
tick align=outside,
tick pos=left,
title={\(\displaystyle \rho_{Y^*_w, Y^*_h}\)},
x grid style={white!69.0196078431373!black},
xlabel={Year},               
ylabel={Correlation coefficient},
xmin=1975, xmax=2022,
xtick style={color=black},
xticklabels={},
y grid style={white!69.0196078431373!black},
ymin=0.181754211552216, ymax=0.323181509711122,
ytick style={color=black}
]
\path [draw=color0, semithick]
(axis cs:1976,0.188182725104893)
--(axis cs:1976,0.241276530393156);

\path [draw=color0, semithick]
(axis cs:1981,0.250137577229049)
--(axis cs:1981,0.296821618702677);

\path [draw=color0, semithick]
(axis cs:1986,0.26687286721142)
--(axis cs:1986,0.30229608465382);

\path [draw=color0, semithick]
(axis cs:1991,0.275736888985043)
--(axis cs:1991,0.311448427837974);

\path [draw=color0, semithick]
(axis cs:1997,0.27217799300692)
--(axis cs:1997,0.301954092851696);

\path [draw=color0, semithick]
(axis cs:2003,0.244591323606798)
--(axis cs:2003,0.279717943258454);

\path [draw=color0, semithick]
(axis cs:2008,0.246277224848607)
--(axis cs:2008,0.292977088762537);

\path [draw=color0, semithick]
(axis cs:2013,0.230005672455528)
--(axis cs:2013,0.271347620908185);

\path [draw=color0, semithick]
(axis cs:2018,0.27631369841544)
--(axis cs:2018,0.316752996158444);

\addplot [semithick, color0, mark=-, mark size=3, mark options={solid}, only marks]
table {%
1976 0.188182725104893
1981 0.250137577229049
1986 0.26687286721142
1991 0.275736888985043
1997 0.27217799300692
2003 0.244591323606798
2008 0.246277224848607
2013 0.230005672455528
2018 0.27631369841544
};
\addplot [semithick, color0, mark=-, mark size=3, mark options={solid}, only marks]
table {%
1976 0.241276530393156
1981 0.296821618702677
1986 0.30229608465382
1991 0.311448427837974
1997 0.301954092851696
2003 0.279717943258454
2008 0.292977088762537
2013 0.271347620908185
2018 0.316752996158444
};
\addplot [semithick, color0, mark=*, mark size=1.5, mark options={solid}, only marks]
table {%
1976 0.214729627749024
1981 0.273479597965863
1986 0.28458447593262
1991 0.293592658411509
1997 0.287066042929308
2003 0.262154633432626
2008 0.269627156805572
2013 0.250676646681857
2018 0.296533347286942
};
\end{groupplot}

\end{tikzpicture}
	}
\end{figure}
\begin{figure}[tbp]
\caption{Results of covariance matrix at Q2 (exogenous selection of
husbands).}
\label{fig:results_bootstraps_50_1}%
\resizebox{14cm}{!}{
\begin{tikzpicture}

\definecolor{color0}{rgb}{0.12156862745098,0.466666666666667,0.705882352941177}

\begin{groupplot}[group style={group size=2 by 2, vertical sep=2cm}]
\nextgroupplot[
scaled x ticks=manual:{}{\pgfmathparse{#1}},
tick align=outside,
tick pos=left,
title={\(\displaystyle \rho_{D^*_w, Y^*_w}\)},
x grid style={white!69.0196078431373!black},
xmin=1975, xmax=2022,
xlabel={Year},
xtick style={color=black},
xticklabels={},
yticklabel style={/pgf/number format/fixed},
y grid style={white!69.0196078431373!black},
ylabel={Correlation coefficient},
ymin=-0.341087707037751, ymax=0.475317549797828,
ytick style={color=black}
]
\path [draw=color0, semithick]
(axis cs:1976,-0.301520400754365)
--(axis cs:1976,-0.179324156835882);

\path [draw=color0, semithick]
(axis cs:1981,-0.303978377181588)
--(axis cs:1981,-0.172747614771008);

\path [draw=color0, semithick]
(axis cs:1986,-0.214391667172531)
--(axis cs:1986,-0.0718712820433069);

\path [draw=color0, semithick]
(axis cs:1991,-0.157797608329136)
--(axis cs:1991,-0.013672906834731);

\path [draw=color0, semithick]
(axis cs:1997,0.0516474856322678)
--(axis cs:1997,0.181977514779072);

\path [draw=color0, semithick]
(axis cs:2003,0.0887792222244117)
--(axis cs:2003,0.236584794268438);

\path [draw=color0, semithick]
(axis cs:2008,0.141141449918183)
--(axis cs:2008,0.314024223490671);

\path [draw=color0, semithick]
(axis cs:2013,0.260538334741323)
--(axis cs:2013,0.438208219941665);

\path [draw=color0, semithick]
(axis cs:2018,0.197316055168123)
--(axis cs:2018,0.391005256681142);

\addplot [semithick, color0, mark=-, mark size=3, mark options={solid}, only marks]
table {%
1976 -0.301520400754365
1981 -0.303978377181588
1986 -0.214391667172531
1991 -0.157797608329136
1997 0.0516474856322678
2003 0.0887792222244117
2008 0.141141449918183
2013 0.260538334741323
2018 0.197316055168123
};
\addplot [semithick, color0, mark=-, mark size=3, mark options={solid}, only marks]
table {%
1976 -0.179324156835882
1981 -0.172747614771008
1986 -0.0718712820433069
1991 -0.013672906834731
1997 0.181977514779072
2003 0.236584794268438
2008 0.314024223490671
2013 0.438208219941665
2018 0.391005256681142
};
\addplot [semithick, color0, mark=*, mark size=1.5, mark options={solid}, only marks]
table {%
1976 -0.240422278795124
1981 -0.238362995976298
1986 -0.143131474607919
1991 -0.0857352575819333
1997 0.11681250020567
2003 0.162682008246425
2008 0.227582836704427
2013 0.349373277341494
2018 0.294160655924632
};

\nextgroupplot[
scaled x ticks=manual:{}{\pgfmathparse{#1}},
tick align=outside,
tick pos=left,
xlabel={Year},
yticklabel style={
	/pgf/number format/fixed,
	/pgf/number format/precision=2,
	/pgf/number format/fixed zerofill
},
title={\(\displaystyle \rho_{D^*_w, Y^*_h}\)},
x grid style={white!69.0196078431373!black},
xmin=1975, xmax=2022,
xtick style={color=black},
xticklabels={},
y grid style={white!69.0196078431373!black},
ymin=-0.311110786520769, ymax=-0.0226131313966393,
ytick style={color=black}
]
\path [draw=color0, semithick]
(axis cs:1976,-0.180334027259978)
--(axis cs:1976,-0.0560078084251368);

\path [draw=color0, semithick]
(axis cs:1981,-0.228803719137324)
--(axis cs:1981,-0.100180931283407);

\path [draw=color0, semithick]
(axis cs:1986,-0.231147542550165)
--(axis cs:1986,-0.114726195806599);

\path [draw=color0, semithick]
(axis cs:1991,-0.240049558189067)
--(axis cs:1991,-0.106647692986719);

\path [draw=color0, semithick]
(axis cs:1997,-0.206381953109743)
--(axis cs:1997,-0.092978123102604);

\path [draw=color0, semithick]
(axis cs:2003,-0.218161301285309)
--(axis cs:2003,-0.104294995271563);

\path [draw=color0, semithick]
(axis cs:2008,-0.266401862139108)
--(axis cs:2008,-0.111976656762436);

\path [draw=color0, semithick]
(axis cs:2013,-0.297997256742399)
--(axis cs:2013,-0.153468421859111);

\path [draw=color0, semithick]
(axis cs:2018,-0.231203611783999)
--(axis cs:2018,-0.0357266611750088);

\addplot [semithick, color0, mark=-, mark size=3, mark options={solid}, only marks]
table {%
1976 -0.180334027259978
1981 -0.228803719137324
1986 -0.231147542550165
1991 -0.240049558189067
1997 -0.206381953109743
2003 -0.218161301285309
2008 -0.266401862139108
2013 -0.297997256742399
2018 -0.231203611783999
};
\addplot [semithick, color0, mark=-, mark size=3, mark options={solid}, only marks]
table {%
1976 -0.0560078084251368
1981 -0.100180931283407
1986 -0.114726195806599
1991 -0.106647692986719
1997 -0.092978123102604
2003 -0.104294995271563
2008 -0.111976656762436
2013 -0.153468421859111
2018 -0.0357266611750088
};
\addplot [semithick, color0, mark=*, mark size=1.5, mark options={solid}, only marks]
table {%
1976 -0.118170917842558
1981 -0.164492325210365
1986 -0.172936869178382
1991 -0.173348625587893
1997 -0.149680038106174
2003 -0.161228148278436
2008 -0.189189259450772
2013 -0.225732839300755
2018 -0.133465136479504
};

\nextgroupplot[
scaled x ticks=manual:{}{\pgfmathparse{#1}},
tick align=outside,
tick pos=left,
title={\(\displaystyle \rho_{Y^*_w, Y^*_h}\)},
x grid style={white!69.0196078431373!black},
xlabel={Year},
ylabel={Correlation coefficient},
xmin=1975, xmax=2022,
xtick style={color=black},
xticklabels={},
y grid style={white!69.0196078431373!black},
ymin=0.158773478519973, ymax=0.281154601330825,
ytick style={color=black}
]
\path [draw=color0, semithick]
(axis cs:1976,0.197090095528505)
--(axis cs:1976,0.234892528558548);

\path [draw=color0, semithick]
(axis cs:1981,0.214130352161589)
--(axis cs:1981,0.25002826291425);

\path [draw=color0, semithick]
(axis cs:1986,0.235284511167316)
--(axis cs:1986,0.270830836758683);

\path [draw=color0, semithick]
(axis cs:1991,0.241714014546122)
--(axis cs:1991,0.275218687776784);

\path [draw=color0, semithick]
(axis cs:1997,0.217751065835773)
--(axis cs:1997,0.248625442295146);

\path [draw=color0, semithick]
(axis cs:2003,0.218275087316965)
--(axis cs:2003,0.254233756007255);

\path [draw=color0, semithick]
(axis cs:2008,0.221559701331131)
--(axis cs:2008,0.265953447845403);

\path [draw=color0, semithick]
(axis cs:2013,0.164336256829557)
--(axis cs:2013,0.216071530322087);

\path [draw=color0, semithick]
(axis cs:2018,0.222841135208869)
--(axis cs:2018,0.275591823021241);

\addplot [semithick, color0, mark=-, mark size=3, mark options={solid}, only marks]
table {%
1976 0.197090095528505
1981 0.214130352161589
1986 0.235284511167316
1991 0.241714014546122
1997 0.217751065835773
2003 0.218275087316965
2008 0.221559701331131
2013 0.164336256829557
2018 0.222841135208869
};
\addplot [semithick, color0, mark=-, mark size=3, mark options={solid}, only marks]
table {%
1976 0.234892528558548
1981 0.25002826291425
1986 0.270830836758683
1991 0.275218687776784
1997 0.248625442295146
2003 0.254233756007255
2008 0.265953447845403
2013 0.216071530322087
2018 0.275591823021241
};
\addplot [semithick, color0, mark=*, mark size=1.5, mark options={solid}, only marks]
table {%
1976 0.215991312043527
1981 0.232079307537919
1986 0.253057673962999
1991 0.258466351161453
1997 0.23318825406546
2003 0.23625442166211
2008 0.243756574588267
2013 0.190203893575822
2018 0.249216479115055
};
\end{groupplot}

\end{tikzpicture}
	}
\end{figure}
\begin{figure}[tbp]
\caption{Results of covariance matrix at Q3 (exogenous selection of
husbands).}
\label{fig:results_bootstraps_75_1}%
\resizebox{14cm}{!}{
\begin{tikzpicture}

\definecolor{color0}{rgb}{0.12156862745098,0.466666666666667,0.705882352941177}

\begin{groupplot}[group style={group size=2 by 2, vertical sep=2cm}]
\nextgroupplot[
scaled x ticks=manual:{}{\pgfmathparse{#1}},
tick align=outside,
tick pos=left,
xlabel={Year},
title={\(\displaystyle \rho_{D^*_w, Y^*_w}\)},
x grid style={white!69.0196078431373!black},
xmin=1975, xmax=2022,
xtick style={color=black},
xticklabels={},
y grid style={white!69.0196078431373!black},
ylabel={Correlation coefficient},
ymin=-0.296328613274598, ymax=0.589667797805399,
ytick style={color=black}
]
\path [draw=color0, semithick]
(axis cs:1976,-0.252636788017483)
--(axis cs:1976,-0.125947925193145);

\path [draw=color0, semithick]
(axis cs:1981,-0.256056049134598)
--(axis cs:1981,-0.132337124363596);

\path [draw=color0, semithick]
(axis cs:1986,-0.211661804691858)
--(axis cs:1986,-0.0637335603997455);

\path [draw=color0, semithick]
(axis cs:1991,0.00782049497299231)
--(axis cs:1991,0.173337701970441);

\path [draw=color0, semithick]
(axis cs:1997,0.159534627768687)
--(axis cs:1997,0.307760832649287);

\path [draw=color0, semithick]
(axis cs:2003,0.255324793607701)
--(axis cs:2003,0.390351129166311);

\path [draw=color0, semithick]
(axis cs:2008,0.269444668365067)
--(axis cs:2008,0.454291619890157);

\path [draw=color0, semithick]
(axis cs:2013,0.317357018790554)
--(axis cs:2013,0.491054520993614);

\path [draw=color0, semithick]
(axis cs:2018,0.306006953356434)
--(axis cs:2018,0.549395233665399);

\addplot [semithick, color0, mark=-, mark size=3, mark options={solid}, only marks]
table {%
1976 -0.252636788017483
1981 -0.256056049134598
1986 -0.211661804691858
1991 0.00782049497299231
1997 0.159534627768687
2003 0.255324793607701
2008 0.269444668365067
2013 0.317357018790554
2018 0.306006953356434
};
\addplot [semithick, color0, mark=-, mark size=3, mark options={solid}, only marks]
table {%
1976 -0.125947925193145
1981 -0.132337124363596
1986 -0.0637335603997455
1991 0.173337701970441
1997 0.307760832649287
2003 0.390351129166311
2008 0.454291619890157
2013 0.491054520993614
2018 0.549395233665399
};
\addplot [semithick, color0, mark=*, mark size=1.5, mark options={solid}, only marks]
table {%
1976 -0.189292356605314
1981 -0.194196586749097
1986 -0.137697682545802
1991 0.0905790984717167
1997 0.233647730208987
2003 0.322837961387006
2008 0.361868144127612
2013 0.404205769892084
2018 0.427701093510916
};

\nextgroupplot[
scaled x ticks=manual:{}{\pgfmathparse{#1}},
tick align=outside,
tick pos=left,
xlabel={Year},
title={\(\displaystyle \rho_{D^*_w, Y^*_h}\)},
x grid style={white!69.0196078431373!black},
xmin=1975, xmax=2022,
xtick style={color=black},
xticklabels={},
y grid style={white!69.0196078431373!black},
ymin=-0.268861327205941, ymax=0.0737205784710296,
ytick style={color=black}
]
\path [draw=color0, semithick]
(axis cs:1976,-0.173152213141182)
--(axis cs:1976,-0.0315755906272698);

\path [draw=color0, semithick]
(axis cs:1981,-0.238368236012869)
--(axis cs:1981,-0.100711234057405);

\path [draw=color0, semithick]
(axis cs:1986,-0.23228564028636)
--(axis cs:1986,-0.109658450546842);

\path [draw=color0, semithick]
(axis cs:1991,-0.248129060200329)
--(axis cs:1991,-0.112125513265045);

\path [draw=color0, semithick]
(axis cs:1997,-0.170457534508113)
--(axis cs:1997,-0.0111756658859732);

\path [draw=color0, semithick]
(axis cs:2003,-0.236356286628974)
--(axis cs:2003,-0.0961150945709535);

\path [draw=color0, semithick]
(axis cs:2008,-0.237910104778545)
--(axis cs:2008,-0.101377519344641);

\path [draw=color0, semithick]
(axis cs:2013,-0.253289422402443)
--(axis cs:2013,-0.100782830249653);

\path [draw=color0, semithick]
(axis cs:2018,-0.178783064063226)
--(axis cs:2018,0.0581486736675309);

\addplot [semithick, color0, mark=-, mark size=3, mark options={solid}, only marks]
table {%
1976 -0.173152213141182
1981 -0.238368236012869
1986 -0.23228564028636
1991 -0.248129060200329
1997 -0.170457534508113
2003 -0.236356286628974
2008 -0.237910104778545
2013 -0.253289422402443
2018 -0.178783064063226
};
\addplot [semithick, color0, mark=-, mark size=3, mark options={solid}, only marks]
table {%
1976 -0.0315755906272698
1981 -0.100711234057405
1986 -0.109658450546842
1991 -0.112125513265045
1997 -0.0111756658859732
2003 -0.0961150945709535
2008 -0.101377519344641
2013 -0.100782830249653
2018 0.0581486736675309
};
\addplot [semithick, color0, mark=*, mark size=1.5, mark options={solid}, only marks]
table {%
1976 -0.102363901884226
1981 -0.169539735035137
1986 -0.170972045416601
1991 -0.180127286732687
1997 -0.090816600197043
2003 -0.166235690599964
2008 -0.169643812061593
2013 -0.177036126326048
2018 -0.0603171951978475
};

\nextgroupplot[
scaled x ticks=manual:{}{\pgfmathparse{#1}},
tick align=outside,
tick pos=left,
title={\(\displaystyle \rho_{Y^*_w, Y^*_h}\)},
x grid style={white!69.0196078431373!black},
xlabel={Year},
xmin=1975, xmax=2022,
xtick style={color=black},
xticklabels={},
ylabel={Correlation coefficient},
y grid style={white!69.0196078431373!black},
ymin=0.156892633501126, ymax=0.315950976104346,
ytick style={color=black}
]
\path [draw=color0, semithick]
(axis cs:1976,0.164122558164908)
--(axis cs:1976,0.204832753857966);

\path [draw=color0, semithick]
(axis cs:1981,0.19598484102107)
--(axis cs:1981,0.242336150126219);

\path [draw=color0, semithick]
(axis cs:1986,0.21633062542162)
--(axis cs:1986,0.256205462178187);

\path [draw=color0, semithick]
(axis cs:1991,0.210508790293773)
--(axis cs:1991,0.246778198031691);

\path [draw=color0, semithick]
(axis cs:1997,0.21268039206127)
--(axis cs:1997,0.246805228733967);

\path [draw=color0, semithick]
(axis cs:2003,0.184642172656086)
--(axis cs:2003,0.228911247377712);

\path [draw=color0, semithick]
(axis cs:2008,0.209316703396153)
--(axis cs:2008,0.264433597315541);

\path [draw=color0, semithick]
(axis cs:2013,0.186710407212868)
--(axis cs:2013,0.24194154034401);

\path [draw=color0, semithick]
(axis cs:2018,0.242110434823934)
--(axis cs:2018,0.308721051440563);

\addplot [semithick, color0, mark=-, mark size=3, mark options={solid}, only marks]
table {%
1976 0.164122558164908
1981 0.19598484102107
1986 0.21633062542162
1991 0.210508790293773
1997 0.21268039206127
2003 0.184642172656086
2008 0.209316703396153
2013 0.186710407212868
2018 0.242110434823934
};
\addplot [semithick, color0, mark=-, mark size=3, mark options={solid}, only marks]
table {%
1976 0.204832753857966
1981 0.242336150126219
1986 0.256205462178187
1991 0.246778198031691
1997 0.246805228733967
2003 0.228911247377712
2008 0.264433597315541
2013 0.24194154034401
2018 0.308721051440563
};
\addplot [semithick, color0, mark=*, mark size=1.5, mark options={solid}, only marks]
table {%
1976 0.184477656011437
1981 0.219160495573644
1986 0.236268043799904
1991 0.228643494162732
1997 0.229742810397619
2003 0.206776710016899
2008 0.236875150355847
2013 0.214325973778439
2018 0.275415743132248
};
\end{groupplot}

\end{tikzpicture}
	}
\end{figure}

\subsection{Counterfactual Sorting Patterns}

\label{ss:counterfactual_results}

The estimated model's capacity to reproduce the frequencies in Tables \ref%
{tab:frequencies_1} and \ref{tab:frequencies_2} is reflected in Tables \ref%
{tab:results_original_1976} and \ref{tab:results_original_2018} and,
although the large number of cells makes it difficult to draw a conclusion
by a visual inspection, the estimated cells appear close to the true values.
We now examine the role of various model parameters in generating the
observed data by changing selected model parameters and examining the
predicted allocation across cells. We compare these counterfactuals to
Tables \ref{tab:results_original_1976} and \ref{tab:results_original_2018}
as they represent the predictions of our estimated model.

\begin{table}[!ht]
\resizebox{15cm}{!}{
	\begin{tabular}{l||l|rrrrrrrrrr}
		\hline\hline
		&  & \multicolumn{10}{c}{Husbands' quantiles}   \\ \hline\hline
		&  & d1 & d2 & d3 & d4 & d5 & d6 & d7 & d8 & d9 & d10   \\ \hline
		\parbox[t]{2mm}{\multirow{10}{*}{\rotatebox[origin=c]{90}{Wives' quantiles}}}
		& d1 & \textbf{2.135} & 1.37 & 1.14 & 1.006 & 0.848 & 0.939 & 0.757 & 0.691
		& 0.728 & 0.386   \\ 
		& d2 & 1.623 & \textbf{1.43} & 1.2 & 1.034 & 0.974 & 0.845 & 0.758 & 0.735 & 
		0.724 & 0.678   \\ 
		& d3 & 1.368 & 1.184 & \textbf{1.122} & 1.067 & 1.092 & 0.898 & 0.842 & 0.8
		& 0.737 & 0.89   \\ 
		& d4 & 1.154 & 1.344 & 1.193 & \textbf{1.111} & 0.925 & 0.976 & 0.87 & 0.897
		& 0.756 & 0.774   \\ 
		& d5 & 0.955 & 1.185 & 1.225 & 1.178 & \textbf{1.105} & 1.012 & 0.982 & 0.969
		& 0.936 & 0.452   \\ 
		& d6 & 0.756 & 1.042 & 1.086 & 1.147 & 1.108 & \textbf{1.151} & 1.007 & 1.005
		& 0.946 & 0.751   \\ 
		& d7 & 0.597 & 0.776 & 1.074 & 1.033 & 1.078 & 1.131 & \textbf{1.031} & 1.096
		& 1.096 & 1.088   \\ 
		& d8 & 0.601 & 0.603 & 0.832 & 1.131 & 1.093 & 1.132 & 1.135 & \textbf{1.13}
		& 1.197 & 1.144   \\ 
		& d9 & 0.427 & 0.597 & 0.676 & 0.756 & 1.136 & 1.209 & 1.245 & 1.239 & 
		\textbf{1.167} & 1.546   \\ 
		& d10 & 0.381 & 0.544 & 0.374 & 0.535 & 0.655 & 0.696 & 1.373 & 1.434 & 1.714
		& \textbf{2.294}   \\ \hline\hline
	\end{tabular}}
\caption{Results of estimated sorting measures in 1976-1980 at different
quantiles when $\protect\rho_{D^*_w, D^*_h} = 0$.}
\label{tab:results_1_1976}
\end{table}

\begin{table}[!ht]
\resizebox{15cm}{!}{
	\begin{tabular}{l||l|rrrrrrrrrr}
		\hline\hline
		&  & \multicolumn{10}{c}{Husbands' quantiles}   \\ \hline\hline
		&  & d1 & d2 & d3 & d4 & d5 & d6 & d7 & d8 & d9 & d10   \\ \hline
		\parbox[t]{2mm}{\multirow{10}{*}{\rotatebox[origin=c]{90}{Wives' quantiles}}}
		& d1 & \textbf{2.189} & 1.235 & 0.987 & 0.848 & 0.986 & 0.705 & 0.935 & 0.675
		& 0.646 & 0.795   \\ 
		& d2 & 1.752 & \textbf{1.525} & 1.067 & 1.087 & 0.813 & 0.865 & 0.844 & 0.755
		& 0.684 & 0.61   \\ 
		& d3 & 1.272 & 1.354 & \textbf{1.218} & 1.1 & 0.932 & 1.081 & 0.809 & 0.912
		& 0.699 & 0.623   \\ 
		& d4 & 1.068 & 1.316 & 1.286 & \textbf{1.103} & 1.024 & 0.916 & 1.041 & 0.885
		& 0.837 & 0.524   \\ 
		& d5 & 0.859 & 1.226 & 1.272 & 1.137 & \textbf{1.045} & 0.995 & 1.022 & 0.915
		& 0.909 & 0.619   \\ 
		& d6 & 0.648 & 1.041 & 1.241 & 1.117 & 1.156 & \textbf{1.075} & 1.029 & 1.032
		& 0.912 & 0.749   \\ 
		& d7 & 0.658 & 0.625 & 1.207 & 1.16 & 1.13 & 1.17 & \textbf{1.015} & 1.187 & 
		1.121 & 0.726   \\ 
		& d8 & 0.484 & 0.707 & 0.764 & 1.2 & 1.182 & 1.169 & 1.153 & \textbf{1.098}
		& 1.237 & 1.005   \\ 
		& d9 & 0.539 & 0.573 & 0.512 & 0.722 & 1.129 & 1.248 & 1.16 & 1.215 & 
		\textbf{1.493} & 1.41   \\ 
		& d10 & 0.523 & 0.403 & 0.442 & 0.53 & 0.6 & 0.766 & 1.01 & 1.249 & 1.538 & 
		\textbf{2.94}   \\ \hline\hline
	\end{tabular}}
\caption{Results of estimated sorting measures in 1976-1980 at different
quantiles when $\protect\rho_{D^*_w, D^*_h} = 0, \protect\rho_{D^*_w Y^*_w}
= 0, \protect\rho_{D^*_w, Y^*_h} = 0, \protect\rho_{D^*_h, Y^*_w} = 0, 
\protect\rho_{D^*_h, Y^*_h} = 0$.}
\label{tab:results_3_1976}
\end{table}

\begin{table}[!ht]
\resizebox{15cm}{!}{
	\begin{tabular}{l||l|rrrrrrrrrr}
		\hline\hline
		&  & \multicolumn{10}{c}{Husbands' quantiles}  \\ \hline\hline
		&  & d1 & d2 & d3 & d4 & d5 & d6 & d7 & d8 & d9 & d10   \\ \hline
		\parbox[t]{2mm}{\multirow{10}{*}{\rotatebox[origin=c]{90}{Wives' quantiles}}}
		& d1 & \textbf{2.209} & 1.217 & 0.94 & 0.826 & 0.981 & 0.731 & 0.956 & 0.601
		& 0.733 & 0.806   \\ 
		& d2 & 1.722 & \textbf{1.521} & 1.155 & 1.019 & 0.924 & 0.844 & 0.837 & 0.723
		& 0.631 & 0.623   \\ 
		& d3 & 1.388 & 1.327 & \textbf{1.139} & 1.185 & 0.842 & 1.065 & 0.844 & 0.762
		& 0.915 & 0.533   \\ 
		& d4 & 1.058 & 1.407 & 1.31 & \textbf{1.11} & 0.982 & 0.903 & 1.017 & 0.934
		& 0.7 & 0.58   \\ 
		& d5 & 0.822 & 1.237 & 1.282 & 1.112 & \textbf{1.088} & 1.056 & 0.98 & 1.007
		& 0.792 & 0.624   \\ 
		& d6 & 0.673 & 1.024 & 1.266 & 1.118 & 1.161 & \textbf{1.013} & 1.117 & 1.02
		& 0.893 & 0.716   \\ 
		& d7 & 0.641 & 0.594 & 1.182 & 1.181 & 1.095 & 1.155 & \textbf{0.973} & 1.363
		& 1.02 & 0.796  \\ 
		& d8 & 0.452 & 0.722 & 0.769 & 1.188 & 1.214 & 1.202 & 1.095 & \textbf{1.201}
		& 1.191 & 0.967   \\ 
		& d9 & 0.571 & 0.511 & 0.496 & 0.75 & 1.118 & 1.236 & 1.173 & 1.368 & 
		\textbf{1.299} & 1.477   \\ 
		& d10 & 0.476 & 0.435 & 0.458 & 0.519 & 0.579 & 0.785 & 1.014 & 1.022 & 1.818
		& \textbf{2.894}   \\ \hline\hline
	\end{tabular}}
\caption{Results of estimated sorting measures in 1976-1980 at different
quantiles when $\protect\rho_{D^*_w, D^*_h} = 0, \protect\rho_{D^*_w, Y^*_w}
= 0, \protect\rho_{D^*_w, Y^*_h} = 0, \protect\rho_{D^*_h, Y^*_w} = 0, 
\protect\rho_{D^*_h, Y^*_h} = 0$ and distribution of $Z$ as in total
population.}
\label{tab:results_4_1976}
\end{table}

\begin{table}[!ht]
\resizebox{15cm}{!}{
	\begin{tabular}{l||l|rrrrrrrrrr}
		\hline\hline
		&  & \multicolumn{10}{c}{Husbands' quantiles}   \\ \hline\hline
		&  & d1 & d2 & d3 & d4 & d5 & d6 & d7 & d8 & d9 & d10   \\ \hline
		\parbox[t]{2mm}{\multirow{10}{*}{\rotatebox[origin=c]{90}{Wives' quantiles}}}
		& d1 & \textbf{1.353} & 1.2 & 1.095 & 0.973 & 1.045 & 0.919 & 0.946 & 0.856
		& 0.82 & 0.795   \\ 
		& d2 & 1.328 & \textbf{1.255} & 1.125 & 1.001 & 1.051 & 0.941 & 0.942 & 0.827
		& 0.779 & 0.752   \\ 
		& d3 & 1.209 & 1.048 & \textbf{1.02} & 1.124 & 0.832 & 1.024 & 0.904 & 0.912
		& 0.998 & 0.93   \\ 
		& d4 & 1.051 & 1.055 & 1.046 & \textbf{1.04} & 0.994 & 0.988 & 0.993 & 0.99
		& 0.944 & 0.901   \\ 
		& d5 & 0.985 & 1.02 & 1.031 & 1.046 & \textbf{1.011} & 1.007 & 0.998 & 1.01
		& 0.962 & 0.93   \\ 
		& d6 & 0.944 & 0.996 & 1.011 & 1.013 & 1.014 & \textbf{1.012} & 1.024 & 1.012
		& 1.006 & 0.968   \\ 
		& d7 & 0.86 & 0.935 & 0.974 & 1.002 & 1.026 & 1.018 & \textbf{1.03} & 1.076
		& 1.035 & 1.043   \\ 
		& d8 & 0.786 & 0.884 & 0.946 & 0.98 & 1.029 & 1.045 & 1.06 & \textbf{1.102}
		& 1.075 & 1.093   \\ 
		& d9 & 0.734 & 0.797 & 0.878 & 0.938 & 0.975 & 1.017 & 1.067 & 1.171 & 
		\textbf{1.182} & 1.242   \\ 
		& d10 & 0.759 & 0.809 & 0.872 & 0.892 & 1.008 & 1.022 & 1.041 & 1.048 & 1.191
		& \textbf{1.357}   \\ \hline\hline
	\end{tabular}}
\caption{Results of estimated sorting measures in 1976-1980 at different
quantiles when all correlations are zero and distribution of $Z$ as in total
population.}
\label{tab:results_5_1976}
\end{table}

We begin by setting $\rho _{D_{w}^{\ast },D_{h}^{\ast }}$ to zero. 
Table \ref{tab:results_1_1976} presents the results \ for 1976-1980 and
Table \ref{tab:results_1_2018} those for 2018-2022. They appear similar to
Table \ref{tab:results_original_1976} and Table \ref%
{tab:results_original_2018} respectively and we conclude that this parameter
is unimportant for determining the joint wage distribution.

\begin{table}[!ht]
\resizebox{15cm}{!}{
	\begin{tabular}{l||l|rrrrrrrrrrr}
		\hline\hline
		&  & \multicolumn{10}{c}{Husbands' quantiles}   \\ \hline\hline
		&  & d1 & d2 & d3 & d4 & d5 & d6 & d7 & d8 & d9 & d10   \\ \hline
		\parbox[t]{2mm}{\multirow{10}{*}{\rotatebox[origin=c]{90}{Wives' quantiles}}}
		& d1 & \textbf{2.896} & 1.573 & 1.044 & 0.891 & 0.886 & 0.82 & 0.576 & 0.646
		& 0.523 & 0.146   \\ 
		& d2 & 1.636 & \textbf{1.702} & 1.254 & 1.02 & 0.884 & 0.818 & 0.632 & 0.655
		& 0.502 & 0.897   \\ 
		& d3 & 1.161 & 1.721 & \textbf{1.373} & 1.123 & 1.004 & 0.872 & 0.842 & 0.608
		& 0.588 & 0.708   \\ 
		& d4 & 0.85 & 1.295 & 1.604 & \textbf{1.26} & 1.078 & 1.023 & 0.954 & 0.808
		& 0.7 & 0.428   \\ 
		& d5 & 0.779 & 0.998 & 1.269 & 1.336 & \textbf{1.122} & 1.13 & 1.054 & 1.042
		& 0.797 & 0.472   \\ 
		& d6 & 0.663 & 0.766 & 0.924 & 1.302 & 1.413 & \textbf{1.27} & 1.144 & 1.071
		& 0.897 & 0.549   \\ 
		& d7 & 0.549 & 0.761 & 0.964 & 1.062 & 1.234 & 1.386 & \textbf{1.294} & 1.254
		& 1.211 & 0.285   \\ 
		& d8 & 0.454 & 0.465 & 0.609 & 0.635 & 0.812 & 1.125 & 1.445 & \textbf{1.111}
		& 1.25 & 2.093   \\ 
		& d9 & 0.472 & 0.575 & 0.604 & 0.761 & 0.748 & 1.031 & 1.274 & 1.575 & 
		\textbf{1.803} & 1.156   \\ 
		& d10 & 0.537 & 0.144 & 0.354 & 0.609 & 0.816 & 0.526 & 0.789 & 1.23 & 1.73
		& \textbf{3.265}   \\ \hline\hline
	\end{tabular}}
\caption{Results of estimated sorting measures in 2018-2022 at different
quantiles when $\protect\rho_{D^*_w, D^*_h} = 0$.}
\label{tab:results_1_2018}
\end{table}

\begin{table}[!ht]
\resizebox{15cm}{!}{
	\begin{tabular}{l||l|rrrrrrrrrrr}
		\hline\hline
		&  & \multicolumn{10}{c}{Husbands' quantiles}   \\ \hline\hline
		&  & d1 & d2 & d3 & d4 & d5 & d6 & d7 & d8 & d9 & d10   \\ \hline
		\parbox[t]{2mm}{\multirow{10}{*}{\rotatebox[origin=c]{90}{Wives' quantiles}}}
		& d1 & \textbf{2.757} & 1.275 & 0.984 & 0.945 & 0.919 & 0.679 & 0.777 & 0.619
		& 0.446 & 0.599   \\ 
		& d2 & 1.633 & \textbf{1.959} & 1.287 & 1.074 & 0.911 & 0.862 & 0.793 & 0.597
		& 0.524 & 0.36   \\ 
		& d3 & 0.776 & 1.885 & \textbf{1.385} & 1.212 & 1.067 & 1.143 & 0.904 & 0.701
		& 0.545 & 0.381   \\ 
		& d4 & 0.867 & 1.28 & 1.517 & \textbf{1.061} & 1.191 & 1.076 & 0.901 & 0.9 & 
		0.668 & 0.54   \\ 
		& d5 & 0.853 & 0.762 & 1.267 & 1.637 & \textbf{1.198} & 1.103 & 1.146 & 0.717
		& 0.697 & 0.619   \\ 
		& d6 & 0.584 & 0.81 & 1.075 & 1.148 & 1.336 & \textbf{1.181} & 1.057 & 1.068
		& 1.019 & 0.721   \\ 
		& d7 & 0.735 & 0.625 & 0.804 & 1.018 & 1.19 & 1.188 & \textbf{1.238} & 1.254
		& 1.082 & 0.865   \\ 
		& d8 & 0.709 & 0.512 & 0.609 & 0.774 & 0.946 & 1.304 & 1.196 & \textbf{1.41}
		& 1.242 & 1.299   \\ 
		& d9 & 0.524 & 0.53 & 0.604 & 0.699 & 0.712 & 0.82 & 1.131 & 1.652 & \textbf{			\ 1.813} & 1.515   \\ 
		& d10 & 0.558 & 0.36 & 0.467 & 0.436 & 0.529 & 0.638 & 0.862 & 1.086 & 1.961
		& \textbf{3.104}   \\ \hline\hline
	\end{tabular}}
\caption{Results of estimated sorting measures in 2018-2022 at different
quantiles when $\protect\rho_{D^*_w, D^*_h} = 0, \protect\rho_{D^*_w Y^*_w}
= 0, \protect\rho_{D^*_w, Y^*_h} = 0, \protect\rho_{D^*_h, Y^*_w} = 0, 
\protect\rho_{D^*_h, Y^*_h} = 0$.}
\label{tab:results_3_2018}
\end{table}

\begin{table}[!ht]
\resizebox{15cm}{!}{
	\begin{tabular}{l||l|rrrrrrrrrr}
		\hline\hline
		&  & \multicolumn{10}{c}{Husbands' quantiles}   \\ \hline\hline
		&  & d1 & d2 & d3 & d4 & d5 & d6 & d7 & d8 & d9 & d10   \\ \hline
		\parbox[t]{2mm}{\multirow{10}{*}{\rotatebox[origin=c]{90}{Wives' quantiles}}}
		& d1 & \textbf{2.88} & 1.375 & 0.903 & 0.819 & 0.958 & 0.582 & 0.741 & 0.621
		& 0.453 & 0.669   \\ 
		& d2 & 1.771 & \textbf{1.803} & 1.329 & 1.116 & 0.961 & 0.792 & 0.812 & 0.59
		& 0.486 & 0.339   \\ 
		& d3 & 0.76 & 1.978 & \textbf{1.486} & 1.138 & 1.108 & 1.117 & 0.834 & 0.651
		& 0.538 & 0.391   \\ 
		& d4 & 0.767 & 1.195 & 1.65 & \textbf{1.104} & 1.184 & 1.095 & 1.04 & 0.743
		& 0.659 & 0.564   \\ 
		& d5 & 0.762 & 0.847 & 1.133 & 1.609 & \textbf{1.273} & 1.101 & 1.068 & 0.905
		& 0.73 & 0.572   \\ 
		& d6 & 0.521 & 0.766 & 1.075 & 1.197 & 1.328 & \textbf{1.217} & 1.048 & 1.074
		& 1.017 & 0.758   \\ 
		& d7 & 0.793 & 0.571 & 0.792 & 1.0 & 1.115 & 1.399 & \textbf{1.184} & 1.236
		& 1.038 & 0.873   \\ 
		& d8 & 0.579 & 0.536 & 0.583 & 0.805 & 0.949 & 1.201 & 1.324 & \textbf{1.44}
		& 1.305 & 1.278   \\ 
		& d9 & 0.523 & 0.546 & 0.611 & 0.637 & 0.744 & 0.849 & 1.122 & 1.668 & 
		\textbf{1.813} & 1.488   \\ 
		& d10 & 0.638 & 0.38 & 0.445 & 0.494 & 0.457 & 0.651 & 0.831 & 1.076 & 1.95
		& \textbf{3.077}   \\ \hline\hline
	\end{tabular}}
\caption{Results of estimated sorting measures in 2018-2022 at different
quantiles when $\protect\rho_{D^*_w, D^*_h} = 0, \protect\rho_{D^*_w, Y^*_w}
= 0, \protect\rho_{D^*_w, Y^*_h} = 0, \protect\rho_{D^*_h, Y^*_w} = 0, 
\protect\rho_{D^*_h, Y^*_h} = 0$ and distribution of $Z$ as in total
population.}
\label{tab:results_4_2018}
\end{table}

\begin{table}[!ht]
\resizebox{15cm}{!}{
	\begin{tabular}{l||l|rrrrrrrrrr}
		\hline\hline
		&  & \multicolumn{10}{c}{Husbands' quantiles}   \\ \hline\hline
		&  & d1 & d2 & d3 & d4 & d5 & d6 & d7 & d8 & d9 & d10   \\ \hline
		\parbox[t]{2mm}{\multirow{10}{*}{\rotatebox[origin=c]{90}{Wives' quantiles}}}
		& d1 & \textbf{1.542} & 1.265 & 1.058 & 0.96 & 0.945 & 0.879 & 0.841 & 0.817
		& 0.8 & 0.894   \\ 
		& d2 & 1.215 & \textbf{1.204} & 1.108 & 1.056 & 1.029 & 0.966 & 0.908 & 0.896
		& 0.839 & 0.78   \\ 
		& d3 & 1.089 & 1.153 & \textbf{1.12} & 1.072 & 1.068 & 1.0 & 0.942 & 0.893 & 
		0.867 & 0.794   \\ 
		& d4 & 1.031 & 1.077 & 1.096 & \textbf{1.066} & 1.064 & 1.012 & 0.977 & 0.941
		& 0.901 & 0.836   \\ 
		& d5 & 0.908 & 0.984 & 1.042 & 1.05 & \textbf{1.065} & 1.045 & 1.023 & 1.005
		& 0.972 & 0.907   \\ 
		& d6 & 0.859 & 0.926 & 0.991 & 1.004 & 1.039 & \textbf{1.041} & 1.061 & 1.049
		& 1.035 & 0.995   \\ 
		& d7 & 0.825 & 0.881 & 0.95 & 0.992 & 1.014 & 1.051 & \textbf{1.067} & 1.081
		& 1.075 & 1.064   \\ 
		& d8 & 0.803 & 0.864 & 0.939 & 0.975 & 1.014 & 1.051 & 1.094 & \textbf{1.103}
		& 1.096 & 1.062   \\ 
		& d9 & 0.797 & 0.821 & 0.88 & 0.909 & 0.954 & 1.007 & 1.073 & 1.13 & \textbf{			\ 1.186} & 1.242   \\ 
		& d10 & 0.926 & 0.821 & 0.822 & 0.836 & 0.887 & 0.953 & 1.016 & 1.088 & 1.218
		& \textbf{1.433}   \\ \hline\hline
	\end{tabular}}
\caption{Results of estimated sorting measures in 2018-2022 at different
quantiles when all correlations are zero and distribution of $Z$ as in total
population.}
\label{tab:results_5_2018}
\end{table}

Tables \ref{tab:results_3_1976} and \ref{tab:results_3_2018} present the
results when $\rho _{D_{w}^{\ast },Y_{w}^{\ast }},$ $\rho _{D_{h}^{\ast
},Y_{h}^{\ast }}$, $\rho _{D_{w}^{\ast },Y_{h}^{\ast }}$ and $\rho
_{D_{h}^{\ast },Y_{w}^{\ast }}$ are also set to zero. We characterize these
parameters as collectively capturing the selection process. For the earlier
period the d10/d10 cell increases from 2.29 to 2.94 while the sum in the
(d9+d10)/(d9+d10) cells increases from 6.7 to 7.4. This suggests that
negative selection bias operating in both labor markets is reducing
household inequality. Negative selection is removing the relatively higher
paid males and females from FTFY employment. Setting these parameters to
zero inserts more higher paid workers into the market. This results in a
higher propensity of the higher paid males and females to marry. This is an
interesting result although it may reflect the identification strategy.
Interestingly the frequencies in Table \ref{tab:results_3_2018} provide a
somewhat comparable story although the changes are less drastic. Setting the
other parameters to zero appears to have little affect on the lower cells.
However, while the d10/d10 cell decreases marginally the sum in the
(d9+d10)/(d9+d10) cells increases from 7.9 to 8.4. This is similar to the
earlier period although the estimated positive selection effects for females
at upper quantiles is offsetting the negative selection effects for males
throughout the period.

We also examine the counterfactual in which the above parameters are set to
zero and we use the $Z^{\prime }$s of the whole sample rather than only
those who are working. The results are in Tables \ref{tab:results_4_1976}
and \ref{tab:results_4_2018} and a comparison with Tables \ref%
{tab:results_3_1976} and \ref{tab:results_3_2018} indicates they do not
affect the observed sorting patterns. Although previous studies (see, for
example, Chernozhukov, Fern\'{a}ndez-Val, and Luo 2019) have analyzed the
impact of selection on the distribution of wages and have found it to have
some impact, our results indicate it would not change the sorting patterns
even if it is changing the distribution of both, or either of, the male and
female wages.

Finally we set $\rho _{Y_{w}^{\ast },Y_{h}^{\ast }}$ to zero. For the BDR
model the corresponding counterfactual produced a substantial reduction in
sorting. Table \ref{tab:results_5_1976} reports the results for the earlier
time period. Given the large number of cells it is useful to focus on the
extreme cases as these are the outcomes more closely associated with
inequality. First consider the d1/d1 and d10/d10 cells. The former decreases
from 2.21 to 1.35 while the latter decreases even more dramatically from
2.89 to 1.36. Expanding the 4 lower and upper combinations to include d2 and
d9 respectively produces reductions from 6.6 and 7.4 to 5.1 and 4.96
respectively. It is interesting that the more substantial reductions appear
to occur at the cells for the higher wage deciles. Turning to the 2018-2022
time period we see a similar pattern. The d1/d1 cell decreases from 2.88 to
1.54 and d10/d10 decreased from 3.07 to 1.43. Expanding to the four lowest
and highest we see reductions of 6.82 to 5.21 and 8.3 to 5.07 respectively.
This clearly suggests that the correlation in the unexplained components of
wages is largely driving the observed sorting patterns. That is, sorting on
unobservables is an important factor in driving inequality.

\begin{table}[tbp]
\caption{Kendall rank correlation coefficient.}
\label{tab:kendall}\centering
\begin{tabular}{lrr}
\hline\hline
& 1976-1980 & 2018-2022 \\ \hline\hline
Data & 0.1833 & 0.2758 \\ 
Estimated & 0.1885 & 0.2588 \\ 
Participation & 0.1877 & 0.2582 \\ 
Selection & 0.1853 & 0.2421 \\ 
$X$ as in whole population & 0.1869 & 0.2483 \\ 
$\rho_{Y^*_w, Y^*_h} = 0$ & 0.0743 & 0.0768 \\ \hline\hline
\end{tabular}%
\end{table}

To capture the sorting behavior which includes the off diagonals we compute
the Kendall rank correlation coefficient for these counterfactuals. These
are reported in Table \ref{tab:kendall} and confirm the patterns described
above. For the 1976-1980 period neither the selection parameters nor the use
of the working or total population composition of $Z^{\prime }$s affect the
rank correlation coefficient. For each of these experiments the rank
correlation coefficient does not differ greatly from its value for the data
of .18. However, when we additionally set the value of $\rho _{Y_{w}^{\ast
},Y_{h}^{\ast }}$ to zero the value falls to .07. For the 2018-2022 period
the results are similar despite the rank correlation coefficient of .27
being 50\% higher than the 1976-1980 value. For this period the only
counterfactual which generates a different pattern of sorting is that
corresponding to $\rho _{Y_{w}^{\ast },Y_{h}^{\ast }}=0.$ For this latter
period the reduced rank correlation coefficient is also .07. This is
consistent with Tables \ref{tab:results_5_1976} and \ref{tab:results_5_2018}%
. 
\begin{figure}[tbp]
\caption{Decompositions of changes in the sorting measures}
\label{fig:decomposition_sorting}\centering
\begin{tikzpicture}

\definecolor{color0}{rgb}{0.12156862745098,0.466666666666667,0.705882352941177}
\definecolor{color1}{rgb}{1,0.498039215686275,0.0549019607843137}
\definecolor{color2}{rgb}{0.172549019607843,0.627450980392157,0.172549019607843}
\definecolor{color3}{rgb}{0.83921568627451,0.152941176470588,0.156862745098039}
\definecolor{color4}{rgb}{0.580392156862745,0.403921568627451,0.741176470588235}

\begin{groupplot}[group style={group size=2 by 2, group name=plots, vertical sep = 2cm}, legend columns=5, height = 6.6cm, width=6.6cm]
\nextgroupplot[
tick align=outside,
tick pos=left,
title={D1},
xmin=1976, xmax=2018,
y grid style={white!69.0196078431373!black},
ylabel={difference},
ymin=-0.248724, ymax=0.646104,
ytick style={color=black}
]
\addplot [very thick, color0]
table {%
	1976 -0.0527799999999998
	1981 0.24096
	1986 0.39696
	1991 0.37299
	1997 0.40544
	2003 0.36423
	2008 0.39794
	2013 0.40929
	2018 0.60543
};
\addplot [very thick, color1]
table {%
	1976 -0.00271999999999961
	1981 0.00143999999999966
	1986 -0.09504
	1991 -0.02542
	1997 0.00601999999999991
	2003 0.0585499999999999
	2008 0.00845000000000029
	2013 0.03485
	2018 0.0266900000000003
};
\addplot [very thick, color2]
table {%
	1976 -0.00302000000000024
	1981 0.14099
	1986 0.4233
	1991 0.46178
	1997 0.35499
	2003 0.29672
	2008 0.22845
	2013 0.48379
	2018 0.38876
};
\addplot [very thick, color3]
table {%
	1976 -0.0561500000000001
	1981 0.13083
	1986 0.17747
	1991 0.0126900000000001
	1997 0.14838
	2003 0.15056
	2008 0.31819
	2013 0.0897899999999998
	2018 0.39803
};
\addplot [very thick, color4]
table {%
	1976 0.00911000000000017
	1981 -0.0323000000000002
	1986 -0.10877
	1991 -0.07606
	1997 -0.10395
	2003 -0.1416
	2008 -0.15715
	2013 -0.19914
	2018 -0.20805
};

\nextgroupplot[
tick align=outside,
tick pos=left,
title={Q1},
xmin=1976, xmax=2018,
y grid style={white!69.0196078431373!black},
ymin=-0.256236, ymax=0.525556,
ytick style={color=black}
]
\addplot [very thick, color0]
table {%
	1976 -0.01877
	1981 0.0749299999999999
	1986 0.16905
	1991 0.20427
	1997 0.23405
	2003 0.20383
	2008 0.22546
	2013 0.20971
	2018 0.27565
};
\addplot [very thick, color1]
table {%
	1976 0.00038999999999989
	1981 0.00430999999999981
	1986 0.00697999999999999
	1991 -0.0198200000000002
	1997 -0.00767999999999991
	2003 0.00566999999999984
	2008 0.00280999999999998
	2013 0.00479999999999992
	2018 0.0114399999999999
};
\addplot [very thick, color2]
table {%
	1976 -0.00688999999999984
	1981 0.0585800000000001
	1986 0.08325
	1991 0.25408
	1997 0.26913
	2003 0.24644
	2008 0.26588
	2013 0.49002
	2018 0.33171
};
\addplot [very thick, color3]
table {%
	1976 -0.00540000000000007
	1981 0.02203
	1986 0.08772
	1991 -0.00235999999999992
	1997 0.0128000000000001
	2003 0.00441000000000003
	2008 0.01902
	2013 -0.2207
	2018 -0.0142899999999999
};
\addplot [very thick, color4]
table {%
	1976 -0.00686999999999993
	1981 -0.00998999999999994
	1986 -0.00889999999999991
	1991 -0.02763
	1997 -0.0402
	2003 -0.0526899999999999
	2008 -0.0622499999999999
	2013 -0.0644099999999999
	2018 -0.05321
};

\nextgroupplot[
tick align=outside,
tick pos=left,
title={Q3},
x grid style={white!69.0196078431373!black},
xlabel={years},
xmin=1976, xmax=2018,
xtick style={color=black},
y grid style={white!69.0196078431373!black},
ymin=-0.168381, ymax=0.403641,
ytick style={color=black}
]
\addplot [very thick, color0]
table {%
	1976 0.0353300000000001
	1981 0.0183600000000002
	1986 0.12264
	1991 0.22154
	1997 0.20024
	2003 0.18737
	2008 0.31358
	2013 0.27027
	2018 0.35866
};
\addplot [very thick, color1]
table {%
	1976 0.000399999999999956
	1981 -0.0314699999999999
	1986 -0.01511
	1991 -0.0108999999999999
	1997 -0.0205200000000001
	2003 -0.00923000000000007
	2008 0.00295000000000001
	2013 -0.01308
	2018 0.0130399999999999
};
\addplot [very thick, color2]
table {%
	1976 0.00971000000000011
	1981 0.0850299999999999
	1986 0.19362
	1991 0.37764
	1997 0.30745
	2003 0.25825
	2008 0.35081
	2013 0.34718
	2018 0.23342
};
\addplot [very thick, color3]
table {%
	1976 0.02572
	1981 -0.0398399999999999
	1986 -0.0497999999999998
	1991 -0.14238
	1997 -0.0906100000000001
	2003 -0.0748500000000001
	2008 -0.0536699999999999
	2013 -0.06602
	2018 0.10124
};
\addplot [very thick, color4]
table {%
	1976 -0.000499999999999945
	1981 0.00463999999999998
	1986 -0.00607000000000002
	1991 -0.00282000000000004
	1997 0.00392000000000015
	2003 0.0132000000000001
	2008 0.01349
	2013 0.00219000000000014
	2018 0.0109600000000001
};

\nextgroupplot[
tick align=outside,
tick pos=left,
title={D10},
x grid style={white!69.0196078431373!black},
xlabel={years},
xmin=1976, xmax=2018,
xtick style={color=black},
y grid style={white!69.0196078431373!black},
ylabel={difference},
ymin=-0.1652415, ymax=0.6710115,
ytick style={color=black},
legend to name = grouplegend
]
\addplot [very thick, color0]
table {%
	1976 0.0381300000000002
	1981 -0.0582099999999999
	1986 0.141
	1991 0.14807
	1997 0.31916
	2003 0.45728
	2008 0.45837
	2013 0.37116
	2018 0.633
};
\addlegendentry{Total}
\addplot [very thick, color1]
table {%
	1976 0.00887000000000038
	1981 -0.04006
	1986 -0.12723
	1991 -0.0881599999999998
	1997 0.0251900000000003
	2003 0.0538399999999997
	2008 -0.0414000000000003
	2013 -0.0322299999999998
	2018 0.00768000000000013
};
\addlegendentry{Selection}
\addplot [very thick, color2]
table {%
	1976 -0.0476100000000002
	1981 0.0862400000000001
	1986 0.30024
	1991 0.2019
	1997 0.26355
	2003 0.21145
	2008 0.33575
	2013 0.14262
	2018 0.0986399999999996
};
\addlegendentry{Structural}
\addplot [very thick, color3]
table {%
	1976 0.0764100000000001
	1981 -0.0986400000000001
	1986 -0.00784000000000029
	1991 0.0752600000000001
	1997 0.09083
	2003 0.26904
	2008 0.28074
	2013 0.33525
	2018 0.60937
};
\addlegendentry{$\rho_{23}$}
\addplot [very thick, color4]
table {%
	1976 0.000459999999999905
	1981 -0.00574999999999992
	1986 -0.0241699999999998
	1991 -0.0409299999999999
	1997 -0.0604100000000001
	2003 -0.0770499999999998
	2008 -0.11672
	2013 -0.0744799999999999
	2018 -0.0826899999999999
};
\addlegendentry{Composition}
\end{groupplot}
\node at (plots c1r2.south) [inner sep=10pt,anchor=north, xshift= 3.5cm,yshift=-5ex] {\ref{grouplegend}}; 
\end{tikzpicture}
\end{figure}
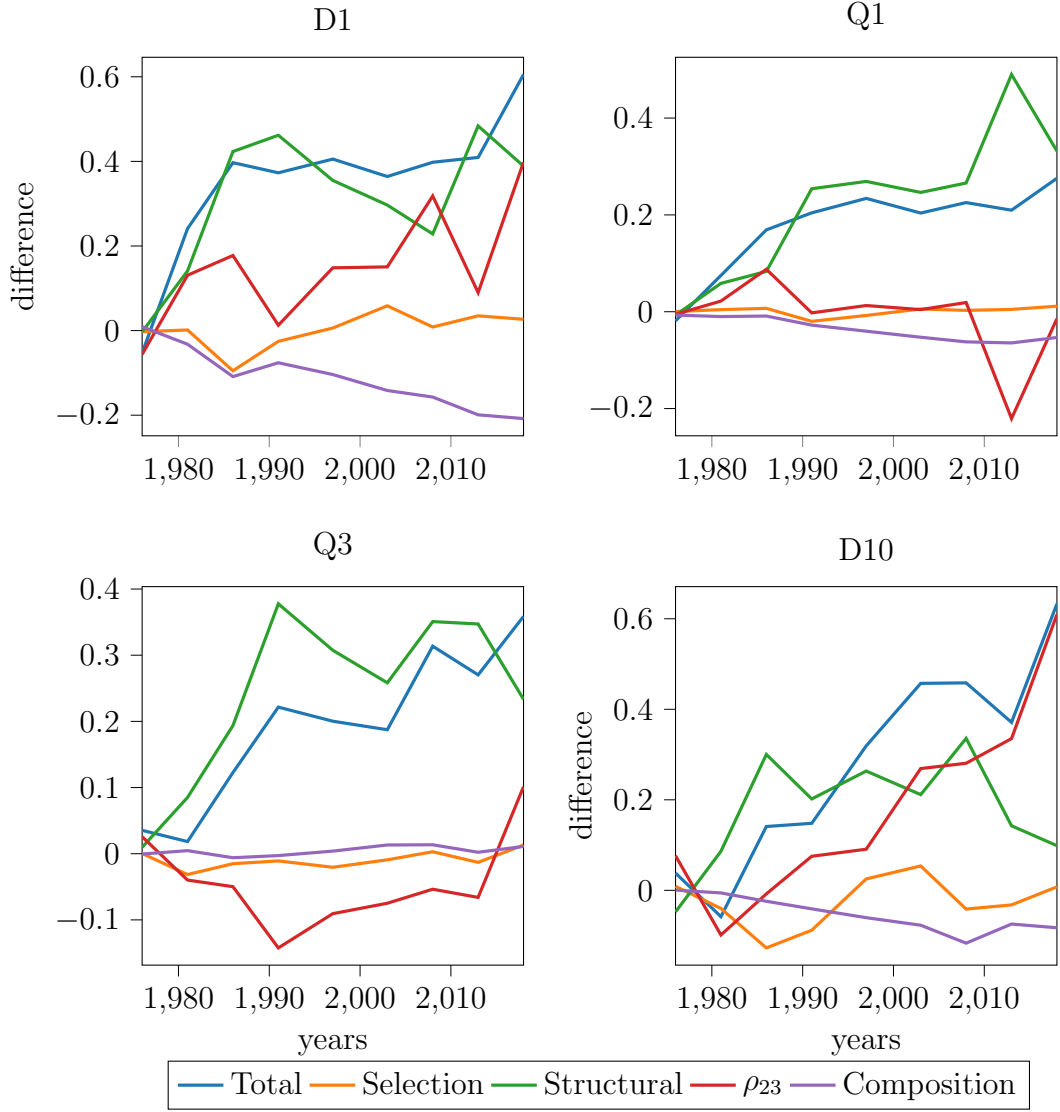

We noted that we estimated the model treating the male employment decision
as exogenous. Although we do not report the results here, we conducted the
corresponding counterfactuals to those for the full model. The general
flavor of the simulations were similar although there was stronger evidence
of selection affecting the observed sorting for the earlier period. However,
the result that setting $\rho _{Y_{w}^{\ast },Y_{h}^{\ast }}$ to zero
reduced the observed level of positive sorting was also clearly supported.

\subsection{Decomposing Changes in the Sorting Patterns}

We now focus on the changes in the sorting patterns in the earlier tables
and examine how the probability of both spouses being in a specific quantile
of their wage distribution changes over our sample period. We decompose the
total change into structural, selection and composition effects. Note that
unlike the previous section, we now also examine the impact of the changes
in the $\beta ^{\prime }$s over time. As we employ a multivariate normal
approximation for each point of the four dimensional partition of the data
we can use the model parameter estimates to decompose the observed changes
into the various components. While the composition effects are defined as
the changes in the conditioning variables, the allocation of the other
parameters are less straightforward. 
The changes in the $\beta ^{\prime }$s can be assigned the interpretation of
conventional structural effects. We assign $\rho _{D_{w}^{\ast },Y_{w}^{\ast
}}$, $\rho _{D_{h}^{\ast },Y_{h}^{\ast }},$ $\rho _{D_{w}^{\ast
},Y_{h}^{\ast }}$ and $\rho _{D_{h}^{\ast },Y_{w}^{\ast }}$ to the selection
effects. We isolate the impact of $\rho _{Y_{w}^{\ast },Y_{h}^{\ast }}$
parameter as it captures the correlation between the unobservables which
impact the spouses' wages. While this is a type of structural effect we
separate it from the structural effects operating through the $\beta
^{\prime }$s. Although we can evaluate the sorting behavior observed in 100
cells it appears that the interesting behavior occurs in the top and bottom
cells. Accordingly we conduct the decompositions for both members of the
couple being located in the bottom and top deciles and quartiles. These
plots are presented in Figure \ref{fig:decomposition_sorting}. The
decompositions represent how the change in the measures can be explained by
changes in the various components. If a specific component appears to be
close to zero then this indicates that its contribution has remained
unchanged and not that its contribution is zero.

We start with the probability of both spouses being located in the bottom
decile and quartile. Over the 47 years of our sample the former increases by
almost 60 percent and the latter by almost 40 percent. For d1 the large
positive change is driven by changes in the structural effect and $\rho
_{Y_{w}^{\ast },Y_{h}^{\ast }}.$ Changes in the selection effect are
unimportant and the changes in the composition effect are negative. For Q1
the results are essentially the same although the changes in the
contribution of $\rho _{Y_{w}^{\ast },Y_{h}^{\ast }}$ are generally
unimportant except for a decrease towards the end of the sample. At the top
of the distribution there is a similar finding. The probability of both in
the top decile and quartile increases by around 60\% and 35\% respectively.
The change in the probability of both spouses being in d10 is again driven
by changes in the structural effect and $\rho _{Y_{w}^{\ast },Y_{h}^{\ast }}$
although the change in the structural effect diminishes towards the end of
the sample. The change in the probability of both spouses being in Q3 is due
to changes in the structural effects. Changes in the composition and
selection effects appear unimportant for either d10 or Q3. The results on
the composition effects are consistent with the evidence as noted above in
Breen and Salazar (2011), Gihleb and Lang (2018), Eika, Mogstad, and Zafar
(2019) and Chiappori, Costa-Dias and Meghir (2020,2023). Education is a
component of the composition effect and there is no impact at either Q1 or
Q3. At D1 and D9 it is less clear as both probabilities decline.

The increasing presence of married females into the FTFY in the 1970's and
1980's had little impact on the sorting process in the marriage market.
Overall the selection effects are small. Second, composition effects are
generally unimportant and have reduced positive sorting. This reflects that
the increased level of acquired education has increased everyone's
probability of marrying a relatively highly educated spouse. This is
consistent with the findings of others noted above regarding the impact of
composition effects on positive sorting as measured by education. The
clearest evidence is associated with the structural effects. These are the
most important contributors to the total effect\ and their interpretation is
also quite clear. Sorting behavior has not greatly changed in terms of the
observed characteristics of the spouses. However, the higher skill premia,
and the heterogeneity of the skill premia, have resulted in the highly
(lowly) paid being even more highly (lowly) paid. This is consistent with
empirical evidence on decompositions of changes in wage inequality. The
evidence on the role of the $\rho _{Y_{w}^{\ast },Y_{h}^{\ast }}$ provides a
similar story. While this parameter captures the impact of unobservables it
also reflects how they are priced. Our evidence suggests that the increased
observed level of positive assortative sorting reflects the increases in
prices of observables and unobservables. Individuals appear to marry the
\textquotedblleft same" spouses but the value of their characteristics has
changed due to the increased skill premia. It is also possible that $\rho
_{Y_{w}^{\ast },Y_{h}^{\ast }}$ partially captures how an individual's wage
is influenced by the value of their spouse's characteristics. For example,
individuals with highly successful spouses may benefit from exposure to
their spouse's network and this may produce some relationship between their
wage and, for example, their spouse's education level. Some preliminary
investigations of the data suggest that this relationship exists, and that
it changes over time, but we do not pursue it here.

\subsection{Household Income Inequality}

We now explore the implication of the model's estimates for household income
equality. We assume that all individuals comprising the FTFY couples are
working the same number of hours and we employ the household wage, defined
as the sum of the spouses' hourly wages, as our measure of family income. We
examine how inequality has evolved by examining the D8/D2 ratio under
different counterfactuals. We focus on this ratio as the manner in which its
components have changed is clear from the tables above. The results are
shown in Table \ref{tab:inequality}. Note that we do not isolate the role of
structural effects operating through the $\beta ^{\prime }$s nor the
composition effects reflecting the changes in the $X^{\prime }$s over time.
As listed at the start of our Introduction, there is a large existing
literature which has clearly established the impact of these factors on wage
inequality. We repeat the same exercise as in the counterfactual sorting
exercises of Section \ref{ss:counterfactual_results} in which we
incrementally set the parameters capturing different features of the model
to zero. We also explore the impact of employing the population $X^{\prime }$%
s rather than those of the working sample.

\begin{table}[tbp]
\caption{Decomposition of changes in inequality.}
\label{tab:inequality}\centering
\begin{tabular}{lrr}
\hline\hline
& 1976-1980 & 2018-2022 \\ \hline\hline
Data & 1.7987 & 2.3198 \\ 
Estimated & 1.8002 & 2.2797 \\ 
Participation & 1.7997 & 2.2841 \\ 
Selection & 2.0661 & 2.2776 \\ 
$Z$ as in whole population & 2.066 & 2.2692 \\ 
$\rho_{Y^*_w, Y^*_h} = 0$ & 1.735 & 2.1192 \\ 
Random sorting & 1.6812 & 2.0505 \\ \hline\hline
\end{tabular}%
\end{table}

The first row of Table \ref{tab:inequality} presents the D8/D2 ratio for our
two extreme sample periods. There is a large increase of 29.6\%, from 1.79
to 2.32, noting that it is similar to the growth in comparable measures of
inequality presented in the discussion of the data. The second row presents
the comparable estimates from the predictions of our model and indicates
that the level of inequality observed in the data are maintained. To examine
the role of sorting we first estimate the corresponding measure in a setting
of unconditional random sorting in which husbands are assigned to wives in a
random manner. This is shown in the table's bottom row. The results
represent the D8/D2 ratio based on 100,000 draws. The estimates are 1.68 and
2.05 and reflect an increase of 22\% over the sample period. This large
impact reflects the large increase in wage inequality which occurs for both
husbands and wives. More interestingly, comparing the top and bottom rows of
this table indicates that sorting increases inequality by 7.0\% in the
earlier period and 13.1\% in the latter. This suggests that sorting has
substantially increased household inequality.

To investigate the role of the various model parameters, rows 3 to 6 repeat
the counterfactuals conducted above. As with the tables above, each of the
rows reflects the incremental change in the inequality measure as the
parameters associated with some feature of the model are set to zero. For
the earlier period selection has reduced inequality although we have
highlighted our concerns regarding this result as it is reliant on the form
of identification. The results regarding the use of the population $%
Z^{\prime }$s rather than the working sample $Z^{\prime }$s confirms our
earlier results that this is does not appear to affect the results regarding
sorting. Row 6 also highlights the recurring result that positive sorting is
operating through the correlation of the spouses' unobservables. For the
2018-2022 period we have similar findings although that related to selection
does not hold. However the importance of sorting on unobservables for
inequality is again highlighted. Setting $\rho _{Y_{w}^{\ast },Y_{h}^{\ast
}}=0$ decreases the D8/D2 ratio for the earlier period by 16\% from 2.06 to
1.73 and for the later period by 7\% from 2.26 to 2.12.

While there is an important role for $\rho _{Y_{w}^{\ast },Y_{h}^{\ast }}=0,$
the remaining large increase in inequality in the household wage appears due
to changes in composition and structural effects. However, the existing
evidence, here and in the existing literature, suggests the sorting on
observables has not changed for this time period and composition effects are
unimportant. This suggests that the increase in wage inequality across
households is due to structural effects capturing changes in the skill
premia.

\section{Conclusion}

\label{sec:discusion}

We examine the role of marital sorting on household income inequality in a
period of substantial and increasing wage inequality. We do so by examining
a sample of married couples from the CPS for the years 1976 to 2022 in which
each of the spouses works full time/full year. To investigate the
determinants of marital sorting and its impact on inequality we estimate a
model explaining the spouse's employment decisions and their location in the
gender specific wage distributions. This is done via a multivariate
distribution regression approach which incorporates selection. We provide a
number of important empirical findings.

First, we confirm that wage inequality has increased for both spouses in
dual FTFY couples. Changes in inequality for these groups do not precisely
correspond to those for all males and females but the trends are similar.

Second, we find clear evidence of positive sorting defined as higher
incidences of high (low) wage males marrying high (low) wage females than
expected under random sorting. However, we acknowledge that random sorting
is not a realistic alternative to positive sorting as couples are frequently
similar in terms of age, race and location of residence and each of these
are known to determine wages. However, the disproportionate fraction of
males in the extremes of the wage distribution marrying females in the
corresponding extremes of their distribution is supportive of positive
sorting. Moreover, this positive sorting has increased over our time period.
Combining this increasing sorting with increasing spouse specific inequality
has increased inequality across couples.

Third, a series of counterfactual experiments do not provide any evidence
that unobservables related to either of the spouses' work decisions have any
implications for the observed sorting behavior. However, there is compelling
evidence that the correlation between the unobservables influencing the
spouses wages is substantially contributing to the observed sorting
behavior. While some of this correlation may be due to factors, such as cost
of living premia, which are incurred by both members of the couple it is
also possible that this captures factors corresponding to unobserved ability.

Finally, we find that positive sorting associated with the increased share
of extreme cells has increased over the 47 years of our sample. A
decomposition of the changes in these cells shares over the period examined
reveals that their growth is almost entirely due to the prices of observable
and unobservable characteristics. This suggests that the increase in
observed sorting pattern in not due to different behavior nor the increase
of more females in the labor market. Rather as the skill premia has
increased this has increased the wages of both members of some couples while
couples without these skills appear to be both negatively affected.


\label{sec:conclusion}

\section*{\thinspace\ References}


\begin{description}
\item \textsc{Autor, D.H., L.F. Katz , and M. Kearney} (2008),
\textquotedblleft Trends in U.S.\ wage inequality: Revising the
revisionists\textquotedblright, \emph{Review of Economics and Statistics} 
\textbf{90} 300--23.

\item \textsc{Autor, D.H., A. Manning, and C.L. Smith} (2016),
\textquotedblleft The contribution of the minimum wage to US wage inequality
over three decades: A reassessment.\textquotedblright\ \emph{American
Economic Journal: Applied Economics} \textbf{8}, 58--99.

\item \textsc{Blau, F. and L.M. Kahn} (2009), ``Inequality and earnings
distribution'', In: W. Salverda, B. Nolan, and T.M. Smeeding, ``Oxford
Handbook on Economic Inequality'', Oxford University Press, 177--203.

\item \textsc{Blau, F.D., L.M. Kahn, N. Boboshko, and M.L. Comey} (2021),
\textquotedblleft The impact of selection into the labor force on the gender
wage gap\textquotedblright , working paper, Cornell University.

\item \textsc{Breen, R. and L. Salazar} (2011), \textquotedblleft
Educational assortative mating and earnings inequality in the United
States\textquotedblright, \emph{American Journal of Sociology} \textbf{117},
808-43.

\item \textsc{Cancian, M. and D. Reed} (1998a), \textquotedblleft Assessing
the effects of wives' earnings on family income inequality\textquotedblright
, \emph{The Review of Economics and Statistics} \textbf{80}, 73--79.

\item \textsc{Cancian, M. and D. Reed } (1998b), \textquotedblleft The
impact of wives' earnings on income inequality: issues and
estimates\textquotedblright , \emph{Demography} \textbf{36}, 173--84.

\item \textsc{Chiappori, P.A., B. Salani\'e and Y.Weiss}, (2017), ``Partner
choice, investment in children, and the marital college premium", \emph{%
American Economic Review} \textbf{107}, 2109--67.

\item \textsc{Chiappori, P.A., Costa-Dias, M. and C.Meghir}, (2019), ``The
marriage market, labor supply, and education choice", \emph{Journal of
Political Economy} \textbf{126}, s26-s72.

\item \textsc{Chiappori, P.A., Costa-Dias, M., Crossman, S. and C.Meghir}
(2020), ``Changes in assortative matching and inequality in income: Evidence
for the UK", \emph{Fiscal Studies} \textbf{41}, 39--63.

\item \textsc{Chiappori, P.A., Costa-Dias, M. and C.Meghir}, (2020).
``Changes in assortative matching: Theory and evidence for the US", NBER
Working Paper 26932.

\item \textsc{Chiappori, P.A., Costa-Dias, M. and C.Meghir,} (2023), ``The
measuring of assortativeness in marriage", working paper, Yale University.

\item \textsc{Chernozhukov, V., I. Fern\'{a}ndez-Val, and S. Luo}{\small \ }%
(2019), \textquotedblleft Distribution regression with sample selection,
with an application to wage decompositions in the UK", working paper, MIT,
Cambridge (MA).

\item \textsc{De Rock, B., M. Kolvaleva, and T. Potoms} (2023), ``A spouse
and a house are all we need? Housing demand, labor supply and divorce over
the lifecycle", working paper, Free University of Brussels.

\item \textsc{Eika, L., M. Mogstad, and B. Zafar} (2019), \textquotedblleft
Educational assortative mating and household income
Inequality\textquotedblright , \emph{Journal of Political Economy} \textbf{%
127}, 2795--2835.

\item \textsc{Fern\'{a}ndez-Val I., A. van Vuuren, and F. Vella} (2018),
\textquotedblleft Decomposing Real Wage Changes in the United
States\textquotedblright , \emph{IZA Discussion paper} \textbf{12044}, Bonn.

\item \textsc{Fern\'{a}ndez-Val I., A. van Vuuren, F. Vella and F. Peracchi}
(2023a), \textquotedblleft Hours worked and the U.S distribution of real
annual earnings 1976--2019\textquotedblright , \emph{Journal of Applied
Econometrics}, forthcoming.

\item \textsc{Fern\'{a}ndez-Val I., A. van Vuuren, F. Vella and F. Peracchi}
(2023b) \textquotedblleft \textquotedblleft Selection and the distribution
of female real hourly wages in the US\textquotedblright \textquotedblright , 
\emph{Quantitative Economics} \textbf{14}, 571-607.

\item \textsc{Fern\'{a}ndez-Val I., J. Meier, A. van Vuuren and F. Vella}
(2023) \textquotedblleft A bivariate distribution regression model for
intergenerational mobility\textquotedblright , working paper, Boston
University.

\item \textsc{Fern\'{a}ndez, R. and R. Rogerson} (2001), \textquotedblleft
Sorting and long-run inequality\textquotedblright , \emph{Quarterly Journal
of Economics} \textbf{116}, 1305--41.

\item \textsc{Flood, S., M. King, S. Ruggles, and J.R. Warren} (2015),
\textquotedblleft Integrated public use microdata series, Current Population
Survey: Version 4.0 [Machine-readable database]\textquotedblright, working
paper, University of Minnesota.

\item \textsc{Geweke, J.} (1991), \textquotedblleft Efficient simulation
from the multivariate normal and student-t distributions subject to linear
constraints\textquotedblright , \emph{Computing Science and Statistics} 
\textbf{23}, 571--78.

\item \textsc{Gihleb, R. and K. Lang} (2020), ``Educational homogamy and
assortative mating have not increased'', \emph{Research in Labor Economics} 
\textbf{48}, 1--26.

\item \textsc{Greenwood, J., N. Guner, G. Kocharkov and C. Santos} (2014),
\textquotedblleft Marry your like: assortative mating and income
inequality\textquotedblright , \emph{American Economic Review} \textbf{104},
348--54.

\item \textsc{Hajivassiliou, V.} (1990) \textquotedblleft Smooth simulation
estimation of panel data LDV models\textquotedblright , working paper, Yale
University.

\item \textsc{Heckman J.J.} (1974), \textquotedblleft Shadow prices, market
wages and labor supply\textquotedblright ,\ \emph{Econometrica} \textbf{42},
679--94.

\item \textsc{Heckman J.J.} (1979), \textquotedblleft Sample selection bias
as a specification error\textquotedblright , \emph{Econometrica} \textbf{47}%
, 153--61.

\item \textsc{Juhn C., K.M. Murphy, and B. Pierce} (1993), \textquotedblleft
Wage inequality and the rise in returns to skill\textquotedblright , \emph{%
Journal of Political Economy} \textbf{101}, 410--42.

\item \textsc{Katz L.~F., and K. Murphy} (1992), \textquotedblleft Changes
in relative wages, 1963--1987: Supply and demand factors\textquotedblright\ 
\emph{Quarterly Journal of Economics} \textbf{107}, 35--78.

\item \textsc{Keane, M.} (1994), \textquotedblleft A computationally
practical simulation estimator for panel data\textquotedblright , \emph{%
Econometrica} \textbf{62}, 95-116.

\item \textsc{Kremer, M.} (1997), \textquotedblleft How much does sorting
increase inequality?\textquotedblright , \emph{Quarterly Journal of Economics%
} \textbf{112}, 115--39.

\item \textsc{Lise, J. and S. Seitz} (2011), ``Consumption Inequality and
Intra-household Allocations", \emph{Review of Economic Studies }\textbf{78},
328--55.

\item \textsc{Lise, J. and K. Yamada (2018)},``Household sharing and
commitment: Evidence from panel data on individual expenditures and time
use", \emph{Review of Economic Studies }\textbf{86}, 2184-2219.

\item \textsc{Mulligan, C. and Y. Rubinstein} (2008), \textquotedblleft
Selection, investment, and women's relative wages over
time\textquotedblright, \emph{Quarterly Journal of Economics} \textbf{123},
1061--1110.

\item \textsc{Murphy, K.M. and F. Welch} (1992), \textquotedblleft The
structure of wages\textquotedblright , \emph{Quarterly Journal of Economics} 
\textbf{107}, 285--326.

\item \textsc{Sibuya, M.} (1959), \textquotedblleft Bivariate extreme
statistics, I,\textquotedblright \emph{Annals of the Institute of
Statistical Mathematics} \textbf{11}, 195--210.

\item \textsc{Welch F.} (2000), \textquotedblleft Growth in women's relative
wages and inequality among men: One phenomenon or two?\textquotedblright\ 
\emph{American Economic Review Papers \& Proceedings} \textbf{90}, 444--49.
\end{description}

\appendix

\section{Proof of Theorem \protect\ref{thm:id}}

\label{app:id} We start with a useful result about the properties of the
multivariate normal CDF.

\begin{lemma}
\label{lemma1} Let $\Phi_N(\cdot; \boldsymbol{\Sigma})$ denote the CDF of
the multivariate standard normal distribution of dimension $N \geq 3$ with
nonsingular correlation matrix $\boldsymbol{\Sigma}$ and $\rho_{12}$ denote
the $(1,2)$-element of $\boldsymbol{\Sigma}$. Then, 
\begin{equation*}
\frac{\partial \Phi_N(\boldsymbol{x}; \boldsymbol{\Sigma})}{\partial
\rho_{12}} = \frac{\mathrm{E}[\phi_2(z_{1:3}, z_{2:3}; \rho_{12:3}) \mid 
\boldsymbol{X}_3 \leq \boldsymbol{x}_3] \Phi_{N-2}(\boldsymbol{x}_3; 
\boldsymbol{\Sigma}_{33}) }{\sigma_{1:3} \sigma_{2:3}} > 0,
\end{equation*}
where $\boldsymbol{X} = (X_1,X_2,\boldsymbol{X}_3^{\prime })^{\prime }$ is a
multivariate standard normal random variable with correlation matrix 
\begin{equation*}
\boldsymbol{\Sigma} = \left(%
\begin{array}{ccc}
1 & \rho_{12} & \boldsymbol{\Sigma}_{13}^{\prime } \\ 
\rho_{12} & 1 & \boldsymbol{\Sigma}_{23}^{\prime } \\ 
\boldsymbol{\Sigma}_{13} & \boldsymbol{\Sigma}_{23} & \boldsymbol{\Sigma}%
_{33}%
\end{array}%
\right),
\end{equation*}
$\boldsymbol{x} = (x_1,x_2,\boldsymbol{x}_3^{\prime })^{\prime }$, $%
\phi_2(\cdot; \rho)$ is the PDF of the bivariate standard normal
distribution with correlation coefficient $\rho$, $z_{j:3} := (x_j - 
\boldsymbol{\Sigma}_{33}^{-1} \boldsymbol{\Sigma}_{j3}^{\prime }\boldsymbol{X%
}_3)/\sigma_{j:3}$ and $\sigma_{j:3}^2 := 1 - \boldsymbol{\Sigma}%
_{j3}^{\prime }\boldsymbol{\Sigma}_{33}^{-1} \boldsymbol{\Sigma}_{j3}$ for $%
j=1,2$, and 
\begin{equation*}
\rho_{12:3} := \frac{\sigma_{12:3}}{\sigma_{1:3} \sigma_{2:3}} := \frac{%
\rho_{12} - \boldsymbol{\Sigma}_{13}^{\prime }\boldsymbol{\Sigma}_{33}^{-1} 
\boldsymbol{\Sigma}_{23}}{\sigma_{1:3} \sigma_{2:3}}.
\end{equation*}
Hence, $\rho_{12} \mapsto \Phi_N(\boldsymbol{x}; \boldsymbol{\Sigma})$ is
strictly increasing on $(-1,1)$.
\end{lemma}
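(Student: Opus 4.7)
\textbf{Proof plan for Lemma \ref{lemma1}.} The strategy is to reduce the problem to the classical bivariate Plackett identity $\partial_\rho \Phi_2(x_1,x_2;\rho) = \phi_2(x_1,x_2;\rho)$ by conditioning on the last $N-2$ coordinates. The key structural observation is that when one writes the joint CDF in iterated form, the only object that depends on $\rho_{12}$ is the conditional correlation of $(X_1,X_2)$ given $\boldsymbol{X}_3$; the conditional means, the conditional variances $\sigma_{j:3}^2$, and the marginal distribution of $\boldsymbol{X}_3$ are all functions of $\boldsymbol{\Sigma}_{13}$, $\boldsymbol{\Sigma}_{23}$, $\boldsymbol{\Sigma}_{33}$, none of which involve $\rho_{12}$.

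The first step is to write
\begin{equation*}
\Phi_N(\boldsymbol{x};\boldsymbol{\Sigma}) = \int_{\{\boldsymbol{u}_3 \le \boldsymbol{x}_3\}} \Phi_2\!\left(\tfrac{x_1 - \boldsymbol{\Sigma}_{13}'\boldsymbol{\Sigma}_{33}^{-1}\boldsymbol{u}_3}{\sigma_{1:3}},\, \tfrac{x_2 - \boldsymbol{\Sigma}_{23}'\boldsymbol{\Sigma}_{33}^{-1}\boldsymbol{u}_3}{\sigma_{2:3}};\, \rho_{12:3}\right) \phi_{N-2}(\boldsymbol{u}_3;\boldsymbol{\Sigma}_{33})\, d\boldsymbol{u}_3,
\end{equation*}
using the standard formula that $(X_1,X_2)\mid\boldsymbol{X}_3=\boldsymbol{u}_3$ is bivariate normal with the indicated mean, variances $\sigma_{j:3}^2$, and correlation $\rho_{12:3}$. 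The second step is to differentiate under the integral in $\rho_{12}$. Since $\boldsymbol{\Sigma}_{13},\boldsymbol{\Sigma}_{23},\boldsymbol{\Sigma}_{33}$ are held fixed, the two arguments of $\Phi_2$ and the measure $\phi_{N-2}(\boldsymbol{u}_3;\boldsymbol{\Sigma}_{33})d\boldsymbol{u}_3$ are $\rho_{12}$-free, so only the correlation parameter of $\Phi_2$ moves, and it moves at the constant rate $\partial \rho_{12:3}/\partial \rho_{12} = 1/(\sigma_{1:3}\sigma_{2:3})$. Applying the bivariate Plackett identity to the integrand yields a $\phi_2(z_{1:3},z_{2:3};\rho_{12:3})$ factor.

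The third step is to rewrite the resulting integral as a conditional expectation. Since $\Phi_{N-2}(\boldsymbol{x}_3;\boldsymbol{\Sigma}_{33}) = \Pr(\boldsymbol{X}_3 \le \boldsymbol{x}_3)$ and $\phi_{N-2}(\cdot;\boldsymbol{\Sigma}_{33})/\Phi_{N-2}(\boldsymbol{x}_3;\boldsymbol{\Sigma}_{33})$ restricted to $\{\boldsymbol{u}_3\le\boldsymbol{x}_3\}$ is the density of $\boldsymbol{X}_3$ conditional on $\boldsymbol{X}_3 \le \boldsymbol{x}_3$, the integral becomes
\begin{equation*}
\frac{\partial \Phi_N(\boldsymbol{x};\boldsymbol{\Sigma})}{\partial \rho_{12}} = \frac{\Phi_{N-2}(\boldsymbol{x}_3;\boldsymbol{\Sigma}_{33})}{\sigma_{1:3}\sigma_{2:3}}\, \mathrm{E}\!\left[\phi_2(z_{1:3},z_{2:3};\rho_{12:3}) \,\Big|\, \boldsymbol{X}_3 \le \boldsymbol{x}_3\right],
\end{equation*}
which is exactly the claimed formula. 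Positivity is then immediate because $\phi_2 > 0$ everywhere on $\mathbb{R}^2\times(-1,1)$, $\Phi_{N-2}(\boldsymbol{x}_3;\boldsymbol{\Sigma}_{33}) \in (0,1)$, and $\sigma_{1:3},\sigma_{2:3} > 0$ by nonsingularity of $\boldsymbol{\Sigma}$ (the Schur complement is positive definite). Strict monotonicity of $\rho_{12}\mapsto\Phi_N(\boldsymbol{x};\boldsymbol{\Sigma})$ follows from integrating this strictly positive derivative.

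The only nontrivial points requiring care are (i) the classical bivariate Plackett identity itself, which follows from the heat-equation-type PDE $\partial_\rho \phi_2 = \partial^2_{x_1 x_2}\phi_2$ together with decay at infinity, and (ii) justifying differentiation under the integral, for which a dominated-convergence argument works since $\phi_2$ and its $\rho$-derivative are uniformly bounded on compact neighborhoods of any fixed $\rho_{12}\in(-1,1)$ (the conditional correlation $\rho_{12:3}$ stays strictly inside $(-1,1)$ on such a neighborhood, and the integrand is bounded by a $\boldsymbol{u}_3$-integrable function, namely a constant multiple of $\phi_{N-2}(\boldsymbol{u}_3;\boldsymbol{\Sigma}_{33})$). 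These two verifications will be the main technical content; the rest is bookkeeping with the Schur-complement formulas for conditional Gaussian moments.
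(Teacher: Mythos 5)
Your proposal is correct and follows essentially the same route as the paper's proof: conditioning on $\boldsymbol{X}_3$ to write $\Phi_N(\boldsymbol{x};\boldsymbol{\Sigma}) = \mathrm{E}[\Phi_2(z_{1:3},z_{2:3};\rho_{12:3}) \mid \boldsymbol{X}_3 \leq \boldsymbol{x}_3]\,\Phi_{N-2}(\boldsymbol{x}_3;\boldsymbol{\Sigma}_{33})$, observing that $\rho_{12}$ enters only through $\rho_{12:3}$ with $\partial\rho_{12:3}/\partial\rho_{12} = 1/(\sigma_{1:3}\sigma_{2:3})$, and applying the Sibuya--Plackett identity $\partial\Phi_2(\cdot;\rho)/\partial\rho = \phi_2(\cdot;\rho)$ together with the chain rule. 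Your explicit dominated-convergence justification for differentiating under the integral and the PDE sketch of the Plackett identity are details the paper leaves implicit (citing Sibuya, 1959), but they do not change the argument.
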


\begin{proof}
By the definition of conditional probability and iterated expectations 
\begin{multline}  \label{eq:mn}
\Phi_N(\boldsymbol{x}; \boldsymbol{\Sigma}) = \mathrm{E}[\Pr(X_1 \leq x_1,
X_2 \leq x_2 \mid \boldsymbol{X}_3) \mid \boldsymbol{X}_3 \leq \boldsymbol{x}%
_3] \Pr(\boldsymbol{X}_3 \leq \boldsymbol{x}_3) \\
= \mathrm{E}[\Phi_2(z_{1:3}, z_{2:3}; \rho_{12:3}) \mid \boldsymbol{X}_3
\leq \boldsymbol{x}_3] \Phi_{N-2}(\boldsymbol{x}_3; \boldsymbol{\Sigma}%
_{33}),
\end{multline}
where the second equality uses that 
\begin{equation*}
\left(%
\begin{array}{c}
X_1 \\ 
X_2%
\end{array}%
\right) \mid \boldsymbol{X}_3 = \boldsymbol{x}_3 \sim \mathcal{N}_2 \left(
\left(%
\begin{array}{c}
\boldsymbol{\Sigma}_{33}^{-1} \boldsymbol{\Sigma}_{13}^{\prime }\boldsymbol{x%
}_3 \\ 
\boldsymbol{\Sigma}_{33}^{-1} \boldsymbol{\Sigma}_{23}^{\prime }\boldsymbol{x%
}_3%
\end{array}%
\right), \left(%
\begin{array}{cc}
\sigma_{1:3}^2 & \sigma_{12:3} \\ 
\sigma_{12:3} & \sigma_{2:3}^2%
\end{array}%
\right)\right),
\end{equation*}
by the properties of the multivariate normal distribution. Note that the
variance-covariance matrix is non-singular because $\boldsymbol{\Sigma}$ is
non-singular.

The result then follows from noting that the RHS of \eqref{eq:mn} only
depends on $\rho_{12}$ through $\rho_{12:3}$ in $\Phi_2(z_{1:3}, z_{2:3};
\rho_{12:3})$, $\partial \Phi_2(\cdot; \rho)/\partial \rho =
\phi_2(\cdot;\rho)$ (e.g., Sibuya, 1959), $\partial \rho_{12:3}/\partial
\rho_{12} = 1/(\sigma_{1:3} \sigma_{2:3})$, and the chain rule. \hfill
\end{proof}

We now proceed with the proof of Theorem \ref{thm:id}. The argument is
conditional on all the covariates $Z$ in steps (1), (3) and (4), and the
common covariates $X$ in step (2) below. We consider the worst case where
the covariates $Z_{1}$ that satisfy the exclusion restriction only include a
binary variable. To lighten the notation, we drop the arguments of a
function when they are $0$. For example, $\mu _{D_{w}^{\ast }}(z):=\mu
_{D_{w}^{\ast }}(0,z)$. The proof shows identification of all the parameters
sequentially:

\begin{enumerate}
\item $\mu _{D_{w}^{\ast}}(z)$, $\mu _{D_{h}^{\ast}}(z)$ and $\rho
_{D_{w}^{\ast },D_{h}^{\ast }}(z) $: these parameters are identified from
the distribution of $(D_{w},D_{h})$ conditional on $Z=z$. In particular, 
\begin{equation*}
\mu _{D_{j}^{\ast}}(z) = \Phi^{-1} \left(\Pr(D_{j} = 1 \mid Z=z) \right),
\quad j = w,h,
\end{equation*}
and $\rho _{D_{w}^{\ast },D_{h}^{\ast }}(z) $ is identified as the solution
in $\rho$ to the equation 
\begin{equation*}
\Pr(D_{w} = 1, D_{h} = 1 \mid Z=z) = \Phi_2\left( \mu _{D_{w}^{\ast}}(z),\mu
_{D_{h}^{\ast}}(z); \rho \right).
\end{equation*}
The solution exists and is unique by the same argument as in Lemma 1 of CFL.

\item $\mu _{Y_{j}^{\ast}}(y_{j},x) $ and $\rho _{D_{j}^{\ast },Y_{j}^{\ast
}}(y_{j},x)$, $j = w,h$: these parameters are identified by Theorem 1 of CFL
using the relevance conditions and exclusion restrictions in Assumption \ref%
{ass:lgr}(2)--(3).

\item $\rho _{D_{w}^{\ast },Y_{h}^{\ast }}(y_{h},z)$: this parameter is
identified from the distribution of $(D_{w},D_{h},Y_{h})$ conditional on $%
Z=z $ as the solution in $\rho_{12}$ to the equation 
\begin{equation*}
\Pr(Y_{h} \leq y_h, D_{w} =1, D_{h} =1 \mid Z = z) = \Phi_3(\mu
_{Y_{h}^{\ast}}(y_{h},x),\mu _{D_{w}^{\ast}}(z), \mu _{D_{h}^{\ast}}(z); 
\boldsymbol{\Sigma}(y_h,z)),
\end{equation*}
where 
\begin{equation*}
\boldsymbol{\Sigma}(y_h,z) = \left(%
\begin{array}{ccc}
1 & \rho_{12} & \rho _{D_{h}^{\ast },Y_{h}^{\ast }}(y_{h},x) \\ 
\rho_{12} & 1 & \rho _{D_{w}^{\ast },D_{h}^{\ast }}(z) \\ 
\rho _{D_{h}^{\ast },Y_{h}^{\ast }}(y_{h},x) & \rho _{D_{w}^{\ast
},D_{h}^{\ast }}(z) & 1%
\end{array}%
\right).
\end{equation*}
The solution exists by Assumption \ref{ass:lgr}(1) and is unique by Lemma %
\ref{lemma1}. A similar argument shows that $\rho _{D_{h}^{\ast
},Y_{w}^{\ast }}(y_{w},x)$ is identified.

\item $\rho _{Y_{w}^{\ast },Y_{h}^{\ast }}(y_{w},y_{h},z)$: this parameter
is identified from the distribution of $(D_{w},D_{h},Y_{w},Y_{h})$
conditional on $Z=z$ as the solution in $\rho_{12} $ to the equation 
\begin{multline*}
\Pr(Y_{w} \leq y_w,Y_{h} \leq y_h, D_{w} =1, D_{h} =1 \mid Z = z) \\
= \Phi_4(\mu _{Y_{w}^{\ast}}(y_{w},x), \mu _{Y_{h}^{\ast}}(y_{h},x),\mu
_{D_{w}^{\ast}}(z), \mu _{D_{h}^{\ast}}(z); \boldsymbol{\Sigma}(y_w,y_h,z)),
\end{multline*}
where 
\begin{equation*}
\boldsymbol{\Sigma}(y_w,y_h,z) = \left(%
\begin{array}{cccc}
1 & \rho_{12} & \rho _{D_{w}^{\ast },Y_{w}^{\ast }}(y_{w},x) & \rho
_{D_{h}^{\ast },Y_{w}^{\ast }}(y_{w},x) \\ 
\rho_{12} & 1 & \rho _{D_{w}^{\ast },Y_{h}^{\ast }}(y_{h},x) & \rho
_{D_{h}^{\ast },Y_{h}^{\ast }}(y_{h},x) \\ 
\rho _{D_{w}^{\ast },Y_{w}^{\ast }}(y_{w},x) & \rho _{D_{w}^{\ast
},Y_{h}^{\ast }}(y_{h},x) & 1 & \rho _{D_{w}^{\ast },D_{h}^{\ast }}(z) \\ 
\rho _{D_{h}^{\ast },Y_{w}^{\ast }}(y_{w},x) & \rho _{D_{h}^{\ast
},Y_{h}^{\ast }}(y_{h},x) & \rho _{D_{w}^{\ast },D_{h}^{\ast }}(z) & 1%
\end{array}%
\right).
\end{equation*}
The solution exists by Assumption \ref{ass:lgr}(1) and is unique by Lemma %
\ref{lemma1}.
\end{enumerate}

\QED

\section{Tables}

\setcounter{table}{0} \renewcommand{\thetable}{B\arabic{table}}

\begin{table}[ht]
\begin{tabular}{l||l|rrrrrrrrrr}
\hline\hline
&  & \multicolumn{10}{c}{Husbands' quantiles} \\ \hline\hline
&  & d1 & d2 & d3 & d4 & d5 & d6 & d7 & d8 & d9 & d10 \\ \hline
\parbox[t]{2mm}{\multirow{10}{*}{\rotatebox[origin=c]{90}{Wives' quantiles}}}
& d1 & \textbf{39.2} & 38.2 & 39.4 & 38.9 & 37.5 & 39.0 & 40.6 & 40.0 & 41.3
& 42.8 \\ 
& d2 & 39.9 & \textbf{38.9} & 37.9 & 38.6 & 39.3 & 39.6 & 40.8 & 41.0 & 41.4
& 43.3 \\ 
& d3 & 38.8 & 38.2 & \textbf{38.2} & 37.9 & 38.9 & 39.1 & 40.1 & 41.2 & 41.2
& 42.1 \\ 
& d4 & 38.5 & 37.8 & 38.3 & \textbf{37.0} & 38.3 & 39.8 & 40.1 & 39.5 & 41.9
& 42.1 \\ 
& d5 & 38.2 & 36.9 & 36.9 & 37.4 & \textbf{36.8} & 38.5 & 39.0 & 38.7 & 40.4
& 42.7 \\ 
& d6 & 37.5 & 36.4 & 37.8 & 37.2 & 36.6 & \textbf{37.8} & 38.5 & 39.8 & 40.3
& 42.3 \\ 
& d7 & 36.9 & 39.0 & 36.1 & 37.2 & 37.1 & 38.4 & \textbf{40.2} & 40.2 & 40.8
& 42.4 \\ 
& d8 & 38.7 & 37.2 & 36.9 & 37.8 & 37.7 & 38.6 & 38.7 & \textbf{39.7} & 40.1
& 42.1 \\ 
& d9 & 37.0 & 38.0 & 37.7 & 37.5 & 37.5 & 38.2 & 39.0 & 39.7 & \textbf{40.2}
& 41.8 \\ 
& d10 & 41.9 & 39.2 & 40.2 & 39.3 & 39.5 & 39.4 & 40.1 & 38.6 & 40.9 & 
\textbf{41.2} \\ \hline\hline
\end{tabular}%
\caption{Average age of the wives for the years 1976-1980.}
\label{tab:age_wives}
\end{table}

\begin{table}[ht]
\begin{tabular}{l||l|rrrrrrrrrr}
\hline\hline
&  & \multicolumn{10}{c}{Husbands' quantiles} \\ \hline\hline
&  & d1 & d2 & d3 & d4 & d5 & d6 & d7 & d8 & d9 & d10 \\ \hline
\parbox[t]{2mm}{\multirow{10}{*}{\rotatebox[origin=c]{90}{Wives' quantiles}}}
& d1 & \textbf{42.3} & 41.0 & 42.0 & 41.7 & 40.5 & 41.7 & 43.4 & 42.7 & 43.9
& 46.0 \\ 
& d2 & 42.9 & \textbf{41.7} & 40.7 & 41.6 & 41.7 & 42.5 & 43.7 & 43.9 & 44.0
& 46.2 \\ 
& d3 & 41.3 & 40.8 & \textbf{40.6} & 40.1 & 41.4 & 41.7 & 42.5 & 44.4 & 43.7
& 44.9 \\ 
& d4 & 41.6 & 40.1 & 40.8 & \textbf{39.8} & 40.7 & 42.6 & 42.4 & 42.0 & 44.3
& 44.9 \\ 
& d5 & 40.7 & 39.6 & 39.3 & 39.7 & \textbf{39.4} & 40.9 & 41.9 & 41.4 & 43.2
& 45.6 \\ 
& d6 & 39.9 & 38.8 & 40.1 & 39.9 & 39.0 & \textbf{40.3} & 41.0 & 42.3 & 43.0
& 45.3 \\ 
& d7 & 39.0 & 41.2 & 38.4 & 40.0 & 39.8 & 40.9 & \textbf{42.9} & 42.7 & 43.5
& 45.3 \\ 
& d8 & 41.3 & 39.9 & 38.8 & 39.7 & 40.1 & 41.0 & 41.2 & \textbf{42.2} & 42.7
& 44.9 \\ 
& d9 & 39.4 & 40.1 & 40.2 & 40.2 & 40.1 & 40.1 & 41.3 & 42.1 & \textbf{42.9}
& 44.5 \\ 
& d10 & 44.3 & 41.3 & 42.3 & 40.9 & 41.4 & 41.3 & 42.5 & 40.6 & 43.4 & 
\textbf{44.4} \\ \hline\hline
\end{tabular}%
\caption{Average age of the husbands for the years 1976-1980.}
\label{tab:age_husbands}
\end{table}

\begin{table}[ht]
\begin{tabular}{l||l|rrrrrrrrrr}
\hline\hline
&  & \multicolumn{10}{c}{Husbands' quantiles} \\ \hline\hline
&  & d1 & d2 & d3 & d4 & d5 & d6 & d7 & d8 & d9 & d10 \\ \hline
\parbox[t]{2mm}{\multirow{10}{*}{\rotatebox[origin=c]{90}{Wives' quantiles}}}
& d1 & \textbf{41.0} & 40.6 & 41.1 & 41.6 & 41.0 & 44.5 & 42.5 & 43.9 & 44.3
& 47.1 \\ 
& d2 & 40.6 & \textbf{41.2} & 41.0 & 41.7 & 41.7 & 41.0 & 43.0 & 43.4 & 44.5
& 44.5 \\ 
& d3 & 40.5 & 40.2 & \textbf{40.0} & 40.1 & 41.0 & 41.6 & 42.7 & 44.4 & 44.1
& 46.6 \\ 
& d4 & 40.3 & 40.9 & 40.3 & \textbf{41.1} & 41.6 & 42.1 & 41.8 & 43.1 & 44.0
& 44.0 \\ 
& d5 & 41.2 & 39.6 & 40.6 & 40.3 & \textbf{41.3} & 41.2 & 42.8 & 42.0 & 44.4
& 45.5 \\ 
& d6 & 40.9 & 40.8 & 39.3 & 41.1 & 41.4 & \textbf{42.5} & 42.0 & 42.9 & 42.6
& 43.9 \\ 
& d7 & 41.3 & 41.0 & 40.7 & 41.7 & 42.3 & 42.5 & \textbf{42.2} & 41.8 & 41.5
& 44.5 \\ 
& d8 & 43.0 & 43.5 & 43.2 & 42.2 & 42.6 & 42.6 & 42.9 & \textbf{42.6} & 43.4
& 44.4 \\ 
& d9 & 43.9 & 44.7 & 43.4 & 42.9 & 43.3 & 43.5 & 42.7 & 42.6 & \textbf{43.6}
& 44.1 \\ 
& d10 & 43.1 & 43.7 & 41.3 & 43.2 & 44.3 & 43.5 & 42.0 & 43.6 & 43.5 & 
\textbf{43.8} \\ \hline\hline
\end{tabular}%
\caption{Average age of the wives for the years 2018-2022.}
\label{tab:age_wives1}
\end{table}

\begin{table}[ht]
\begin{tabular}{l||l|rrrrrrrrrr}
\hline\hline
&  & \multicolumn{10}{c}{Husbands' quantiles} \\ \hline\hline
&  & d1 & d2 & d3 & d4 & d5 & d6 & d7 & d8 & d9 & d10 \\ \hline
\parbox[t]{2mm}{\multirow{10}{*}{\rotatebox[origin=c]{90}{Wives' quantiles}}}
& d1 & \textbf{42.7} & 42.9 & 43.6 & 43.2 & 43.4 & 46.8 & 44.4 & 46.2 & 46.6
& 50.4 \\ 
& d2 & 42.1 & \textbf{43.3} & 42.7 & 43.6 & 43.8 & 43.1 & 44.8 & 45.8 & 47.1
& 47.7 \\ 
& d3 & 42.3 & 42.2 & \textbf{42.1} & 42.6 & 43.0 & 43.4 & 44.7 & 46.8 & 46.4
& 48.7 \\ 
& d4 & 42.4 & 42.9 & 42.4 & \textbf{42.9} & 43.4 & 43.6 & 43.7 & 45.0 & 46.0
& 46.0 \\ 
& d5 & 43.2 & 41.0 & 42.2 & 42.3 & \textbf{43.1} & 43.4 & 44.7 & 44.2 & 46.4
& 47.8 \\ 
& d6 & 42.3 & 42.6 & 41.0 & 42.2 & 43.3 & \textbf{44.5} & 43.8 & 44.7 & 44.4
& 45.5 \\ 
& d7 & 43.4 & 42.2 & 42.2 & 43.2 & 44.1 & 44.3 & \textbf{44.0} & 43.8 & 43.9
& 46.5 \\ 
& d8 & 44.9 & 44.8 & 44.5 & 43.7 & 44.1 & 43.9 & 44.6 & \textbf{44.4} & 45.1
& 46.6 \\ 
& d9 & 45.7 & 45.9 & 44.8 & 44.3 & 45.0 & 45.3 & 43.9 & 44.5 & \textbf{45.6}
& 46.0 \\ 
& d10 & 45.1 & 44.8 & 42.7 & 44.7 & 45.2 & 44.9 & 43.4 & 45.2 & 45.0 & 
\textbf{45.5} \\ \hline\hline
\end{tabular}%
\caption{Average age of the husbands for the years 2018-2022.}
\label{tab:age_husbands1}
\end{table}

\begin{table}[ht]
\resizebox{15cm}{!}{
\begin{tabular}{l||l|rrrrrrrrrr}
\hline\hline
&  & \multicolumn{10}{c}{Husbands' quantiles}   \\ \hline\hline
&  & d1 & d2 & d3 & d4 & d5 & d6 & d7 & d8 & d9 & d10   \\ \hline
\parbox[t]{2mm}{\multirow{10}{*}{\rotatebox[origin=c]{90}{Wives' quantiles}}}
& d1 & \textbf{0.064} & 0.053 & 0.049 & 0.045 & 0.05 & 0.056 & 0.058 & 0.048
& 0.063 & 0.149   \\ 
& d2 & 0.05 & \textbf{0.04} & 0.057 & 0.053 & 0.062 & 0.05 & 0.041 & 0.062 & 
0.095 & 0.079   \\ 
& d3 & 0.066 & 0.069 & \textbf{0.073} & 0.094 & 0.064 & 0.075 & 0.083 & 0.059
& 0.059 & 0.078   \\ 
& d4 & 0.082 & 0.109 & 0.09 & \textbf{0.115} & 0.106 & 0.098 & 0.106 & 0.073
& 0.077 & 0.129   \\ 
& d5 & 0.143 & 0.149 & 0.124 & 0.096 & \textbf{0.151} & 0.117 & 0.128 & 0.124
& 0.118 & 0.19   \\ 
& d6 & 0.194 & 0.193 & 0.177 & 0.173 & 0.149 & \textbf{0.169} & 0.17 & 0.155
& 0.125 & 0.186   \\ 
& d7 & 0.21 & 0.21 & 0.26 & 0.205 & 0.198 & 0.198 & \textbf{0.221} & 0.194 & 
0.224 & 0.243   \\ 
& d8 & 0.361 & 0.25 & 0.249 & 0.295 & 0.28 & 0.224 & 0.277 & \textbf{0.241}
& 0.267 & 0.332   \\ 
& d9 & 0.384 & 0.309 & 0.379 & 0.323 & 0.357 & 0.35 & 0.302 & 0.331 & 
\textbf{0.346} & 0.403   \\ 
& d10 & 0.402 & 0.378 & 0.389 & 0.407 & 0.413 & 0.435 & 0.44 & 0.484 & 0.507
& \textbf{0.552}   \\ \hline\hline
\end{tabular}}
\caption{Percentage of university education among wives for the years
1976-1980.}
\label{tab:educ_wives}
\end{table}

\begin{table}[ht]
\resizebox{15cm}{!}{
\begin{tabular}{l||l|rrrrrrrrrr}
\hline\hline
&  & \multicolumn{10}{c}{Husbands' quantiles}   \\ \hline\hline
&  & d1 & d2 & d3 & d4 & d5 & d6 & d7 & d8 & d9 & d10   \\ \hline
\parbox[t]{2mm}{\multirow{10}{*}{\rotatebox[origin=c]{90}{Wives' quantiles}}}
& d1 & \textbf{0.069} & 0.084 & 0.093 & 0.096 & 0.136 & 0.12 & 0.124 & 0.135
& 0.149 & 0.385   \\ 
& d2 & 0.095 & \textbf{0.073} & 0.102 & 0.103 & 0.118 & 0.139 & 0.104 & 0.113
& 0.202 & 0.36   \\ 
& d3 & 0.089 & 0.092 & \textbf{0.133} & 0.148 & 0.12 & 0.154 & 0.166 & 0.153
& 0.189 & 0.352   \\ 
& d4 & 0.12 & 0.115 & 0.108 & \textbf{0.15} & 0.19 & 0.173 & 0.152 & 0.206 & 
0.172 & 0.397   \\ 
& d5 & 0.173 & 0.157 & 0.188 & 0.179 & \textbf{0.205} & 0.199 & 0.159 & 0.244
& 0.266 & 0.435   \\ 
& d6 & 0.177 & 0.21 & 0.216 & 0.207 & 0.171 & \textbf{0.231} & 0.262 & 0.259
& 0.269 & 0.434   \\ 
& d7 & 0.205 & 0.191 & 0.255 & 0.207 & 0.201 & 0.242 & \textbf{0.215} & 0.261
& 0.32 & 0.446   \\ 
& d8 & 0.273 & 0.211 & 0.242 & 0.263 & 0.255 & 0.251 & 0.29 & \textbf{0.284}
& 0.316 & 0.526   \\ 
& d9 & 0.281 & 0.275 & 0.23 & 0.28 & 0.299 & 0.328 & 0.281 & 0.379 & \textbf{\ 0.4} & 0.547   \\ 
& d10 & 0.283 & 0.296 & 0.285 & 0.304 & 0.283 & 0.389 & 0.348 & 0.473 & 0.496
& \textbf{0.64}   \\ \hline\hline
\end{tabular}}
\caption{Percentage of university education among husbands for the years
1976-1980.}
\label{tab:educ_husbands}
\end{table}

\begin{table}[tbp]
\resizebox{15cm}{!}{
\begin{tabular}{l||l|rrrrrrrrrrr}
\hline\hline
&  & \multicolumn{10}{c}{Husbands' quantiles} \\ \hline\hline
&  & d1 & d2 & d3 & d4 & d5 & d6 & d7 & d8 & d9 & d10 \\ \hline
\parbox[t]{2mm}{\multirow{10}{*}{\rotatebox[origin=c]{90}{Wives' quantiles}}}
& d1 & \textbf{0.034} & 0.032 & 0.034 & 0.025 & 0.039 & 0.033 & 0.027 & 0.035
& 0.041 & 0.133 \\ 
& d2 & 0.038 & \textbf{0.026} & 0.032 & 0.031 & 0.04 & 0.028 & 0.026 & 0.027
& 0.071 & 0.067 \\ 
& d3 & 0.04 & 0.043 & \textbf{0.046} & 0.067 & 0.037 & 0.048 & 0.062 & 0.045
& 0.043 & 0.073 \\ 
& d4 & 0.046 & 0.066 & 0.052 & \textbf{0.061} & 0.081 & 0.062 & 0.069 & 0.063
& 0.061 & 0.116 \\ 
& d5 & 0.1 & 0.096 & 0.093 & 0.064 & \textbf{0.092} & 0.08 & 0.075 & 0.083 & 
0.085 & 0.15 \\ 
& d6 & 0.121 & 0.115 & 0.124 & 0.11 & 0.093 & \textbf{0.113} & 0.128 & 0.108
& 0.084 & 0.134 \\ 
& d7 & 0.117 & 0.107 & 0.168 & 0.121 & 0.124 & 0.13 & \textbf{0.135} & 0.13
& 0.172 & 0.191 \\ 
& d8 & 0.216 & 0.127 & 0.152 & 0.187 & 0.169 & 0.137 & 0.161 & \textbf{0.167}
& 0.186 & 0.258 \\ 
& d9 & 0.205 & 0.174 & 0.17 & 0.189 & 0.227 & 0.236 & 0.204 & 0.24 & \textbf{0.268} & 0.325 \\ 
& d10 & 0.197 & 0.207 & 0.229 & 0.222 & 0.209 & 0.286 & 0.275 & 0.365 & 0.408
& \textbf{0.467} \\ \hline\hline
\end{tabular}}
\caption{Percentage of households with both university education for the
years 1976-1980.}
\label{tab:educ_both}
\end{table}

\begin{table}[ht]
\resizebox{15cm}{!}{
\begin{tabular}{l||l|rrrrrrrrrr}
\hline\hline
&  & \multicolumn{10}{c}{Husbands' quantiles} \\ \hline\hline
&  & d1 & d2 & d3 & d4 & d5 & d6 & d7 & d8 & d9 & d10 \\ \hline
\parbox[t]{2mm}{\multirow{10}{*}{\rotatebox[origin=c]{90}{Wives' quantiles}}}
& d1 & \textbf{0.115} & 0.125 & 0.147 & 0.173 & 0.188 & 0.248 & 0.295 & 0.316
& 0.438 & 0.509 \\ 
& d2 & 0.146 & \textbf{0.177} & 0.192 & 0.211 & 0.245 & 0.269 & 0.227 & 0.315
& 0.36 & 0.465 \\ 
& d3 & 0.243 & 0.264 & \textbf{0.253} & 0.275 & 0.296 & 0.256 & 0.341 & 0.414
& 0.497 & 0.506 \\ 
& d4 & 0.323 & 0.282 & 0.331 & \textbf{0.338} & 0.396 & 0.327 & 0.424 & 0.414
& 0.484 & 0.642 \\ 
& d5 & 0.472 & 0.397 & 0.485 & 0.405 & \textbf{0.451} & 0.485 & 0.542 & 0.615
& 0.581 & 0.675 \\ 
& d6 & 0.471 & 0.54 & 0.579 & 0.544 & 0.582 & \textbf{0.555} & 0.581 & 0.684
& 0.708 & 0.729 \\ 
& d7 & 0.575 & 0.551 & 0.601 & 0.619 & 0.633 & 0.598 & \textbf{0.683} & 0.729
& 0.706 & 0.787 \\ 
& d8 & 0.662 & 0.584 & 0.668 & 0.671 & 0.702 & 0.698 & 0.756 & \textbf{0.781}
& 0.819 & 0.845 \\ 
& d9 & 0.724 & 0.794 & 0.722 & 0.811 & 0.752 & 0.767 & 0.842 & 0.836 & 
\textbf{0.863} & 0.901 \\ 
& d10 & 0.813 & 0.78 & 0.776 & 0.8 & 0.822 & 0.883 & 0.87 & 0.912 & 0.915 & 
\textbf{0.938} \\ \hline\hline
\end{tabular}}
\caption{Percentage of university education among wives for the years
2018-2022.}
\label{tab:educ_wives1}
\end{table}

\begin{table}[ht]
\resizebox{15cm}{!}{
\begin{tabular}{l||l|rrrrrrrrrr}
\hline\hline
&  & \multicolumn{10}{c}{Husbands' quantiles} \\ \hline\hline
&  & d1 & d2 & d3 & d4 & d5 & d6 & d7 & d8 & d9 & d10 \\ \hline
\parbox[t]{2mm}{\multirow{10}{*}{\rotatebox[origin=c]{90}{Wives' quantiles}}}
& d1 & \textbf{0.107} & 0.107 & 0.101 & 0.194 & 0.233 & 0.224 & 0.312 & 0.416
& 0.562 & 0.681 \\ 
& d2 & 0.111 & \textbf{0.158} & 0.164 & 0.135 & 0.205 & 0.25 & 0.27 & 0.38 & 
0.524 & 0.728 \\ 
& d3 & 0.123 & 0.175 & \textbf{0.182} & 0.209 & 0.262 & 0.263 & 0.348 & 0.49
& 0.513 & 0.654 \\ 
& d4 & 0.196 & 0.169 & 0.226 & \textbf{0.262} & 0.339 & 0.309 & 0.392 & 0.402
& 0.543 & 0.739 \\ 
& d5 & 0.224 & 0.256 & 0.275 & 0.275 & \textbf{0.361} & 0.414 & 0.452 & 0.577
& 0.604 & 0.801 \\ 
& d6 & 0.195 & 0.254 & 0.281 & 0.341 & 0.418 & \textbf{0.43} & 0.432 & 0.573
& 0.688 & 0.763 \\ 
& d7 & 0.269 & 0.298 & 0.297 & 0.344 & 0.436 & 0.47 & \textbf{0.553} & 0.592
& 0.662 & 0.825 \\ 
& d8 & 0.364 & 0.283 & 0.323 & 0.428 & 0.443 & 0.467 & 0.589 & \textbf{0.684}
& 0.739 & 0.832 \\ 
& d9 & 0.366 & 0.406 & 0.417 & 0.405 & 0.459 & 0.506 & 0.657 & 0.737 & 
\textbf{0.789} & 0.886 \\ 
& d10 & 0.449 & 0.39 & 0.414 & 0.482 & 0.562 & 0.599 & 0.658 & 0.749 & 0.856
& \textbf{0.926} \\ \hline\hline
\end{tabular}}
\caption{Percentage of university education among husbands for the years
2018-2022.}
\label{tab:educ_husbands1}
\end{table}

\begin{table}[tbp]
\resizebox{15cm}{!}{
\begin{tabular}{l||l|rrrrrrrrrr}
\hline\hline
&  & \multicolumn{10}{c}{Husbands' quantiles} \\ \hline\hline
&  & d1 & d2 & d3 & d4 & d5 & d6 & d7 & d8 & d9 & d10 \\ \hline
\parbox[t]{2mm}{\multirow{10}{*}{\rotatebox[origin=c]{90}{Wives' quantiles}}}
& d1 & \textbf{0.059} & 0.047 & 0.057 & 0.087 & 0.101 & 0.138 & 0.168 & 0.247
& 0.384 & 0.457 \\ 
& d2 & 0.062 & \textbf{0.084} & 0.082 & 0.081 & 0.134 & 0.149 & 0.161 & 0.218
& 0.305 & 0.447 \\ 
& d3 & 0.082 & 0.127 & \textbf{0.106} & 0.136 & 0.17 & 0.125 & 0.204 & 0.3 & 
0.38 & 0.429 \\ 
& d4 & 0.148 & 0.113 & 0.154 & \textbf{0.2} & 0.244 & 0.194 & 0.271 & 0.277
& 0.352 & 0.58 \\ 
& d5 & 0.175 & 0.181 & 0.225 & 0.207 & \textbf{0.288} & 0.307 & 0.347 & 0.45
& 0.464 & 0.602 \\ 
& d6 & 0.152 & 0.214 & 0.257 & 0.308 & 0.342 & \textbf{0.351} & 0.336 & 0.485
& 0.595 & 0.64 \\ 
& d7 & 0.231 & 0.24 & 0.265 & 0.302 & 0.379 & 0.411 & \textbf{0.467} & 0.528
& 0.558 & 0.685 \\ 
& d8 & 0.305 & 0.231 & 0.276 & 0.358 & 0.392 & 0.416 & 0.546 & \textbf{0.623}
& 0.68 & 0.758 \\ 
& d9 & 0.325 & 0.37 & 0.4 & 0.383 & 0.431 & 0.462 & 0.615 & 0.691 & \textbf{0.741} & 0.822 \\ 
& d10 & 0.43 & 0.366 & 0.405 & 0.453 & 0.533 & 0.577 & 0.63 & 0.714 & 0.824
& \textbf{0.897} \\ \hline\hline
\end{tabular}}
\caption{Percentage of university education among husbands and wives for the
years 2018-2022.}
\label{tab:educ_both1}
\end{table}

\begin{table}[ht]
\resizebox{15cm}{!}{
\begin{tabular}{l||l|rrrrrrrrrr}
\hline\hline
&  & \multicolumn{10}{c}{Husbands' quantiles} \\ \hline\hline
&  & d1 & d2 & d3 & d4 & d5 & d6 & d7 & d8 & d9 & d10 \\ \hline
\parbox[t]{2mm}{\multirow{10}{*}{\rotatebox[origin=c]{90}{Wives' quantiles}}}
& d1 & \textbf{0.222} & 0.144 & 0.152 & 0.111 & 0.093 & 0.12 & 0.124 & 0.087
& 0.068 & 0.056 \\ 
& d2 & 0.184 & \textbf{0.159} & 0.141 & 0.089 & 0.09 & 0.103 & 0.115 & 0.089
& 0.055 & 0.051 \\ 
& d3 & 0.209 & 0.126 & \textbf{0.092} & 0.092 & 0.109 & 0.134 & 0.086 & 0.087
& 0.102 & 0.041 \\ 
& d4 & 0.166 & 0.128 & 0.095 & \textbf{0.12} & 0.074 & 0.072 & 0.083 & 0.083
& 0.073 & 0.03 \\ 
& d5 & 0.14 & 0.12 & 0.112 & 0.107 & \textbf{0.081} & 0.086 & 0.1 & 0.06 & 
0.066 & 0.063 \\ 
& d6 & 0.169 & 0.15 & 0.14 & 0.107 & 0.098 & \textbf{0.089} & 0.089 & 0.099
& 0.109 & 0.052 \\ 
& d7 & 0.151 & 0.137 & 0.108 & 0.133 & 0.096 & 0.092 & \textbf{0.112} & 0.053
& 0.081 & 0.071 \\ 
& d8 & 0.119 & 0.113 & 0.097 & 0.137 & 0.136 & 0.102 & 0.132 & \textbf{0.095}
& 0.098 & 0.077 \\ 
& d9 & 0.151 & 0.169 & 0.132 & 0.126 & 0.12 & 0.132 & 0.143 & 0.107 & 
\textbf{0.102} & 0.078 \\ 
& d10 & 0.142 & 0.156 & 0.167 & 0.18 & 0.113 & 0.134 & 0.175 & 0.146 & 0.144
& \textbf{0.104} \\ \hline\hline
\end{tabular}}
\caption{Percentage of non-white wives 1976-1980.}
\label{tab:non_white_wives}
\end{table}

\begin{table}[ht]
\resizebox{15cm}{!}{
\begin{tabular}{l||l|rrrrrrrrrr}
\hline\hline
&  & \multicolumn{10}{c}{Husbands' quantiles} \\ \hline\hline
&  & d1 & d2 & d3 & d4 & d5 & d6 & d7 & d8 & d9 & d10 \\ \hline
\parbox[t]{2mm}{\multirow{10}{*}{\rotatebox[origin=c]{90}{Wives' quantiles}}}
& d1 & \textbf{0.223} & 0.139 & 0.155 & 0.118 & 0.097 & 0.123 & 0.124 & 0.087
& 0.063 & 0.046 \\ 
& d2 & 0.185 & \textbf{0.161} & 0.141 & 0.084 & 0.093 & 0.1 & 0.115 & 0.086
& 0.059 & 0.045 \\ 
& d3 & 0.197 & 0.122 & \textbf{0.092} & 0.084 & 0.112 & 0.13 & 0.079 & 0.083
& 0.098 & 0.032 \\ 
& d4 & 0.163 & 0.126 & 0.095 & \textbf{0.115} & 0.058 & 0.072 & 0.079 & 0.083
& 0.073 & 0.026 \\ 
& d5 & 0.15 & 0.131 & 0.105 & 0.099 & \textbf{0.081} & 0.086 & 0.1 & 0.067 & 
0.069 & 0.071 \\ 
& d6 & 0.153 & 0.159 & 0.143 & 0.113 & 0.104 & \textbf{0.089} & 0.089 & 0.087
& 0.116 & 0.059 \\ 
& d7 & 0.166 & 0.149 & 0.108 & 0.138 & 0.107 & 0.095 & \textbf{0.112} & 0.056
& 0.081 & 0.068 \\ 
& d8 & 0.134 & 0.113 & 0.108 & 0.134 & 0.136 & 0.097 & 0.14 & \textbf{0.098}
& 0.088 & 0.071 \\ 
& d9 & 0.144 & 0.198 & 0.14 & 0.122 & 0.13 & 0.134 & 0.141 & 0.1 & \textbf{0.098} & 0.073 \\ 
& d10 & 0.15 & 0.163 & 0.167 & 0.186 & 0.109 & 0.131 & 0.165 & 0.146 & 0.144
& \textbf{0.102} \\ \hline\hline
\end{tabular}}
\caption{Percentage of non-white husbands 1976-1980.}
\label{tab:non_white_husbands}
\end{table}

\begin{table}[tbp]
\resizebox{15cm}{!}{
\begin{tabular}{l||l|rrrrrrrrrr}
\hline\hline
&  & \multicolumn{10}{c}{Husbands' quantiles} \\ \hline\hline
&  & d1 & d2 & d3 & d4 & d5 & d6 & d7 & d8 & d9 & d10 \\ \hline
\parbox[t]{2mm}{\multirow{10}{*}{\rotatebox[origin=c]{90}{Wives' quantiles}}}
& d1 & \textbf{0.217} & 0.137 & 0.147 & 0.105 & 0.09 & 0.12 & 0.116 & 0.087
& 0.059 & 0.046 \\ 
& d2 & 0.175 & \textbf{0.155} & 0.141 & 0.084 & 0.084 & 0.093 & 0.112 & 0.082
& 0.047 & 0.045 \\ 
& d3 & 0.197 & 0.116 & \textbf{0.087} & 0.073 & 0.107 & 0.127 & 0.079 & 0.083
& 0.091 & 0.027 \\ 
& d4 & 0.158 & 0.124 & 0.088 & \textbf{0.115} & 0.055 & 0.068 & 0.079 & 0.08
& 0.069 & 0.026 \\ 
& d5 & 0.133 & 0.117 & 0.1 & 0.091 & \textbf{0.078} & 0.08 & 0.094 & 0.06 & 
0.066 & 0.063 \\ 
& d6 & 0.149 & 0.144 & 0.132 & 0.105 & 0.096 & \textbf{0.089} & 0.086 & 0.087
& 0.106 & 0.052 \\ 
& d7 & 0.141 & 0.137 & 0.103 & 0.127 & 0.093 & 0.09 & \textbf{0.106} & 0.051
& 0.076 & 0.065 \\ 
& d8 & 0.119 & 0.103 & 0.097 & 0.124 & 0.127 & 0.089 & 0.13 & \textbf{0.093}
& 0.083 & 0.069 \\ 
& d9 & 0.13 & 0.164 & 0.128 & 0.122 & 0.115 & 0.129 & 0.133 & 0.098 & 
\textbf{0.095} & 0.071 \\ 
& d10 & 0.142 & 0.148 & 0.167 & 0.165 & 0.104 & 0.128 & 0.163 & 0.14 & 0.139
& \textbf{0.091} \\ \hline\hline
\end{tabular}}
\caption{Percentage of non-white husbands and wives for the years 1976-1980.}
\label{tab:race_both}
\end{table}

\begin{table}[ht]
\resizebox{15cm}{!}{
\begin{tabular}{l||l|rrrrrrrrrr}
\hline\hline
&  & \multicolumn{10}{c}{Husbands' quantiles} \\ \hline\hline
&  & d1 & d2 & d3 & d4 & d5 & d6 & d7 & d8 & d9 & d10 \\ \hline
\parbox[t]{2mm}{\multirow{10}{*}{\rotatebox[origin=c]{90}{Wives' quantiles}}}
& d1 & \textbf{0.238} & 0.207 & 0.184 & 0.201 & 0.143 & 0.177 & 0.214 & 0.137
& 0.178 & 0.207 \\ 
& d2 & 0.239 & \textbf{0.208} & 0.196 & 0.149 & 0.168 & 0.172 & 0.147 & 0.208
& 0.128 & 0.149 \\ 
& d3 & 0.219 & 0.219 & \textbf{0.184} & 0.149 & 0.173 & 0.104 & 0.161 & 0.171
& 0.144 & 0.186 \\ 
& d4 & 0.23 & 0.204 & 0.198 & \textbf{0.178} & 0.147 & 0.145 & 0.115 & 0.116
& 0.132 & 0.188 \\ 
& d5 & 0.215 & 0.175 & 0.149 & 0.153 & \textbf{0.152} & 0.158 & 0.137 & 0.163
& 0.189 & 0.199 \\ 
& d6 & 0.205 & 0.188 & 0.158 & 0.155 & 0.179 & \textbf{0.154} & 0.197 & 0.164
& 0.15 & 0.169 \\ 
& d7 & 0.212 & 0.169 & 0.173 & 0.172 & 0.17 & 0.164 & \textbf{0.15} & 0.154
& 0.165 & 0.192 \\ 
& d8 & 0.192 & 0.116 & 0.171 & 0.165 & 0.142 & 0.168 & 0.171 & \textbf{0.195}
& 0.197 & 0.221 \\ 
& d9 & 0.228 & 0.152 & 0.2 & 0.198 & 0.142 & 0.173 & 0.183 & 0.251 & \textbf{0.242} & 0.268 \\ 
& d10 & 0.224 & 0.236 & 0.172 & 0.159 & 0.207 & 0.189 & 0.232 & 0.212 & 0.242
& \textbf{0.268} \\ \hline\hline
\end{tabular}}
\caption{Percentage of non-white wives 2018-2022.}
\label{tab:non_white_wives1}
\end{table}

\begin{table}[ht]
\resizebox{15cm}{!}{
\begin{tabular}{l||l|rrrrrrrrrr}
\hline\hline
&  & \multicolumn{10}{c}{Husbands' quantiles} \\ \hline\hline
&  & d1 & d2 & d3 & d4 & d5 & d6 & d7 & d8 & d9 & d10 \\ \hline
\parbox[t]{2mm}{\multirow{10}{*}{\rotatebox[origin=c]{90}{Wives' quantiles}}}
& d1 & \textbf{0.253} & 0.207 & 0.19 & 0.183 & 0.146 & 0.161 & 0.214 & 0.147
& 0.137 & 0.155 \\ 
& d2 & 0.244 & \textbf{0.212} & 0.192 & 0.166 & 0.168 & 0.153 & 0.156 & 0.181
& 0.128 & 0.14 \\ 
& d3 & 0.233 & 0.246 & \textbf{0.206} & 0.166 & 0.176 & 0.128 & 0.14 & 0.162
& 0.15 & 0.199 \\ 
& d4 & 0.244 & 0.207 & 0.196 & \textbf{0.183} & 0.155 & 0.176 & 0.111 & 0.137
& 0.114 & 0.136 \\ 
& d5 & 0.24 & 0.184 & 0.167 & 0.164 & \textbf{0.163} & 0.163 & 0.134 & 0.18
& 0.147 & 0.17 \\ 
& d6 & 0.21 & 0.237 & 0.151 & 0.148 & 0.167 & \textbf{0.142} & 0.2 & 0.146 & 
0.169 & 0.144 \\ 
& d7 & 0.219 & 0.16 & 0.183 & 0.166 & 0.165 & 0.169 & \textbf{0.14} & 0.146
& 0.162 & 0.157 \\ 
& d8 & 0.219 & 0.116 & 0.194 & 0.193 & 0.133 & 0.156 & 0.179 & \textbf{0.195}
& 0.19 & 0.197 \\ 
& d9 & 0.236 & 0.152 & 0.244 & 0.189 & 0.167 & 0.17 & 0.198 & 0.192 & 
\textbf{0.208} & 0.239 \\ 
& d10 & 0.262 & 0.236 & 0.155 & 0.176 & 0.195 & 0.171 & 0.218 & 0.2 & 0.212
& \textbf{0.249} \\ \hline\hline
\end{tabular}}
\caption{Percentage of non-white husbands 2018-2022.}
\label{tab:non_white_husbands1}
\end{table}

\begin{table}[tbp]
\resizebox{15cm}{!}{
\begin{tabular}{l||l|rrrrrrrrrr}
\hline\hline
&  & \multicolumn{10}{c}{Husbands' quantiles} \\ \hline\hline
&  & d1 & d2 & d3 & d4 & d5 & d6 & d7 & d8 & d9 & d10 \\ \hline
\parbox[t]{2mm}{\multirow{10}{*}{\rotatebox[origin=c]{90}{Wives' quantiles}}}
& d1 & \textbf{0.222} & 0.162 & 0.19 & 0.176 & 0.026 & 0.176 & 0.026 & 0.067
& 0.114 & 0.04 \\ 
& d2 & 0.165 & \textbf{0.138} & 0.063 & 0.054 & 0.137 & 0.043 & 0.067 & 0.075
& 0.026 & 0.0 \\ 
& d3 & 0.192 & 0.116 & \textbf{0.071} & 0.054 & 0.083 & 0.14 & 0.077 & 0.06
& 0.071 & 0.027 \\ 
& d4 & 0.183 & 0.132 & 0.17 & \textbf{0.113} & 0.021 & 0.056 & 0.053 & 0.106
& 0.0 & 0.0 \\ 
& d5 & 0.152 & 0.136 & 0.072 & 0.09 & \textbf{0.019} & 0.108 & 0.048 & 0.038
& 0.0 & 0.098 \\ 
& d6 & 0.143 & 0.136 & 0.145 & 0.115 & 0.188 & \textbf{0.081} & 0.106 & 0.086
& 0.081 & 0.039 \\ 
& d7 & 0.241 & 0.094 & 0.091 & 0.163 & 0.103 & 0.108 & \textbf{0.183} & 0.016
& 0.0 & 0.038 \\ 
& d8 & 0.071 & 0.115 & 0.146 & 0.186 & 0.143 & 0.111 & 0.148 & \textbf{0.094}
& 0.045 & 0.091 \\ 
& d9 & 0.111 & 0.138 & 0.139 & 0.087 & 0.115 & 0.13 & 0.116 & 0.076 & 
\textbf{0.031} & 0.06 \\ 
& d10 & 0.043 & 0.286 & 0.15 & 0.25 & 0.116 & 0.13 & 0.078 & 0.069 & 0.133 & 
\textbf{0.048} \\ \hline\hline
\end{tabular}}
\caption{Percentage of non-white husbands and wives for the years 2018-2022.}
\label{tab:race_both1}
\end{table}

\begin{table}[ht]
\resizebox{15cm}{!}{
\begin{tabular}{l||l|rrrrrrrrrr}
\hline\hline
&  & \multicolumn{10}{c}{Husbands' quantiles} \\ \hline\hline
&  & d1 & d2 & d3 & d4 & d5 & d6 & d7 & d8 & d9 & d10 \\ \hline
\parbox[t]{2mm}{\multirow{10}{*}{\rotatebox[origin=c]{90}{Wives' quantiles}}}
& d1 & \textbf{0.391} & 0.394 & 0.407 & 0.408 & 0.502 & 0.472 & 0.496 & 0.559
& 0.534 & 0.646 \\ 
& d2 & 0.354 & \textbf{0.413} & 0.432 & 0.453 & 0.474 & 0.548 & 0.535 & 0.49
& 0.542 & 0.612 \\ 
& d3 & 0.437 & 0.422 & \textbf{0.38} & 0.48 & 0.485 & 0.562 & 0.572 & 0.562
& 0.567 & 0.639 \\ 
& d4 & 0.406 & 0.446 & 0.435 & \textbf{0.457} & 0.532 & 0.56 & 0.558 & 0.638
& 0.598 & 0.612 \\ 
& d5 & 0.498 & 0.437 & 0.524 & 0.525 & \textbf{0.554} & 0.598 & 0.644 & 0.632
& 0.597 & 0.692 \\ 
& d6 & 0.492 & 0.435 & 0.542 & 0.573 & 0.593 & \textbf{0.621} & 0.631 & 0.618
& 0.669 & 0.693 \\ 
& d7 & 0.493 & 0.531 & 0.52 & 0.571 & 0.569 & 0.601 & \textbf{0.653} & 0.657
& 0.664 & 0.671 \\ 
& d8 & 0.479 & 0.564 & 0.574 & 0.576 & 0.62 & 0.642 & 0.692 & \textbf{0.721}
& 0.721 & 0.681 \\ 
& d9 & 0.514 & 0.618 & 0.634 & 0.63 & 0.609 & 0.64 & 0.693 & 0.712 & \textbf{0.732} & 0.737 \\ 
& d10 & 0.583 & 0.622 & 0.639 & 0.686 & 0.657 & 0.696 & 0.745 & 0.755 & 0.753
& \textbf{0.752} \\ \hline\hline
\end{tabular}}
\caption{Percentage in metro 1976-1980.}
\label{tab:metro_husbands}
\end{table}

\begin{table}[ht]
\resizebox{15cm}{!}{
\begin{tabular}{l||l|rrrrrrrrrr}
\hline\hline
&  & \multicolumn{10}{c}{Husbands' quantiles} \\ \hline\hline
&  & d1 & d2 & d3 & d4 & d5 & d6 & d7 & d8 & d9 & d10 \\ \hline
\parbox[t]{2mm}{\multirow{10}{*}{\rotatebox[origin=c]{90}{Wives' quantiles}}}
& d1 & \textbf{0.622} & 0.598 & 0.549 & 0.543 & 0.564 & 0.594 & 0.584 & 0.6
& 0.616 & 0.655 \\ 
& d2 & 0.578 & \textbf{0.537} & 0.559 & 0.584 & 0.54 & 0.582 & 0.578 & 0.606
& 0.634 & 0.746 \\ 
& d3 & 0.56 & 0.565 & \textbf{0.561} & 0.531 & 0.571 & 0.581 & 0.588 & 0.59
& 0.631 & 0.776 \\ 
& d4 & 0.584 & 0.579 & 0.551 & \textbf{0.56} & 0.612 & 0.565 & 0.592 & 0.602
& 0.671 & 0.778 \\ 
& d5 & 0.642 & 0.603 & 0.578 & 0.586 & \textbf{0.584} & 0.662 & 0.633 & 0.636
& 0.698 & 0.782 \\ 
& d6 & 0.624 & 0.571 & 0.599 & 0.619 & 0.641 & \textbf{0.624} & 0.707 & 0.687
& 0.731 & 0.826 \\ 
& d7 & 0.588 & 0.582 & 0.663 & 0.637 & 0.661 & 0.687 & \textbf{0.701} & 0.695
& 0.75 & 0.822 \\ 
& d8 & 0.662 & 0.601 & 0.631 & 0.654 & 0.654 & 0.657 & 0.73 & \textbf{0.779}
& 0.775 & 0.818 \\ 
& d9 & 0.569 & 0.594 & 0.589 & 0.707 & 0.679 & 0.739 & 0.741 & 0.808 & 
\textbf{0.844} & 0.875 \\ 
& d10 & 0.757 & 0.756 & 0.716 & 0.618 & 0.716 & 0.77 & 0.739 & 0.864 & 0.878
& \textbf{0.904} \\ \hline\hline
\end{tabular}}
\caption{Percentage in metro 2018-2022.}
\label{tab:metro_husbands1}
\end{table}

\begin{table}[tbp]
\resizebox{15cm}{!}{
\begin{tabular}{l||l|rrrrrrrrrr}
\hline\hline
&  & \multicolumn{10}{c}{Husbands' quantiles}   \\ \hline\hline
&  & d1 & d2 & d3 & d4 & d5 & d6 & d7 & d8 & d9 & d10   \\ \hline
\parbox[t]{2mm}{\multirow{10}{*}{\rotatebox[origin=c]{90}{Wives' quantiles}}}
& d1 & \textbf{2.127} & 1.302 & 1.095 & 0.978 & 0.858 & 0.886 & 0.789 & 0.686
& 0.683 & 0.56   \\ 
& d2 & 1.712 & \textbf{1.374} & 1.196 & 1.115 & 0.983 & 0.868 & 0.828 & 0.779
& 0.761 & 0.549   \\ 
& d3 & 1.202 & 1.386 & \textbf{1.033} & 1.083 & 1.11 & 0.851 & 0.851 & 0.844
& 0.773 & 0.594   \\ 
& d4 & 1.188 & 1.454 & 1.137 & \textbf{1.144} & 0.964 & 0.948 & 0.928 & 0.915
& 0.784 & 0.676   \\ 
& d5 & 0.927 & 1.182 & 1.179 & 1.173 & \textbf{1.153} & 0.996 & 0.974 & 0.988
& 0.949 & 0.767   \\ 
& d6 & 0.714 & 1.0 & 1.003 & 1.094 & 1.016 & \textbf{1.041} & 0.972 & 0.993
& 0.936 & 0.814   \\ 
& d7 & 0.627 & 0.834 & 1.055 & 1.082 & 1.116 & 1.101 & \textbf{1.063} & 1.139
& 1.153 & 0.99   \\ 
& d8 & 0.577 & 0.622 & 0.784 & 1.146 & 1.065 & 1.082 & 1.106 & \textbf{1.126}
& 1.2 & 1.153   \\ 
& d9 & 0.484 & 0.657 & 0.688 & 0.773 & 1.172 & 1.208 & 1.296 & 1.301 & 
\textbf{1.228} & 1.402   \\ 
& d10 & 0.386 & 0.412 & 0.419 & 0.583 & 0.716 & 0.946 & 1.205 & 1.31 & 1.563
& \textbf{2.364}   \\ \hline\hline
\end{tabular}}
\caption{Estimated sorting measures for the years 1976-1980 at different
quantiles}
\label{tab:bivariate_1}
\end{table}

\begin{table}[!tbp]
\resizebox{15cm}{!}{
\begin{tabular}{l||l|rrrrrrrrrrr}
\hline\hline
&  & \multicolumn{10}{c}{Husbands' quantiles}   \\ \hline\hline
&  & d1 & d2 & d3 & d4 & d5 & d6 & d7 & d8 & d9 & d10   \\ \hline
\parbox[t]{2mm}{\multirow{10}{*}{\rotatebox[origin=c]{90}{Wives' quantiles}}}
& d1 & \textbf{2.799} & 1.546 & 1.067 & 0.901 & 0.895 & 0.799 & 0.548 & 0.604
& 0.472 & 0.368   \\ 
& d2 & 1.789 & \textbf{1.745} & 1.353 & 1.109 & 0.94 & 0.833 & 0.656 & 0.681
& 0.541 & 0.352   \\ 
& d3 & 1.21 & 1.683 & \textbf{1.429} & 1.195 & 0.977 & 0.877 & 0.875 & 0.636
& 0.626 & 0.493   \\ 
& d4 & 0.861 & 1.204 & 1.594 & \textbf{1.244} & 1.127 & 1.003 & 0.935 & 0.776
& 0.708 & 0.548   \\ 
& d5 & 0.749 & 0.964 & 1.244 & 1.346 & \textbf{1.087} & 1.075 & 1.049 & 1.022
& 0.85 & 0.614   \\ 
& d6 & 0.662 & 0.693 & 0.883 & 1.336 & 1.408 & \textbf{1.234} & 1.139 & 0.999
& 0.93 & 0.715   \\ 
& d7 & 0.476 & 0.72 & 0.982 & 1.056 & 1.141 & 1.369 & \textbf{1.142} & 1.158
& 1.119 & 0.836   \\ 
& d8 & 0.48 & 0.55 & 0.695 & 0.745 & 0.961 & 1.264 & 1.512 & \textbf{1.22} & 
1.499 & 1.072   \\ 
& d9 & 0.421 & 0.551 & 0.566 & 0.687 & 0.755 & 0.971 & 1.229 & 1.435 & 
\textbf{1.801} & 1.583   \\ 
& d10 & 0.366 & 0.397 & 0.388 & 0.553 & 0.544 & 0.673 & 0.882 & 1.27 & 1.829
& \textbf{3.098}   \\ \hline\hline
\end{tabular}}
\caption{Estimated sorting measures for the years 2018-2022 at different
quantiles}
\label{tab:bivariate_2}
\end{table}

\begin{table}[ht]
\resizebox{15cm}{!}{
\begin{tabular}{l||l|rrrrrrrrrrr}
\hline\hline
&  & \multicolumn{10}{c}{Husbands' quantiles}   \\ \hline\hline
&  & d1 & d2 & d3 & d4 & d5 & d6 & d7 & d8 & d9 & d10   \\ \hline
\parbox[t]{2mm}{\multirow{10}{*}{\rotatebox[origin=c]{90}{Wives' quantiles}}}
& d1 & \textbf{2.135} & 1.366 & 1.133 & 1.009 & 0.824 & 0.953 & 0.755 & 0.698
& 0.744 & 0.382   \\ 
& d2 & 1.626 & \textbf{1.421} & 1.201 & 1.061 & 0.94 & 0.85 & 0.748 & 0.742
& 0.744 & 0.667   \\ 
& d3 & 1.365 & 1.173 & \textbf{1.142} & 1.031 & 1.159 & 0.879 & 0.834 & 0.833
& 0.663 & 0.921   \\ 
& d4 & 1.155 & 1.333 & 1.192 & \textbf{1.124} & 0.921 & 0.971 & 0.886 & 0.893
& 0.749 & 0.776   \\ 
& d5 & 0.957 & 1.176 & 1.245 & 1.163 & \textbf{1.143} & 0.989 & 0.988 & 0.959
& 0.923 & 0.457   \\ 
& d6 & 0.752 & 1.039 & 1.088 & 1.147 & 1.088 & \textbf{1.132} & 1.001 & 1.024
& 0.933 & 0.798   \\ 
& d7 & 0.614 & 0.771 & 1.107 & 1.028 & 1.087 & 1.132 & \textbf{1.036} & 1.112
& 1.112 & 1.002   \\ 
& d8 & 0.597 & 0.596 & 0.826 & 1.152 & 1.082 & 1.15 & 1.125 & \textbf{1.128}
& 1.177 & 1.166   \\ 
& d9 & 0.429 & 0.603 & 0.684 & 0.738 & 1.148 & 1.194 & 1.253 & 1.231 & 
\textbf{1.177} & 1.543   \\ 
& d10 & 0.373 & 0.527 & 0.379 & 0.544 & 0.609 & 0.747 & 1.378 & 1.379 & 1.779
& \textbf{2.285}   \\ \hline\hline
\end{tabular}}
\caption{Results of estimated sorting measures for the years 1976-1980 at
different quantiles}
\label{tab:results_original_1976}
\end{table}

\begin{table}[!ht]
\resizebox{15cm}{!}{
\begin{tabular}{l||l|rrrrrrrrrr}
\hline\hline
&  & \multicolumn{10}{c}{Husbands' quantiles}   \\ \hline\hline
&  & d1 & d2 & d3 & d4 & d5 & d6 & d7 & d8 & d9 & d10   \\ \hline
\parbox[t]{2mm}{\multirow{10}{*}{\rotatebox[origin=c]{90}{Wives' quantiles}}}
& d1 & \textbf{2.891} & 1.574 & 1.043 & 0.893 & 0.903 & 0.812 & 0.572 & 0.628
& 0.523 & 0.161   \\ 
& d2 & 1.625 & \textbf{1.709} & 1.261 & 1.03 & 0.926 & 0.802 & 0.626 & 0.652
& 0.516 & 0.852   \\ 
& d3 & 1.164 & 1.726 & \textbf{1.381} & 1.164 & 0.997 & 0.882 & 0.839 & 0.596
& 0.586 & 0.666   \\ 
& d4 & 0.86 & 1.299 & 1.613 & \textbf{1.238} & 1.134 & 1.028 & 0.957 & 0.811
& 0.701 & 0.358   \\ 
& d5 & 0.774 & 1.031 & 1.25 & 1.377 & \textbf{1.141} & 1.106 & 1.056 & 1.063
& 0.809 & 0.394   \\ 
& d6 & 0.662 & 0.719 & 0.913 & 1.298 & 1.427 & \textbf{1.23} & 1.152 & 1.038
& 0.912 & 0.65   \\ 
& d7 & 0.546 & 0.789 & 0.968 & 1.062 & 1.243 & 1.45 & \textbf{1.279} & 1.247
& 1.192 & 0.226   \\ 
& d8 & 0.417 & 0.458 & 0.608 & 0.627 & 0.829 & 1.095 & 1.421 & \textbf{1.189}
& 1.253 & 2.103   \\ 
& d9 & 0.453 & 0.59 & 0.598 & 0.741 & 0.763 & 1.011 & 1.279 & 1.557 & 
\textbf{1.799} & 1.208   \\ 
& d10 & 0.605 & 0.109 & 0.363 & 0.575 & 0.771 & 0.451 & 0.814 & 1.221 & 1.709
& \textbf{3.382}   \\ \hline\hline
\end{tabular}}
\caption{Results of estimated sorting measures for the years 2018-2022 at
different quantiles}
\label{tab:results_original_2018}
\end{table}

\end{document}